\def\BibTeX{{\rm B\kern-.05em{\sc i\kern-.025em b}\kern-.08emT\kern-.1667em\lower.7ex\hbox{E}\kern-.125emX}}
\newcommand{\RemoveAlgoNumber}{\renewcommand{\fnum@algocf}{\AlCapSty{\AlCapFnt\algorithmcfname}}}
\newcommand*{\AlgRegularW}{\ensuremath{\mathtt{FAIncW}}\xspace}
\newcommand*{\AlgRegularWr}{\ensuremath{\mathtt{FAIncWr}}\xspace}
\newcommand*{\AlgRegularA}{\ensuremath{\mathtt{FAIncS}}\xspace}
\newcommand*{\AlgGrammarW}{\ensuremath{\mathtt{CFGIncW}}\xspace}
\newcommand*{\AlgGrammarA}{\ensuremath{\mathtt{CFGIncS}}\xspace}
\newcommand*{\AlgRegularGfp}{\ensuremath{\mathtt{FAIncGfp}}\xspace}
\newtheorem{theorem}{Theorem}[section]
\newtheorem{remark}[theorem]{Remark}
\newcommand{\commentall}[1]{}
\newcommand{\vect}[1]{\vv{\bm{#1}}}
\newcommand{\vectarg}[2]{\vv{\bm{#1}}^{{\!{\scriptstyle {#2}}}}}
\newcommand{\minor}[1]{\lfloor {#1}\rfloor}
\newcommand{\tuple}[1]{\langle {#1}\rangle}
\newcommand{\lang}[1]{{\mathcal{L}(#1)}} %
\DeclareMathOperator{\Pre}{{Pre}}
\DeclareMathOperator{\ctx}{{ctx}}
\DeclareMathOperator{\Post}{{Post}}
\DeclareMathOperator{\CPre}{{CPre}}
\DeclareMathOperator{\cpre}{{cpre}}
\DeclareMathOperator{\absincl}{{Incl^{\sharp}}}
\DeclareMathOperator{\Conv}{{Conv}}
\DeclareMathOperator{\Eq}{{Eq}}
\DeclareMathOperator{\Incl}{{Incl}}
\DeclareMathOperator{\base}{{\textit{b}}}
\DeclareMathOperator{\minim}{{min}}
\DeclareMathOperator{\AC}{{AC}}
\DeclareMathOperator{\Kleene}{{\normalfont \textsc{Kleene}}}
\DeclareMathOperator{\wsqsubseteq}{{\widetilde{\sqsubseteq}}}
\DeclareMathOperator{\wsqcup}{{\widetilde{\sqcup}}}
\DeclareMathOperator{\wsqcap}{{\widetilde{\sqcap}}}
\DeclareMathOperator{\wbigsqcup}{{\widetilde{\bigsqcup}}}
\newcommand{\nullable}[3]{{\ensuremath{\psi^{#2}_{#3}(#1)}}}
\newcommand{\fp}{\mathcal{FP}}
\newcommand{\bN}{\mathbb{N}}
\newcommand{\cA}{\mathcal{A}}
\newcommand{\cB}{\mathcal{B}}
\newcommand{\cG}{\mathcal{G}}
\newcommand{\cO}{\mathcal{O}}
\newcommand{\cV}{\mathcal{V}}
\newcommand{\cL}{\mathcal{L}}
\newcommand{\id}{\mathrm{id}}
\newcommand{\ud}{\triangleq}
\newcommand{\Lra}{\Leftrightarrow}
\newcommand{\Ra}{\Rightarrow}
\newcommand{\ra}{\rightarrow}
\DeclareMathOperator{\uco}{uco}
\DeclareMathOperator{\pre}{pre}
\DeclareMathOperator{\post}{post}
\DeclareMathOperator{\lfp}{lfp}
\DeclareMathOperator{\gfp}{gfp}
\DeclareMathOperator{\Eqn}{{Eqn}}
\DeclareMathOperator{\Eqnr}{{Eqn}^{r}}
\DeclareMathOperator{\Fn}{{Fn}}
\newcommand{\eox}{\hfill{\ensuremath{\Diamond}}}
\newcommand{\inqm}{\in^{\scaleto{?}{3.5pt}}}
\newcommand{\eqqm}{=^{\scaleto{?}{3.5pt}}}
\newcommand{\subseteqm}{\subseteq^{\scaleto{?}{3.5pt}}}
\DeclareFontFamily{U}{mathx}{\hyphenchar\font45}
\DeclareFontShape{U}{mathx}{m}{n}{<-> mathx10}{}
\newcommand{\cd}[1]{\text{\lstinline\(#1\)}}
\newcommand{\udr}{\stackrel{{\mbox{\tiny\ensuremath{\triangle}}}}{\Longleftrightarrow}}
\newcommand{\udrshort}{\stackrel{{\mbox{\tiny\ensuremath{\triangle}}}}{\Leftrightarrow}}
\def\ok#1{\mbox{\raisebox{0ex}[1ex][1ex]{$#1$}}}
\newcommand{\goes}[1]{\stackrel{#1}{\leadsto}}
\newcommand{\ggoes}[1]{\stackrel{#1}{\rightarrow}}
\newcommand{\specialcell}[1]{\ifmeasuring@#1\else\omit$\displaystyle#1$\ignorespaces\fi}
\begin{document}
\sloppy
\title{Complete Abstractions for Checking Language Inclusion}

\author{Pierre Ganty}
\email{pierre.ganty@imdea.org} 
\orcid{0000-0002-3625-6003}
\affiliation{%
  \institution{IMDEA Software Institute}
   \city{Madrid}
   \country{Spain}
}

\author{Francesco Ranzato}
\email{francesco.ranzato@unipd.it} 
\orcid{0000-0003-0159-0068}
\affiliation{%
  \institution{Dipartimento di Matematica, University of Padova} 
  \city{Padova} 
  \country{Italy}
}
             
\author{Pedro Valero}
\email{pedro.valero@imdea.org} 
\orcid{0000-0001-7531-6374}
\affiliation{%
  \institution{IMDEA Software Institute}
  \city{Madrid}
  \country{Spain}
}

\begin{abstract}
We study the language inclusion problem \(L_1 \subseteq L_2\) where \(L_1\) is regular or context-free.
Our approach relies on abstract interpretation and checks whether an overapproximating abstraction of \(L_1\), obtained by overapproximating the Kleene iterates of its least fixpoint characterization, is included in \(L_2\). 
We show that a language inclusion problem is decidable whenever this overapproximating abstraction satisfies a completeness condition (i.e., its loss of precision causes no false alarm) and prevents infinite ascending chains (i.e., it guarantees 
termination of least fixpoint computations).
This overapproximating abstraction of languages can be defined using quasiorder relations on words, where the abstraction gives the language of all the words ``greater than or equal to'' a given input word for that quasiorder. 
We put forward a range of such quasiorders that allow us to systematically design decision procedures for different language inclusion problems such as regular languages into regular languages or into trace sets of one-counter nets, and  context-free languages into regular languages.
In the case of inclusion between regular languages, some of the induced inclusion checking procedures correspond to well-known state-of-the-art algorithms like the so-called antichain algorithms.
Finally, we provide an equivalent language inclusion checking algorithm based on 
a greatest fixpoint computation 
that relies on quotients of languages and, to the best of our knowledge, was not previously known.  
\end{abstract}
\begin{CCSXML}
<ccs2012>
<concept>
<concept_id>10003752.10003766.10003771</concept_id>
<concept_desc>Theory of computation~Grammars and context-free languages</concept_desc>
<concept_significance>500</concept_significance>
</concept>
<concept>
<concept_id>10003752.10003766.10003776</concept_id>
<concept_desc>Theory of computation~Regular languages</concept_desc>
<concept_significance>500</concept_significance>
</concept>
<concept>
<concept_id>10003752.10010124.10010138</concept_id>
<concept_desc>Theory of computation~Program reasoning</concept_desc>
<concept_significance>500</concept_significance>
</concept>
<concept>
<concept_id>10003752.10010124.10010138.10011119</concept_id>
<concept_desc>Theory of computation~Abstraction</concept_desc>
<concept_significance>500</concept_significance>
</concept>
<concept>
<concept_id>10011007.10011006.10011039</concept_id>
<concept_desc>Software and its engineering~Formal language definitions</concept_desc>
<concept_significance>500</concept_significance>
</concept>
</ccs2012>
\end{CCSXML}

\ccsdesc[500]{Theory of computation~Regular languages}
\ccsdesc[500]{Theory of computation~Grammars and context-free languages}
\ccsdesc[500]{Theory of computation~Abstraction}
\ccsdesc[500]{Theory of computation~Program reasoning}
\ccsdesc[500]{Software and its engineering~Formal language definitions}

\keywords
{Abstract interpretation, completeness, language inclusion, regular language, context-free language, one-counter net, automaton, grammar.}

\maketitle     

\section{Introduction}%
\label{sec:introduction}

Language inclusion is a fundamental and classical problem \cite[Chapter~11]{HU79} which consists in deciding, given two languages \(L_1\) and \(L_2\), whether \(L_1\subseteq L_2\) holds. Language inclusion problems are found in diverse fields ranging 
from compiler construction \cite{bauer76,waite84} to model checking \cite{baier08,clarke18}.
We consider languages of finite words over a finite alphabet $\Sigma$.
For regular and context-free languages, the inclusion problem is well known to be PSPACE-complete (see \cite{hunt}). 

The basic idea of our approach for solving a language inclusion problem $L_1\subseteq L_2$ is to leverage
Cousot and Cousot's abstract interpretation~\cite{CC77,CC79} for checking the inclusion of an overapproximation (i.e., a superset) of \(L_1\) into \(L_2\). 
This idea draws inspiration from the work of Hofmann and Chen~\shortcite{Hofmann2014}, who used abstract interpretation to decide language inclusion between languages of infinite words.

Let us assume that \(L_1\) is specified as least fixpoint of an equation system $X=F_{L_1}(X)$ on sets of words in $\wp(\Sigma^*)$, that is, $L_1 = \lfp(F_{L_1})$
is viewed as limit of the possibly infinite sequence of Kleene iterates $\{F^n_{L_1}(\varnothing)\}_{n\in \bN}$ of the transformer $F_{L_1}$. 
An approximation of $L_1$ is obtained by applying an overapproximation 
for sets of words as modeled by a closure operator 
\(\rho:\wp(\Sigma^*)\ra \wp(\Sigma^*)\). In abstract interpretation
one such closure $\rho$ logically defines an \emph{abstract domain}, which is here 
used  for 
overapproximating a language by adding 
words to it, possibly none in case of no approximation. 
The language abstraction $\rho$ is then used 
for defining an abstract check of convergence  
for the Kleene iterates of $F_{L_1}$ whose limit is 
$L_1$, i.e., 
the convergence of the sequence $\{F^n_{L_1}(\varnothing)\}_{n\in \bN}$
is checked on the abstraction $\rho$ by the condition $\rho(F^{n+1}_{L_1}(\varnothing))
\subseteq \rho(F^{n}_{L_1}(\varnothing))$. If the abstraction $\rho$ does not contain infinite ascending chains then we obtain finite convergence w.r.t.\ this abstract check for some 
$F_{L_1}^N(\varnothing)$.  
 
Therefore, this abstract interpretation-based 
approach  finitely computes
an abstraction $L_1^\rho = \rho(F_{L_1}^N(\varnothing))$ such
that  the abstract language inclusion check 
$L_1^\rho \subseteq L_2$ is 
\emph{sound} because 
$L_1 \subseteq L_1^\rho$ always holds. 
We then give conditions on \(\rho\) which ensure a \emph{complete} abstract inclusion 
check, namely, the answer to $L_1^\rho \subseteq L_2$ is always exact (no ``false alarm'' in abstract interpretation terminology):
\begin{enumerate}[\upshape(i\upshape)]
\item $L_2$ is \emph{exactly represented} by the abstraction $\rho$, i.e., \(\rho(L_2)=L_2\);
\item \(\rho\) is a \emph{complete abstraction} for symbol concatenation $\lambda X\in \wp(\Sigma^*).\,aX$, for all $a\in \Sigma$, 
according to the standard notion of completeness in abstract interpretation~\cite{CC77}; this entails that 
$\rho(L_1) = L_1^\rho$ holds, so that $L_1^\rho \not\subseteq L_2$ implies $L_1 \not\subseteq L_2$.
\end{enumerate}
This approach leads us to design a general algorithmic framework for language inclusion problems which is parameterized by an underlying language abstraction. %

We then focus on language abstractions $\rho$ which are 
induced by a quasiorder relation on words $\mathord{\leqslant}\subseteq \Sigma^*\times \Sigma^*$. Here, a language \(L\) is overapproximated by adding all the words which are ``greater than or equal to'' some word of \(L\) for $\mathord{\leqslant}$. This allows us to 
instantiate the above conditions (i) and (ii) 
for achieving a complete abstract inclusion check in terms of the quasiorder relation $\mathord{\leqslant}$.
Termination, which corresponds to having finitely many Kleene iterates, 
is guaranteed by requiring that 
the relation $\mathord{\leqslant}$ is a \emph{well-quasiorder}.

We define well-quasiorders satisfying the conditions (i) and (ii) which are directly derived from the standard Nerode equivalence relations on words.
These quasiorders have been first investigated by Ehrenfeucht et al.~\shortcite{ehrenfeucht_regularity_1983} and have been later generalized and
extended by de Luca and Varricchio~\shortcite{deLuca1994,deluca2011}.
In particular, 
drawing from a result by de Luca and Varricchio~\shortcite{deLuca1994}, we show that the language abstractions induced by the Nerode quasiorders are the most general ones (intuitively, optimal) which fit in our algorithmic framework for checking 
language inclusion.  
While these quasiorder abstractions do not depend on some finite 
representation of languages (e.g., some class of 
automata), 
we provide quasiorders which instead exploit an underlying language representation given by a finite automaton.
In particular, by selecting suitable well-quasiorders for the class of language inclusion problems at hand we are able to systematically derive 
decision procedures of the inclusion problem 
$L_1\subseteq L_2$ for the following cases: 
\begin{enumerate}[\upshape(1\upshape)]
\item both \(L_1\) and \(L_2\) are regular; 
\item \(L_1\) is regular and \(L_2\) is the trace language of a one-counter net; 
\item \(L_1\) is context-free and \(L_2\) is regular.
\end{enumerate}

These decision procedures, here systematically designed 
by instantiating our framework,  
are then related to existing language inclusion checking algorithms. 
We study in detail the case where both languages $L_1$ and $L_2$ are regular and represented by finite state automata. 
When our decision procedure for $L_1\subseteq L_2$ is derived from 
a well-quasiorder on $\Sigma^*$ by exploiting an automaton-based representation of \(L_2\), it turns out that 
we obtain the well-known ``antichain algorithm'' by De Wulf et al.~\shortcite{DBLP:conf/cav/WulfDHR06}. 
Also, by including a simulation relation in the definition of the well-quasiorder we derive a decision procedure that partially matches the language inclusion algorithm by Abdulla et al.~\shortcite{Abdulla2010}, and in turn also that by Bonchi and Pous~\shortcite{DBLP:conf/popl/BonchiP13}.
It is also worth pointing out that for the case in which \(L_1\) is regular and \(L_2\) is the set of traces of a one-counter net, our systematic instantiation provides 
an alternative proof for the decidability of the corresponding language inclusion problem~\cite{JANCAR1999476}.
\\
\indent
Finally, we leverage a standard duality result between abstract least and greatest 
fixpoint checking~\cite{cou00} and put forward a \emph{greatest} fixpoint approach (instead of the above \emph{least} fixpoint-based procedures) for the case where both \(L_1\) and \(L_2\) are regular languages.
Here, we exploit the properties of the overapproximating abstraction induced by the quasiorder relation in order to show that 
the Kleene iterates converging to the greatest fixpoint are finitely many.
Interestingly, the Kleene iterates of the greatest fixpoint are finitely many whether you apply the overapproximating abstraction or not, and this is shown
by relying on a second type of completeness in abstract interpretation
called forward completeness~\cite{gq01}.

\subsubsection*{Structure of the Article} In Section~\ref{sec:background} we 
recall the needed basic notions and background on order theory, abstract interpretation and formal languages. 
Section~\ref{sec:inclusion_checking_by_complete_abstractions}
defines a general method for checking the convergence of Kleene iterates on
an abstract domain, which provides the basis for designing in
Section~\ref{sec:an_algorithmic_framework_for_language_inclusion_based_on_complete_abstractions} an abstract interpretation-based framework 
for checking language inclusion, in particular by relying on abstractions
that are complete for concatenation of languages. 
This general framework is instantiated in Section~\ref{sec:instantiating_the_framework_language_based_well_quasiorders} 
to the class of abstractions 
induced by well-quasiorders on words, thus yielding effective inclusion checking algorithms for regular languages and traces of one-counter nets. 
Section~\ref{sec:novel_perspective_AC} shows that 
one specific instance of our algorithmic framework turns out to be equivalent to the well-known
antichain algorithm for language inclusion by \citet{DBLP:conf/cav/WulfDHR06}.
The instantiation of the framework  for checking the inclusion of context-free languages into regular languages is described in Section~\ref{sec:context_free_languages}. 
Section~\ref{sec:greatest_fixpoint_based_algorithm} shows how
to derive a new language inclusion algorithm 
which relies on the computation of a greatest fixpoint rather than a least fixpoint.
Finally, Section~\ref{sec:conclusions} outlines some directions 
for future work.

This article is an extended and revised version of the conference paper~\cite{grv-sas2019}, 
that includes full proofs, additional detailed examples, a simplification 
of some technical notions, and a new application for checking the inclusion of context-free languages into regular languages.

\section{Background}%
\label{sec:background}

\subsection{Order Theory}%
If $X$ is any set then $\wp(X)$ denotes its powerset. 
If $X$ is a subset of some universe set $U$ then $X^c$ denotes the complement of $X$ with respect to $U$ when $U$ is implicitly
given by the context. 
If $f:X\ra Y$ is a function between sets and $S\in \wp(X)$ then $f(S)
\ud \{f(x) \in Y \mid x\in S\}$ denotes its image on a subset $S$. A composition of two functions $f$ and $g$ is denoted both by $fg$ and $f\comp g$.

\(\tuple{D,\mathord{\leqslant}}\) is a \emph{quasiordered set} (qoset) when \(\mathord{\leqslant}\) is a quasiorder (qo) relation on $D$, i.e.\ a reflexive and
transitive binary relation  \(\mathord{\leqslant}\subseteq D\times D\). In a qoset \(\tuple{D,\mathord{\leqslant}}\) we will 
also use the following induced equivalence relation $\sim_D$: for all
$d,d'\in D$, $d \sim_D d' \ok{\udrshort} d\leqslant d' \:\wedge\: d' \leqslant d$.  
A qoset satisfies the \emph{ascending} (resp.\ \emph{descending}) \emph{chain condition} (ACC, resp.\ DCC) if there is no countably infinite  sequence of distinct elements \(\{x_i\}_{i \in \mathbb{N}}\) such that, for all $i\in\bN$, \(x_i \leqslant x_{i{+}1}\) (resp. \(x_{i{+}1} \leqslant x_{i}\)).
A qoset is called ACC (DCC) when it satisfies the ACC (DCC).
\\
\indent
A qoset \(\tuple{D,\mathord{\leqslant}}\) is a \emph{partially ordered set} (poset) when \(\mathord{\leqslant}\) is antisymmetric.
A subset $X\subseteq D$ of a poset is \emph{directed} if $X$ is nonempty and every pair of elements in $X$ has an upper bound in $X$.
A poset \(\tuple{D,\mathord{\leqslant}}\) is a \emph{directed-complete partial order} (CPO) if it has 
the least upper bound (lub) of all its directed subsets. 
A poset is a \emph{join-semilattice} if it has the lub
of all its nonempty finite subsets (therefore 
binary lubs are enough). 
A poset is a \emph{complete lattice} if it has 
the lub of all its arbitrary (possibly empty) subsets; in this case, let us recall that 
it also has the greatest lower bound (glb) of all its 
arbitrary subsets. 
\\
\indent
An \emph{antichain} in a qoset \(\tuple{D,\mathord{\leqslant}}\) is a subset $X\subseteq D$ such that 
any two distinct elements in $X$ are incomparable for $\leqslant$. 
We denote the set of antichains of a qoset $\tuple{D,\mathord{\leqslant}}$ by \(\AC_{\tuple{D,\mathord{\leqslant}}} \ud \{X\subseteq D \mid X \text{ is an antichain}\}\).
A qoset \(\tuple{D,\mathord{\leqslant}}\) is a \emph{well-quasiordered set} (wqoset), and $\mathord{\leqslant}$ is called well-quasiorder (wqo) on $D$, when for 
every countably infinite  sequence of elements \(\{x_i\}_{i\in \bN}\) there exist \(i,j\in \bN\) such that \(i<j\) and \(x_i\leqslant x_j\). 
Equivalently,  \(\tuple{D,\mathord{\leqslant}}\) is a wqoset if{}f $D$ is DCC and 
$D$  has no infinite antichain. 
For every qoset \(\tuple{D,\mathord{\leqslant}}\), let us define the following binary relation $\sqsubseteq_\leqslant$ on the powerset: 
given \(X,Y\in \wp(D)\), 
\begin{equation}\label{ordering-definition}
X\sqsubseteq Y \:\udrshort\: \forall x\in X, \exists y\in Y,\; y\leqslant x \enspace .
\end{equation}
A \emph{minor} of a subset \(X \subseteq D\), denoted by \(\minor{X}\), is a subset of the minimal elements of \(X\) w.r.t.\ \(\leqslant\), i.e.\ \(\minor{X}\subseteq\minim_{\leqslant}(X) \ud \{x \in X \mid \forall y \in X, y \leqslant x \Ra y=x\}\), such that 
\(X \sqsubseteq \minor{X}\) holds. 
Therefore, a minor $\minor{X}$ of \(X\subseteq D\) is always an antichain in $D$. 
Let us recall that every subset $X$ of a wqoset \(\tuple{D,\leqslant}\) has at least one minor set, all minor sets of $X$ are finite, 
$\minor{\{x\}}=\{x\}$, $\minor{\varnothing}=\varnothing$, and
if \(\tuple{D,\mathord{\leqslant}}\) is additionally a poset then there exists exactly one minor set of $X$.
It turns out that \(\tuple{\AC_{\tuple{D,\mathord{\leqslant}}},\sqsubseteq}\) is a qoset, which is ACC if \(\tuple{D,\leqslant}\) is a wqoset and is a poset if \(\tuple{D,\leqslant}\) is a poset.

For the sake of clarity, we overload the notation and use the same symbol for a function/relation 
and its componentwise (i.e.\ pointwise) extension on product domains, e.g., if $f:X\ra Y$ then 
$f$ also denotes the standard product function $f:X^n \ra Y^n$ which is 
componentwise defined by 
$\lambda\tuple{x_1,\ldots,x_n}\in X^n.\tuple{f(x_1),\ldots,f(x_n)}$. 
A vector \(\vect{x}\) in some product domain \(D^{|S|}\) indexed by a finite set $S$ is also denoted by \(\tuple{x_i}_{i \in S}\) and, for some $i\in S$, 
\(\vect{x}_{\!\! i}\) denotes its component \(x_i\).

Let \(\tuple{X,\leqslant}\) be a qoset and \(f:X \ra X\) be a function. $f$ is \emph{monotonic} when 
$x\leqslant y$ implies $f(x) \leqslant f(y)$. For all $n\in \bN$, the $n$-th 
power \(f^n:X \ra X\) of $f$ is inductively defined by:
$f^0 \ud \lambda x.x$; $f^{n+1} \ud f \comp f^n $ (or, equivalently, 
$f^{n+1} \ud f^n \comp f$).
The denumerable sequence of \emph{Kleene iterates}  
of \(f\) starting from an initial value \(a\in X\) is given by $\tuple{f^n(a)}_{n\in \bN}$.
If \(\tuple{X,\leqslant}\) is a poset and $a\in X$ then 
\(\lfp_a(f)\) (resp.\ \(\gfp_a(f)\)) denotes the least 
(resp.\ greatest) fixpoint of $f$ 
which is greater (resp.\ less) than or equal to \(a\), when this exists; 
in particular, \(\lfp(f)\) (resp.\ \(\gfp(f)\)) denotes the least 
(resp.\ greatest) fixpoint of $f$, when this exists. 
If
\(\tuple{X,\leqslant}\) is an ACC (resp. DCC) CPO, \(a\leqslant f(a)\) (resp.\ \(f(a)\leqslant a\)) holds and \(f\) is monotonic
then the Kleene iterates 
$\tuple{f^n(a)}_{n\in \bN}$ \emph{finitely converge} to \(\lfp_a(f)\) 
(resp.\ \(\gfp_a(f)\)), i.e., 
there exists $k\in \bN$ such that for all $n\geq k$, $f^n(a)=f^k(a)=\lfp_a(f)$ 
(resp.\ $\gfp_a(f)$). 
In particular,  if  
$\bot$ (resp.\ $\top$) is the least (greatest) element of \(\tuple{X,\leqslant}\) then $\tuple{f^n(\bot)}_{n\in \bN}$ (resp.\ $\tuple{f^n(\top)}_{n\in \bN}$)
finitely converges to \(\lfp(f)\) (resp.\ \(\gfp(f)\)).

\subsection{Abstract Interpretation}\label{subsec:abstract-interpretation}
Let us recall some basic notions on closure operators and Galois Connections commonly used in abstract interpretation (see, e.g., \cite{CC79,mine17,rival-yi}).  
Closure operators and Galois Connections are equivalent notions 
and, therefore, they are both used for 
defining the notion of approximation in abstract interpretation, where closure operators allow us to define and reason on abstract domains independently of a specific representation for abstract values which is required by 
Galois Connections.

Let \(\tuple{C,\mathord{\leq_C},\vee,\wedge}\) be a complete lattice, where $\vee$ and $\wedge$ denote, resp., lub and glb. 
An \emph{upper closure operator}, or simply \emph{closure}, on \(\tuple{C,\mathord{\leq_C}}\) is a function \(\rho:C\to C\) which is:
\begin{inparaenum}[\upshape(i)]
\item monotonic,
\item idempotent: \(\rho(\rho(x)) = \rho(x)\) for all \(x \in C\), and
\item extensive: \(x \leq_C \rho(x)\) for all \(x \in C\).
\end{inparaenum}
The set of all upper closed operators on \(C\) is denoted by \(\uco(C)\).
We often write \(c \in \rho(C)\), or simply \(c \in \rho\), to denote that  
there exists \(c' \in C\) such that \(c = \rho(c')\), and 
recall that this happens if{}f $\rho(c) = c$. 
If $\rho\in \uco(C)$ then for all  \(c_1\in C\), \(c_2\in \rho\)  and \(X \subseteq C\), 
it turns out that:
\begin{align}
&c_1 \leq_C c_2 \Lra \rho(c_1)\leq_C \rho(c_2) \Lra \rho(c_1)\leq_C c_2 \label{equation:abstractcheck}\\
&\rho ({\textstyle\vee} X) = \rho({\textstyle\vee}\rho(X)) \quad \text{and}\quad {\textstyle\wedge}\rho (X) = \rho({\textstyle\wedge}\rho(X))\enspace. \label{equation:lubAndGlb}
\end{align}
In abstract interpretation, a closure operator \(\rho\in \uco(C)\) on a concrete domain $C$ plays
the role of abstraction function for objects of $C$. Given two closures \(\rho,\rho' \in \uco(C)\), \(\rho\) is a 
\emph{coarser} abstraction 
than \(\rho'\) (or, equivalently, 
$\rho'$ is a more precise abstraction than $\rho$) if{}f the image of 
\(\rho\) is a subset of the image of \(\rho'\), i.e. \(\rho(C) \subseteq \rho'(C)\), and this happens if{}f for any $x\in C$, 
$\rho'(x) \leq_C \rho(x)$.

Let us recall that a \emph{Galois Connection} (GC) or \emph{adjunction} between two posets \(\tuple{C,\leq_C}\), called concrete domain, and \(\tuple{A,\leq_A}\), called abstract domain, consists of two functions \(\alpha\colon C\ra A\) and \(\gamma \colon A\ra C\) such that \(\alpha(c)\leq_A a \:\Lra\: c\leq_C \gamma(a)\) always holds. 
A Galois Connection is denoted by \( \tuple{C,\leq_C} \galois{\alpha}{\gamma} \tuple{A,\leq_A}\).
The function $\alpha$ is called the left-adjoint of $\gamma$, and, dually, 
$\gamma$ is called the right-adjoint of $\alpha$. This terminology is justified by the fact that if
some function $\alpha:C\ra A$ 
admits a right-adjoint $\gamma:A\ra C$ then this is unique, and this dually holds for left-adjoints.
It turns out that in a GC between complete lattices, \(\gamma\) is always co-additive (i.e., it preserves arbitrary glb's)  
while \(\alpha\) is always additive (i.e., it preserves arbitrary lub's).  
Moreover, an additive function \(\alpha : C\ra A\) uniquely determines its right-adjoint by \(\gamma\ud \lambda a\ldotp \vee_C\{c\in C \mid \alpha(c)\leq_A a\}\) and, dually, a co-additive function \(\gamma: A\ra C\) uniquely determines its left-adjoint by \(\alpha \ud \lambda c\ldotp \wedge_A\{a\in A \mid 
c\leq_C \gamma(a)\}\).

The following remark is folklore in abstract interpretation and a proof is here provided for the sake of completeness. 
\begin{lemma}\label{lemma:alpharhoequality}
Let \( \tuple{C,\leq_C} \galois{\alpha}{\gamma} \tuple{A,\leq_A}\) be a GC between complete lattices and 
\(f\colon C\rightarrow C\) be a monotonic function. Then,
\(
	\gamma( \lfp (\alpha  f \gamma )) = \lfp (\gamma \alpha f)
\).	
\end{lemma}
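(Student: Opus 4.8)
The plan is to prove the identity by antisymmetry in the complete lattice $C$, establishing the two inequalities $q \leq_C \gamma(p)$ and $\gamma(p) \leq_C q$ separately, where I abbreviate $p \ud \lfp(\alpha f \gamma)$ and $q \ud \lfp(\gamma \alpha f)$. Both least fixpoints exist by the Knaster--Tarski theorem, since $A$ and $C$ are complete lattices and the maps $\alpha f \gamma : A \ra A$ and $\gamma \alpha f : C \ra C$ are monotonic, being compositions of the monotonic maps $f$, $\alpha$ and $\gamma$. The only adjunction property I would need is the standard identity $\alpha \gamma \alpha = \alpha$, together with monotonicity of the two adjoints.

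For the inequality $q \leq_C \gamma(p)$, the idea is to show that $\gamma(p)$ is itself a fixpoint of $\gamma \alpha f$ and then invoke the minimality of $q$. Applying $\gamma$ to the fixpoint equation $p = \alpha f \gamma(p)$ gives $\gamma(p) = \gamma \alpha f \gamma(p) = (\gamma \alpha f)(\gamma(p))$, which is exactly the fixpoint condition, so that $q \leq_C \gamma(p)$. This direction is routine.

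The reverse inequality $\gamma(p) \leq_C q$ is the crux. Starting from $q = \gamma \alpha f(q)$, I would first apply $\alpha$ and use $\alpha \gamma \alpha = \alpha$ to obtain $\alpha(q) = \alpha \gamma \alpha f(q) = \alpha f(q)$; substituting this back into the fixpoint equation yields $q = \gamma(\alpha f(q)) = \gamma \alpha(q)$, so $q$ lies in the image of $\gamma$. With these two facts, a short computation shows that $\alpha(q)$ is a fixpoint of $\alpha f \gamma$, namely $\alpha f \gamma(\alpha(q)) = \alpha f(\gamma \alpha(q)) = \alpha f(q) = \alpha(q)$. Minimality of $p$ then gives $p \leq_A \alpha(q)$, and applying the monotonic map $\gamma$ together with $\gamma \alpha(q) = q$ yields $\gamma(p) \leq_C \gamma \alpha(q) = q$, as required. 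Antisymmetry of $\leq_C$ closes the argument.

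I expect the only genuinely delicate point to be this second direction, and specifically the observation that $q = \gamma \alpha(q)$, i.e.\ that $q$ is a fixpoint of the closure $\gamma \alpha$ (equivalently $q \in \gamma(A)$). This is precisely what turns $\alpha(q)$ into a fixpoint of $\alpha f \gamma$ and thereby lets the minimality of $p$ be applied, and it is exactly here that the adjunction identity $\alpha \gamma \alpha = \alpha$ is invoked. Everything else reduces to rewriting fixpoint equations and using monotonicity of $\alpha$ and $\gamma$, so I anticipate no further obstacle.
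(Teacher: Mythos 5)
Your proposal is correct and follows essentially the same route as the paper: both directions hinge on the identity $\alpha\gamma\alpha=\alpha$ and on the observation that $\lfp(\gamma\alpha f)$ lies in the image of $\gamma$ (so that $\gamma\alpha$ fixes it), which is exactly the step the paper also uses to transfer between the two fixpoints. The only cosmetic difference is that you exhibit $\gamma(p)$ and $\alpha(q)$ as genuine fixpoints and invoke minimality, whereas the paper phrases the same argument via pre-fixpoints and the rule $g(x)\leq x \Rightarrow \lfp(g)\leq x$.
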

\begin{proof}
Let us first show that \(\lfp (\gamma\alpha f) \leq_C \gamma( \lfp (\alpha f\gamma) )\):
\begin{align*}
\gamma(\lfp(\alpha f \gamma)) \leq_C \gamma(\lfp(\alpha f\gamma)) &\Lra \quad\text{[Since \(g(\lfp(g))=\lfp(g)\)]}\\
\gamma\alpha f(\gamma(\lfp(\alpha f \gamma))) \leq_C \gamma(\lfp(\alpha f \gamma))&\Rightarrow \quad\text{[Since $g(x)\leq x \Ra \lfp(g)\leq x$]}\\
\lfp(\gamma\alpha f)\leq_C \gamma(\lfp(\alpha f \gamma)) &
\end{align*}
  Then, let us prove that \(\gamma( \lfp (\alpha f\gamma) ) \leq_C \lfp (\gamma\alpha f) \):
\begin{align*}
\lfp(\gamma\alpha f)\leq_C \lfp(\gamma\alpha f) & \Lra \quad\text{[Since $g(\lfp(g))=\lfp(g)$]}\\
\gamma \alpha f(\lfp(\gamma\alpha f)) \leq_C \lfp(\gamma\alpha f) & \Ra \quad\text{[Since $\alpha$ is monotone]}\\
\alpha \gamma \alpha f(\lfp(\gamma\alpha f)) \leq_A \alpha (\lfp(\gamma\alpha f)) & \Lra \quad\text{[Since $\alpha\gamma\alpha=\alpha$ in GCs]}\\
 \alpha f(\lfp(\gamma\alpha f)) \leq_A \alpha (\lfp(\gamma\alpha f)) & \Lra \quad\text{[Since $\gamma\alpha(\lfp(\gamma\alpha f)) = \lfp(\gamma\alpha f)$]} \\
\alpha f \gamma(\alpha(\lfp(\gamma\alpha f)))\leq_A \alpha(\lfp(\gamma\alpha f))&\Ra\quad\text{[Since $g(x)\leq x \Ra \lfp(g)\leq x$]}\\
\lfp(\alpha f\gamma) \leq_A \alpha(\lfp (\gamma\alpha f)) &\Ra\quad\text{[Since $\gamma$ is monotone]}\\
\gamma(\lfp(\alpha f \gamma)) \leq_C \gamma\alpha(\lfp(\gamma\alpha f)) &\Lra\quad\text{[Since $\gamma\alpha(\lfp(\gamma\alpha f)) = \lfp(\gamma\alpha f)$]}\\
\gamma(\lfp (\alpha f\gamma)) \leq_C \lfp(\gamma\alpha f)\tag*{\qedhere}
\end{align*}
\end{proof}

\subsection{Languages}%
Let \(\Sigma\) be an alphabet, i.e., a finite nonempty set of symbols. 
A word (or string) on $\Sigma$ is a finite (possibly empty) sequence of symbols in $\Sigma$, where \(\epsilon\) denotes the empty sequence.
\(\Sigma^*\) denotes the set of finite words on $\Sigma$. A language on $\Sigma$ is a subset $L\subseteq \Sigma^*$.
Concatenation of words and languages is denoted by simple juxtaposition, that is, 
the concatenation of words 
$u,v\in \Sigma^*$ is denoted by \(uv\in \Sigma^*\), while the concatenation of 
languages $L,L'\subseteq \Sigma^*$ is denoted by
\(LL' \ud \{uv \mid u\in L,\,v\in L'\}$. By considering a word as a singleton language, 
we also concatenate words with languages, for example 
\(uL\) and $uLv$.

\begin{figure}[t]
		\centering
	\begin{tikzpicture}[->,>=stealth',shorten >=1pt,auto,node distance=5mm and 1cm,thick,initial text=]
	\tikzstyle{every state}=[scale=0.75,fill=blue!20,draw=blue!60,text=black]
	
	\node[initial,accepting,state] (1) {\(q_1\)};
	\node[state] (2) [right=of 1] {\(q_2\)};
	
	\path (1) edge[bend left] node {\(b\)} (2)
	      (2) edge[bend left] node {\(a\)} (1)
	      (2) edge[loop above] node {\(b\)} (2)
	      (1) edge[loop above] node {\(a\)} (1)
	          ;
	\end{tikzpicture}
	\caption{A finite automaton \(\cA\) with \(\lang{\cA}= (b^*a)^*\).}
	\label{fig:A}
	\end{figure}

A \emph{finite automaton} (FA) is a tuple \(\cA=\tuple{Q,\delta,I,F,\Sigma}\) where: \(\Sigma\) is an alphabet, \(Q\) is a finite set of states, \(I\subseteq Q\) is a subset of initial states, \(F\subseteq Q\) is a subset of final states, and \(\delta\colon  Q\times \Sigma \ra \wp(Q)\) is a transition relation. 
The notation \(q\ggoes{a} q'\) is also used to denote that \(q'\in \delta(q,a)\). 
If \(u\in \Sigma^*\) and \(q,q'\in Q\) then \(q \stackrel{u}{\leadsto} q'\) means that the state \(q'\) is reachable 
from \(q\) by following the string \(u\). More formally, by induction on the length of $u\in \Sigma^*$: 
(i)~if $u=\epsilon$ then \(q \goes{\epsilon} q'\) if{}f \(q=q'\); (ii)~if $u=av$ with $a\in \Sigma,v\in \Sigma^*$ then 
\(q \goes{av} q'\) if{}f $\exists q''\in \delta(q,a),\; q''\goes{v}q'$.
The \emph{language generated} by a FA \(\cA\) is \(\lang{\cA}\ud\{u \in \Sigma^* \mid \exists q_i\in I, \exists q_f \in F, \; q_i\goes{u}q_f\}\).
An example of FA is depicted in Fig.~\ref{fig:A}.

\section{Kleene Iterates with Abstract Inclusion Check}%
\label{sec:inclusion_checking_by_complete_abstractions}

Abstract interpretation can be applied to solve a generic inclusion checking problem
by leveraging backward complete abstractions~\cite{CC77,CC79,GiacobazziRS00,Ranzato13}. 
 A closure \(\rho\in \uco(C)\) is called \emph{backward complete}
for a concrete 
monotonic function ${f:C\ra C}$ when \( \rho f=\rho f \rho \) holds. Since $\rho f(c) \leq_C \rho f \rho(c)$ always holds for all $c\in C$ (because $\rho$ is extensive and monotonic and $f$ is monotonic), 
the intuition is that 
backward completeness models an ideal situation where no loss of precision
is accumulated in the computations of $\rho f$ when 
its concrete input objects $c$ are approximated by $\rho(c)$. 
It is well known~\cite{CC79} 
that backward completeness implies completeness of least fixpoints, namely for 
all $x\in C$ such that $x\leq_C f(x)$,
\begin{equation} \label{eqn:lfpcompleteness}
\rho f=\rho f \rho \;\Ra\;
\rho(\lfp_x(f))=\lfp_x(\rho f) = \lfp_x(\rho  f \rho)
\end{equation}
provided that these least fixpoints exist (this is the case, e.g., when $C$ is a CPO).  

Given an initial value $a\in C$, 
let us define the following iterative procedure:
\[
\Kleene(\Conv,f,a) \ud \left\{ \begin{array}{l}
x:=a; \\
\textbf{while~} \neg \Conv(f(x), x) \textbf{~do~} x:=f(x);\\
\textbf{return~} x;
\end{array}
\right. 
\] 
which computes the Kleene iterates of $f$ starting from $a$ and stops when 
a convergence relation $\Conv\subseteq C\times C$ for two consecutive Kleene iterates $f^{n+1}(a)$ and $f^{n}(a)$ holds. 
When  
$\Conv = \Incl \ud \{(x,y) \mid x\leq_C y\}$ is the convergence relation
and $a\leq_C f(a)$ holds, 
the procedure $\Kleene(\Incl,f,a)$ returns $\lfp_a(f)$ 
if the Kleene iterates finitely converge. Hence, termination of $\Kleene(\Eq,f,a)$ 
is guaranteed when $C$ is an ACC CPO.  

Given a closure $\rho \in \uco(C)$, 
let us consider 
the following abstract convergence relation induced by $\rho$:
\[
\Incl_\rho \ud \{(x,y) \in C\times C \mid \rho(x)\leq_C \rho(y)\}\enspace .
\]
Hence, $\Kleene(\Incl_\rho,f,a)$ terminates if eventually 
$\rho(f(x)) \leq_C \rho(x)$ holds. Notice that $\mathord{\Incl} \subseteq  \mathord{\Incl_\rho}$ always holds by monotonicity of $\rho$ and $\mathord{\Incl} = \mathord{\Incl_\rho}$ if{}f $\rho = \id$.

\begin{theorem}\label{new-lemma-kleene}
Let $\rho \in \uco(C)$ be such that 
 $\rho$ is backward complete for $f$ and
$\rho(C)$ does not contain infinite ascending chains. Let $a\in C$ 
such that $a\leq_C f(a)$ holds. Then, the procedure\/ $\Kleene(\Incl_\rho,f,a)$ terminates and $\rho (\Kleene(\Incl_\rho,f,a))=
\rho(\lfp_a(f)) = \lfp_a(\rho f)$. 
\end{theorem}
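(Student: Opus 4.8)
The plan is to decompose the statement into three parts: termination of the loop, identification of the returned value with the abstraction of the least fixpoint, and the final equality $\rho(\lfp_a(f)) = \lfp_a(\rho f)$, which I expect to read off from the lfp-completeness consequence~\eqref{eqn:lfpcompleteness} of backward completeness. Throughout I would use that $f$ is monotonic (backward completeness is only defined for monotonic transformers) and that $C$ is a complete lattice (implicit in $\rho \in \uco(C)$).

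First I would observe that the values produced by the loop are exactly the Kleene iterates $f^n(a)$, and that the convergence test $\Conv(f(x),x)$ is $\rho(f(x)) \leq_C \rho(x)$. From $a \leq_C f(a)$ and monotonicity of $f$, a straightforward induction shows $\{f^n(a)\}_n$ is an ascending chain; applying the monotonic closure $\rho$ gives an ascending chain $\{\rho(f^n(a))\}_n$ living in $\rho(C)$. Because this chain is ascending, the test $\rho(f^{n+1}(a)) \leq_C \rho(f^n(a))$ holds precisely when $\rho(f^{n+1}(a)) = \rho(f^n(a))$, i.e.\ when the abstract chain has stabilized. Since $\rho(C)$ contains no infinite ascending chains, stabilization must occur at some finite index $N$, so the loop halts and returns $x_N = f^N(a)$.

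The delicate point --- the one I expect to be the main obstacle --- is that $f^N(a)$ need not be a fixpoint of $f$: only its abstraction has converged. I would show that backward completeness upgrades this abstract convergence into a genuine fixpoint of $\rho f$. Writing $b \ud \rho(f^N(a))$ and using $\rho f = \rho f \rho$ together with idempotency of $\rho$, one computes $\rho(f(b)) = \rho(f(\rho(f^N(a)))) = \rho(f(f^N(a))) = \rho(f^{N+1}(a)) = b$, so $b$ is a fixpoint of $\rho f$ with $a \leq_C b$. For minimality, take any fixpoint $c$ of $\rho f$ with $a \leq_C c$; since $\rho$ is extensive and $c = \rho(f(c))$ we have $f(c) \leq_C \rho(f(c)) = c$, and a short induction (base $f^0(a) = a \leq_C c$, step via monotonicity of $f$ and $f(c) \leq_C c$) gives $f^n(a) \leq_C c$ for all $n$. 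Applying $\rho$ and using $\rho(c) = c$ yields $\rho(f^n(a)) \leq_C c$, in particular $b \leq_C c$. Hence $b = \lfp_a(\rho f)$.

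Finally, since $C$ is a complete lattice, $f$ is monotonic and $a \leq_C f(a)$, the fixpoint $\lfp_a(f)$ exists, so the implication~\eqref{eqn:lfpcompleteness} applies and gives $\rho(\lfp_a(f)) = \lfp_a(\rho f)$. Chaining the equalities, $\rho(\Kleene(\Incl_\rho,f,a)) = \rho(x_N) = b = \lfp_a(\rho f) = \rho(\lfp_a(f))$, which is the claim. The two places where care is genuinely needed are the equivalence between the abstract convergence test and stabilization of the ascending chain (so that ACC of $\rho(C)$ forces termination) and the fixpoint-transfer computation for $b$, both resting squarely on backward completeness $\rho f = \rho f \rho$.
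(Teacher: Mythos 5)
Your proof is correct, and it reaches the conclusion by a genuinely different route in the step that does the real work. The paper first establishes, by induction on $n$, the iterate correspondence $\rho(f^n(a)) = (\rho f)^n(\rho(a))$, then shows (by a second induction) that the minimal stopping index $N$ of the concrete test $\rho(f^{n+1}(a))\leq_C \rho(f^{n}(a))$ coincides with the minimal stopping index of the abstract iteration of $\rho f$ from $\rho(a)$, and finally invokes the standard finite-convergence result for the monotonic map $\rho f$ on the ACC image $\rho(C)$ to conclude $(\rho f)^N(\rho(a))=\lfp_{\rho(a)}(\rho f)=\lfp_a(\rho f)$, transferring back through the correspondence. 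You replace all of that with a direct verification: a single application of $\rho f\rho=\rho f$ at the stabilization index shows $b=\rho(f^N(a))$ is a fixpoint of $\rho f$ above $a$, and minimality is proved by hand --- any fixpoint $c=\rho(f(c))$ with $a\leq_C c$ satisfies $f(c)\leq_C c$ by extensivity, hence dominates every concrete iterate $f^n(a)$, hence dominates $b$ after applying $\rho$ (using $\rho(c)=c$). Your argument is more elementary and self-contained, needing neither the general correspondence of iteration sequences nor the index-transfer induction; the paper's version is more systematic in that it makes explicit that the concrete loop simulates, in lockstep, the abstract Kleene iteration of $\rho f$ from $\rho(a)$. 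Both proofs handle termination identically (the ascending chain $\rho(f^n(a))$ in the ACC poset $\rho(C)$ must stabilize at consecutive indices, since failure at every consecutive pair would force all elements to be pairwise distinct), and both close by appealing to \eqref{eqn:lfpcompleteness} for $\rho(\lfp_a(f))=\lfp_a(\rho f)$.
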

\begin{proof}
Let us first prove by induction the following property: 
\begin{equation}\label{prop-rhof}
\forall n\in \bN,\, 
\rho\comp f^n  = (\rho\comp f)^n \comp \rho
\enspace .
\end{equation}
For $n=0$, we have that $\rho \comp f^0 = \rho = 
(\rho\comp f)^0 \comp  \rho$. 
For $n+1$,
\begin{align*}
\rho \comp f^{n+1}  &= \qquad\text{[by definition of $f^{n+1}$]}\\
\rho \comp f^n \comp f &= \qquad\text{[by inductive hypothesis]}\\
(\rho\comp f)^n \comp \rho \comp f &= \qquad\text{[by backward completeness]}\\
(\rho\comp f)^n \comp \rho \comp f \comp \rho &= \qquad\text{[by definition of $(\rho \comp f)^{n+1}$]}\\
(\rho\comp f)^{n+1} \comp \rho &\enspace .
\end{align*}
Then, let us observe that $\lfp_a(\rho f) = \lfp_{\rho(a)}(\rho f)$: 
this is a consequence of the fact that 
$\rho(f(x)) = x \wedge a\leq_C x$ if{}f $\rho(f(x)) = x \wedge \rho(a) \leq_C x$, because 
$\rho(f(x)) = x \wedge a\leq_C x$ implies $\rho(f(x)) = x \wedge \rho(a)\leq_C \rho(x)=\rho(\rho(f(x)))=\rho(f(x))=x$. 

\noindent
Since $a\leq_C f(a)$, we have that $\tuple{f^n(a)}_{n\in \bN}$ is an ascending chain, 
so that, by monotonicity of $\rho$, 
$\tuple{\rho(f^n(a))}_{n\in \bN}$ is an ascending chain in $\rho(C)$.  
Since $\rho(C)$ does not contain infinite ascending chains, 
there exists $N= \min(\{ n\in \bN \mid \rho (f^{n+1}(a)) \leq_C 
\rho (f^{n}(a))\})$. This means that 
$\Kleene(\Incl_\rho,f,a)$ terminates after $N+1$ iterations and outputs 
$f^N(a)$. 
We prove by induction on $N\in \bN$ 
that $N= \min(\{ n\in \bN \mid (\rho \comp f)^{n+1}(\rho(a)) \leq_C (\rho \comp f)^n(\rho(a))\})$. 
\begin{itemize}
\item[$(N=0):$] We have that $\rho (f^{1}(a)) \leq_C 
\rho (f^{0}(a))$, namely, 
$\rho(f(a)) \leq_C \rho(a)$. Then, by backward completeness, 
 $\rho(f(\rho(a))) \leq_C \rho(a)$, namely, 
$(\rho \comp f)^{1}(\rho(a)) \leq_C (\rho \comp f)^0(\rho(a))$. 
\item[$(N+1):$] We have that $\rho (f^{N+2}(a)) \leq_C 
\rho (f^{N+1}(a))$, so that by \eqref{prop-rhof}, $(\rho\comp f)^{N+2}(\rho(a)) \leq_C 
(\rho \comp f)^{N+1}(\rho(a))$. Moreover, $N+1$ is the minimum natural number 
such that $(\rho \comp f)^{n+1}(\rho(a)) \leq_C (\rho \comp f)^n(\rho(a))$ holds, 
because if 
$(\rho \comp f)^{k+1}(\rho(a)) \leq_C (\rho \comp f)^k(\rho(a))$ for some $k\leq N$, 
then, by \eqref{prop-rhof}, we would have that 
$\rho (f^{k+1}(a)) \leq_C 
\rho (f^{k}(a))$, thus contradicting the minimality of $N+1$ for  
$\rho (f^{n+1}(a)) \leq_C 
\rho (f^{n}(a))$. 
\end{itemize}
Since $a\leq_C f(a)$ implies, by backward completeness, 
$\rho(a)\leq_C \rho(f(a)) = (\rho\comp f) (\rho(a)))$, and 
$N= \min(\{ n\in \bN \mid (\rho \comp f)^{n+1}(\rho(a)) \leq_C (\rho \comp f)^n(\rho(a))\})$, it turns out that $(\rho\comp f)^N (\rho(a))= {\lfp_{\rho(a)} (\rho f)} = \lfp_a(\rho f)$. Thus, by \eqref{prop-rhof}, we obtain $\lfp_a(\rho f) = 
(\rho\comp f)^N (\rho(a)) = \rho (f^N (a)) = \rho(\Kleene(\Incl_\rho,f,a))$. 
Finally, by \eqref{eqn:lfpcompleteness},  
$\rho(\Kleene(\Incl_\rho,f,a)) = \lfp_a(\rho f) =\rho(\lfp_a(f))$. 
\end{proof}

We will apply the order-theoretic algorithmic 
scheme provided by $\Kleene$ under the hypotheses of 
Theorem~\ref{new-lemma-kleene} to a number of
different language inclusion problems $L_1 \subseteq L_2$, where $L_1$ can be expressed 
as least fixpoint of a monotonic function on \(\wp(\Sigma^*)\). This will allow
us to systematically design several language inclusion algorithms which rely on 
different backward complete abstractions of the complete lattice 
\(\tuple{\wp(\Sigma^*),\subseteq}\).

\section{An Algorithmic Framework for Language Inclusion}%
\label{sec:an_algorithmic_framework_for_language_inclusion_based_on_complete_abstractions}

\subsection{Languages as Fixed Points}%
\label{sub:languages_as_fixpoints}

Let \(\cA=\tuple{Q,\delta,I,F,\Sigma}\) be a FA\@.
Given \(S,T \subseteq Q\), define the set of words leading from some state in $S$ to some
state in $T$ as follows:
\[W_{S,T}^\cA \ud \{u \in \Sigma^* \mid \exists q\in S,\,\exists q'\in T, q \goes{u} q'\}\enspace .\]
When \(S=\{q\}\) or \(T=\{q'\}\) we slightly abuse the notation and write \(W^{\cA}_{q,T}\), \(W^{\cA}_{S,q'}\), or \(W^{\cA}_{q,q'}\). 
Also, we omit the automaton \(\cA\) in superscripts when this is clear from the context.
The language accepted by \(\cA\) is  therefore \(\lang{\cA} \ud W^{\cA}_{I,F}\). 
Observe that
\begin{equation}%
\label{eq:unionofrightlg}
\lang{\cA}={\textstyle\bigcup_{q\in I}} W^\cA_{q,F}={\textstyle\bigcup_{q\in F}} W^\cA_{I,q}\enspace
\end{equation}
\noindent
where, as usual, \(\textstyle{\bigcup \varnothing} = \varnothing\).

Let us recall how to define the language accepted by an automaton as a solution of a set of equations~\cite{Schutzenberger63}.
Given a generic Boolean predicate \(p(x)\) for a variable $x$ ranging in some set (typically a membership predicate $x\inqm Z$)
and two generic sets $T$ and $F$, we define the following parametric 
choice function:
\[
\nullable{p(x)}{T}{F} \ud \begin{cases}
		T & \text{if \(p(x)\) holds} \\
		F & \text{otherwise}
\end{cases} \enspace .\]

\noindent
The FA \(\cA\) induces the following set of equations, where the $X_q$'s 
are variables of type $X_q\in \wp(\Sigma^*)$ and are indexed by states $q\in Q$ of $\cA$:
\begin{equation}\label{leftEqn}
	\Eqn(\cA) \ud \{ X_q = \nullable{q \inqm F}{\lbrace\epsilon\rbrace}{\varnothing} \cup {\textstyle \bigcup_{a\in \Sigma,\, q'\in\delta(q,a)}} a X_{q'} \mid  q\in Q\} \enspace.
\end{equation}
Thus, the functions $\lambda \tuple{X_{q'}}_{q'\in Q}.\: \nullable{q \inqm F}{\lbrace\epsilon\rbrace}{\varnothing} \cup {\textstyle \bigcup_{a\in \Sigma,\, q'\in\delta(q,a)}} a X_{q'}$
in the right-hand side of the equations in
\(\Eqn(\cA)\) have
type \(\wp(\Sigma^*)^{|Q|} \ra \wp(\Sigma^*)\).
Since \(\tuple{\wp(\Sigma^*)^{|Q|},\subseteq}\) is a (product) complete lattice (as \(\tuple{\wp(\Sigma^*),\subseteq}\) is a complete lattice) and all the right-hand side functions in \(\Eqn(\cA)\) are clearly monotonic, 
the least solution \(\tuple{Y_q}_{q\in Q}\in \wp(\Sigma^*)^{|Q|}\) of \(\Eqn(\cA)\) does exist and it is easy to check  
that for every \(q\in Q\), \(Y_q = W^{\cA}_{q,F}\) holds.

It is worth noticing that, by relying on right concatenations rather than left ones 
$aX_{q'}$ used 
in \(\Eqn(\cA)\), one could also define a
set of symmetric equations whose least solution coincides with \(\tuple{W_{I,q}^{\cA}}_{q\in Q}\) instead of \(\tuple{W_{q,F}^{\cA}}_{q\in Q}\).
\begin{example}\label{ex-first}
	Let us consider the automaton \(\cA\) in Figure~\ref{fig:A}. 	
The set of equations induced by \(\cA\) are as follows: 
\[
	\Eqn(\cA)=\begin{cases}
		X_1 = \{\epsilon\} \cup aX_1 \cup bX_2\\
		X_2 = \varnothing \cup aX_1 \cup b X_2 
	\end{cases} \enspace . \tag*{\raisebox{-0.8em}[0pt][0pt]{$\eox$}}
\]
\end{example}

It is notationally convenient 
to formulate  the equations in \(\Eqn(\cA)\) by exploiting
an ``initial'' vector \(\vectarg{\epsilon}{F} \in \wp(\Sigma^*)^{|Q|}\) and a predecessor
function \(\Pre_\cA \colon \wp(\Sigma^*)^{|Q|} {\ra} \wp(\Sigma^*)^{|Q|}\) defined as follows:
\begin{align*}
\vectarg{\epsilon}{F} &\ud \tuple{\nullable{q \inqm F}{\{\epsilon\}}{\varnothing}}_{q\in Q}\,, &\qquad
\Pre_\cA(\tuple{X_{q'}}_{q'\in Q}) &\ud \tuple{ {\textstyle \bigcup_{a\in \Sigma,\, q'\in\delta(q,a)}} aX_{q'}}_{q\in Q} \enspace .
\end{align*}
The intuition for the function \(\Pre_{\cA}\) is that given the language \(W_{q',F}^{\cA}\) and a transition \(q' \in \delta(q,a) \), we have that \(aW^{\cA}_{q',F} \subseteq W^{\cA}_{q,F}\) holds, i.e., given a subset $X_q'$ of the language generated by \(\cA\) from some state \(q'\), the function \(\Pre_{\cA}\) computes a subset $X_q$ of the language generated by \(\cA\) for its predecessor state \(q\). Notice that if all the components of 
$\vect{X}$ are finite sets of words then $\Pre_\cA(\vect{X})$ is still a vector of finite sets. 
Since \(\epsilon \in W_{q,F}^{\cA}\) for all \(q \in F\), the least fixpoint computation can start from the vector 
\(\vectarg{\epsilon}{F}\) and iteratively apply $\Pre_\cA$.
Therefore, it turns out that
\begin{equation}\label{eq:WqFAequalslfp}
\tuple{W^{\cA}_{q,F}}_{q\in Q} = \lfp(\lambda \vect{X}\ldotp \vectarg{\epsilon}{F} \cup \Pre_\cA(\vect{X})) \enspace .
\end{equation}
Together with Equation~\eqref{eq:unionofrightlg}, it follows that \(\lang{\cA}\) is given by the union of the component languages of the vector 
\(\lfp(\lambda \vect{X}\ldotp \vectarg{\epsilon}{F} \cup \Pre_\cA(\vect{X}))\) that are indexed by the initial states in $I$. 

\begin{example}[Continuation of Example~\ref{ex-first}]
The fixpoint characterization of \(\tuple{W_{q,F}^{\cA}}_{q\in Q}\) is:
\begin{equation*}
	\left( \begin{array}{c}
		 W^{\cA}_{q_1,q_1} \\ W^{\cA}_{q_2,q_1}
	\end{array} \right)=
	\lfp\biggl(\lambda\ \left(\begin{array}{c}
		X_1 \\ X_2
	\end{array}\right).
	\left(\begin{array}{c}
			\{\epsilon\} \cup a X_1 \cup b X_2 \\
			\varnothing \cup a X_1 \cup b X_2
		\end{array}\right)\biggr) = \left(\begin{array}{c}
		 (a+(b^+ a))^* \\ (a+b)^*a
	\end{array}\right)\enspace.\tag*{\eox}
\end{equation*}
\end{example}

\subsection{Language Inclusion using Fixed Points}
Consider a language inclusion problem \(L_1 \subseteq L_2\), where \(L_1=\lang{\cA}\) for some FA \(\cA=\tuple{Q,\delta,I,F,\Sigma}\).
The language \(L_2\) can be formalized as a vector in \(\wp(\Sigma^*)^{|Q|}\) as follows:
\begin{equation}\label{eq:elltwo}
\vectarg{L_2}{I} \ud \tuple{\nullable{q \inqm I}{L_2}{\Sigma^*}}_{q\in Q}
\end{equation} 
whose components indexed by initial states in $I$ are $L_2$ and those indexed by noninitial states are $\Sigma^*$. Then, as a consequence of~\eqref{eq:unionofrightlg},~\eqref{eq:WqFAequalslfp} and~\eqref{eq:elltwo},  we have that
\begin{equation}\label{eq:lfp}
\lang{\cA}\subseteq L_2 \Lra
\lfp(\lambda \vect{X}\ldotp\vectarg{\epsilon}{F} \cup \Pre_{\cA}(\vect{X})) \subseteq \vectarg{L_2}{I} \enspace .
\end{equation}

\begin{theorem}\label{theorem:backComplete}
If \(\rho \in \uco(\wp(\Sigma^*))\) is backward complete for \(\lambda X\in \wp(\Sigma^*)\ldotp aX\) for all \(a\in \Sigma\),
	then, for all FAs \(\cA=\tuple{Q,\delta,I,F,\Sigma}\) on the alphabet $\Sigma$, $\rho$ is backward complete for \(\Pre_{\cA}\) and \(\lambda \vect{X}\ldotp\vectarg{\epsilon}{F} \cup \Pre_{\cA}(\vect{X})\).
\end{theorem}
\begin{proof}\renewcommand{\qedsymbol}{}
\noindent 
First, it turns out that:
\begin{align*}
\rho( \Pre_\cA(\tuple{X_q}_{q\in Q})) &= \quad\text{~[by definition]}\\
\rho ({\textstyle \bigcup_{a\in \Sigma,\, q'\in \delta(q,a)}} aX_{q'}) &= 
\quad\text{~[by~\eqref{equation:lubAndGlb}]}\\
\rho ({\textstyle \bigcup_{a\in \Sigma,\, q'\in \delta(q,a)}} \rho(aX_{q'})) &= \quad\text{~[by backward completeness of $\rho$ for \(\lambda X\ldotp aX\)]}\\
\rho ({\textstyle \bigcup_{a\in \Sigma,\, q'\in \delta(q,a)}} \rho(a \rho(X_{q'}))) &=\quad\text{~[by~\eqref{equation:lubAndGlb}]}\\
\rho ({\textstyle \bigcup_{a\in \Sigma,\, q'\in \delta(q,a)}} a \rho(X_{q'})) &=\quad\text{~[by definition]}\\
\rho( \Pre_\cA(\rho(\tuple{X_q}_{q\in Q}))) & \enspace .
\end{align*}
As a consequence, \(\rho\) is backward complete for \(\lambda \vect{X}\ldotp \vectarg{\epsilon}{F} \cup  {\Pre}_\cA (\vect{X})\):
\begin{align*}
\rho (\vectarg{\epsilon}{F} \cup \Pre_\cA (\rho(\vect{X}))) & = 
\quad\text{~[by~\eqref{equation:lubAndGlb}]}\\
\rho (\rho (\vectarg{\epsilon}{F}) \cup  \rho (\Pre_\cA (\rho (\vect{X})))) & = 
\quad\text{~[by backward completeness of $\rho$ for \(\Pre_{\cA}\)]}\\
\rho (\rho (\vectarg{\epsilon}{F}) \cup  \rho (\Pre_\cA (\vect{X}))) & = 
\quad\text{~[by~\eqref{equation:lubAndGlb}]}\\
\rho (\vectarg{\epsilon}{F} \cup\: \Pre_\cA (\vect{X})) & \enspace . \tag*{$\Box$}
\end{align*}
\end{proof}
Then, by resorting to least fixpoint transfer of completeness~\eqref{eqn:lfpcompleteness}, we also obtain the following 
consequence.

\begin{corollary}\label{corol:rholfp}
If \(\rho \in \uco(\wp(\Sigma^*))\) 
is backward complete for \(\lambda X\in \wp(\Sigma^*)\ldotp aX\) for all \(a\in\Sigma\) then
\(\rho (\lfp(\lambda \vect{X}\ldotp\vectarg{\epsilon}{F} \cup \Pre_\cA(\vect{X}))) = 
\lfp(\lambda \vect{X}\ldotp \rho (\vectarg{\epsilon}{F} \cup \Pre_\cA(\vect{X})))\).
\end{corollary}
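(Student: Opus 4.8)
The plan is to read the statement off directly from the least fixpoint transfer of backward completeness recorded in~\eqref{eqn:lfpcompleteness}, using Theorem~\ref{theorem:backComplete} to supply the completeness hypothesis. Write $f \ud \lambda \vect{X}.\, \vectarg{\epsilon}{F} \cup \Pre_{\cA}(\vect{X})$ for the transformer whose least fixpoint appears on the left-hand side, and let $\rho$ denote also its componentwise extension to the product lattice $\tuple{\wp(\Sigma^*)^{|Q|},\subseteq}$, following the overloading convention of the preliminaries. First I would observe that this componentwise $\rho$ is again an upper closure operator: monotonicity, idempotence and extensivity all lift coordinatewise, so $\rho \in \uco(\wp(\Sigma^*)^{|Q|})$, and the product is a complete lattice, hence a CPO in which the relevant least fixpoints exist. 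Second, $f$ is monotonic, being built from the monotonic $\Pre_{\cA}$ and a constant vector by unions.

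With these observations in place, Theorem~\ref{theorem:backComplete} gives exactly the backward completeness of $\rho$ for $f$, namely $\rho f = \rho f \rho$ on $\wp(\Sigma^*)^{|Q|}$. This is the single substantive input: once it is available, I would instantiate~\eqref{eqn:lfpcompleteness} at the transformer $f$, the closure $\rho$ on the product, and the starting point $x = \bot$ equal to the bottom element $\tuple{\varnothing}_{q\in Q}$, for which $\bot \subseteq f(\bot)$ holds trivially. Since $\lfp_\bot = \lfp$, this yields $\rho(\lfp(f)) = \lfp(\rho f)$. Finally, unfolding $\rho f = \lambda \vect{X}.\, \rho(\vectarg{\epsilon}{F} \cup \Pre_{\cA}(\vect{X}))$ rewrites the right-hand side into the claimed form, closing the argument.

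I do not expect a genuine obstacle here, since the corollary is essentially a repackaging of~\eqref{eqn:lfpcompleteness} together with Theorem~\ref{theorem:backComplete}. The only point deserving a moment's care is the passage to the product domain: one must check that applying $\rho$ coordinatewise preserves the closure axioms and that Theorem~\ref{theorem:backComplete} is already phrased for the full product transformer $f$ (and not merely for $\Pre_{\cA}$), so that~\eqref{eqn:lfpcompleteness} applies directly to $f$ with no further fixpoint manipulation.
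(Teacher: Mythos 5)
Your proposal is correct and follows exactly the paper's route: the corollary is obtained by combining Theorem~\ref{theorem:backComplete} (backward completeness of $\rho$ for $\lambda \vect{X}.\,\vectarg{\epsilon}{F} \cup \Pre_\cA(\vect{X})$) with the least fixpoint transfer property~\eqref{eqn:lfpcompleteness}. Your additional remarks on the componentwise lifting of $\rho$ to the product lattice and the choice of the bottom vector as starting point are exactly the implicit details the paper leaves to the reader.
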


Note that if \(\rho\) is backward complete for \(\lambda X. aX\) for all \(a \in \Sigma\) 
and \(L_2\in \rho\) then, by Theorem~\ref{new-lemma-kleene} and Corollary~\ref{corol:rholfp}, the equivalence~\eqref{eq:lfp} becomes
\begin{equation}\label{equation:checklfpRLintoRL}
\lang{\cA}\subseteq L_2 \Lra \lfp(\lambda \vect{X}\ldotp\rho (\vectarg{\epsilon}{F} \!\cup \Pre_{\cA}(\vect{X}))) \subseteq \vectarg{L_2}{I} \enspace.
\end{equation}

\subsubsection{Right Concatenation}\label{rightwqo}
Let us consider the symmetric case of right concatenation \(\lambda X. Xa\).  
Recall that 
\(W_{I,q} = \{w \in \Sigma^* \mid \exists q_i \in I, q_i \goes{w}q\}\) and
that \(W_{I,q} = \nullable{q\inqm I}{\{\epsilon\}}{\varnothing} \cup {\textstyle\bigcup_{a\in\Sigma,a\in W_{q',q}}} W_{I,q'}a\) holds. Correspondingly, 
we define
a set of fixpoint equations on \(\wp(\Sigma^*)\) which is based on right concatenation 
and is symmetric to the equations defined in~\eqref{leftEqn}:
\[\Eqnr(\cA) \ud \{X_q = \nullable{q\inqm I}{\{\epsilon\}}{\varnothing} \cup {\textstyle\bigcup_{a\in\Sigma,\, q\in \delta(q',a)}} X_{q'}a \mid q \in Q\} \enspace .\]
In this case, if \(\vect{Y}=\tuple{Y_q}_{q\in Q}\) is the 
least fixpoint solution of \(\Eqnr(\cA)\) then 
\(Y_q = W^{\cA}_{I,q}\) for every \(q\in Q\).
Also, by defining \(\vectarg{\epsilon}{I} \in \wp(\Sigma^*)^{|Q|}\) and 
\(\Post_\cA \colon \wp(\Sigma^*)^{|Q|} {\ra} \wp(\Sigma^*)^{|Q|}\) as follows: 
\begin{align*}
\vectarg{\epsilon}{I} &\ud \tuple{\nullable{q \inqm I}{\{\epsilon\}}{\varnothing}}_{q\in Q} \quad & \Post_\cA(\tuple{X_q}_{q\in Q}) &\ud \tuple{ {\textstyle \bigcup_{a\in \Sigma,\, q\in \delta(q',a)}} X_{q'}a}_{q\in Q}
\end{align*}
we have that
\begin{equation}\label{eq:WIqAequalslfp}
\tuple{W_{I,q}}_{q\in Q} = \lfp(\lambda \vect{X}\ldotp \vectarg{\epsilon}{I} \cup \Post_\cA(\vect{X})) \enspace .
\end{equation}
Thus, by~\eqref{eq:unionofrightlg}, it turns out that 
\(\lang{\cA} = {\textstyle\bigcup_{q_f \in F}} W_{I,q_f}\) holds, that is, 
\(\lang{\cA}\) is the union of the component 
languages of the vector 
\(\lfp(\lambda \vect{X}\ldotp \vectarg{\epsilon}{I} \cup \Post_\cA(\vect{X}))\) 
indexed by the final states in  \(F\). 

\begin{example}\label{ex-firstPost}
Consider again the FA \(\cA\) in Figure~\ref{fig:A}.
The set of right equations 	for \(\cA\) is as follows:
\[
	\Eqnr(\cA)=\begin{cases}
		X_1 = \{\epsilon\} \cup X_1a \cup X_2 a\\
		X_2 = \varnothing \cup X_1b \cup X_2b
	\end{cases}
\]
so that 
\begin{equation*}
	\left( \begin{array}{c}
		 W_{q_1,q_1} \\ W_{q_1,q_2}
	\end{array} \right)=
	\lfp\biggl(\lambda\ \left(\begin{array}{c}
		X_1 \\ X_2
	\end{array}\right).
	\left(\begin{array}{c}
			\{\epsilon\} \cup X_1 a \cup X_2 a \\
			\varnothing \cup X_1 b\cup X_2 b
		\end{array}\right)\biggr) = \left(\begin{array}{c}
		 (a+(b^+ a))^* \\ a^*b(b+a^+b)^*
	\end{array}\right)\enspace.\tag*{\eox}
\end{equation*}
\end{example}

In a language inclusion problem \(\lang{\cA} \subseteq L_2\), 
we consider the  vector 
\(\vectarg{L_2}{F} \ud \tuple{\nullable{q \inqm F}{L_2}{\Sigma^*}}_{q\in Q} \in \wp(\Sigma^*)^{|Q|}\), so that, by~\eqref{eq:WIqAequalslfp}, it turns out that: 
\begin{align*}
\lang{\cA}\subseteq L_2 \:\Lra\: 
\lfp(\lambda \vect{X}\ldotp\vectarg{\epsilon}{I} \cup \Post_{\cA}(\vect{X})) \subseteq \vectarg{L_2}{F}\enspace.
\end{align*}
We therefore have the following symmetric version 
of Theorem~\ref{theorem:backComplete} for right concatenation.

\begin{theorem}\label{theorem:backCompleteRight}
If \(\rho \in \uco(\wp(\Sigma^*))\) 
is backward complete for \(\lambda X\ldotp Xa\) for all \(a\in \Sigma\) then, for all FAs $\cA$ on the alphabet $\Sigma$, \(\rho\) is backward complete for \(\lambda \vect{X}\ldotp\vectarg{\epsilon}{I} \cup \Post_{\cA}(\vect{X})\).
\end{theorem}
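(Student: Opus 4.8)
The plan is to transcribe the proof of Theorem~\ref{theorem:backComplete} essentially verbatim, systematically replacing left concatenation $\lambda X.\,aX$ by right concatenation $\lambda X.\,Xa$ and the predecessor transformer $\Pre_\cA$ by the successor transformer $\Post_\cA$. The only hypothesis that changes is the starting one: here $\rho$ is assumed backward complete for $\lambda X.\,Xa$, that is, $\rho(Xa) = \rho(\rho(X)a)$ for every $a\in\Sigma$. Since the whole argument is driven by the two algebraic facts~\eqref{equation:lubAndGlb} and backward completeness for a single symbol concatenation, and since $\Post_\cA$ has exactly the same shape as $\Pre_\cA$ with the concatenation reversed, I expect the argument to go through without any genuinely new idea.

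First I would establish the intermediate claim that $\rho$ is backward complete for $\Post_\cA$, i.e.\ $\rho\Post_\cA = \rho\Post_\cA\rho$, working on the $q$-th coordinate of $\Post_\cA(\tuple{X_q}_{q\in Q}) = \tuple{\bigcup_{a\in\Sigma,\,q\in\delta(q',a)} X_{q'}a}_{q\in Q}$:
\begin{align*}
\rho\Bigl(\textstyle\bigcup_{a,\,q\in\delta(q',a)} X_{q'}a\Bigr)
&= \rho\Bigl(\textstyle\bigcup_{a,\,q\in\delta(q',a)} \rho(X_{q'}a)\Bigr) \\
&= \rho\Bigl(\textstyle\bigcup_{a,\,q\in\delta(q',a)} \rho(\rho(X_{q'})a)\Bigr) \\
&= \rho\Bigl(\textstyle\bigcup_{a,\,q\in\delta(q',a)} \rho(X_{q'})a\Bigr),
\end{align*}
where the first and third equalities use~\eqref{equation:lubAndGlb} and the middle one uses backward completeness of $\rho$ for right concatenation. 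The rightmost term is precisely the $q$-th coordinate of $\rho(\Post_\cA(\rho(\tuple{X_q}_{q\in Q})))$, so the identity holds on every coordinate and hence for the componentwise extension of $\rho$ to the product lattice $\wp(\Sigma^*)^{|Q|}$.

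Then, exactly as in the final block of the proof of Theorem~\ref{theorem:backComplete}, I would lift this to the full transformer $\lambda\vect{X}.\,\vectarg{\epsilon}{I}\cup\Post_\cA(\vect{X})$:
\begin{align*}
\rho(\vectarg{\epsilon}{I}\cup\Post_\cA(\rho(\vect{X})))
&= \rho(\rho(\vectarg{\epsilon}{I})\cup\rho(\Post_\cA(\rho(\vect{X})))) \\
&= \rho(\rho(\vectarg{\epsilon}{I})\cup\rho(\Post_\cA(\vect{X}))) \\
&= \rho(\vectarg{\epsilon}{I}\cup\Post_\cA(\vect{X})),
\end{align*}
using~\eqref{equation:lubAndGlb} for the outer two equalities and the backward completeness of $\rho$ for $\Post_\cA$ just established for the middle one. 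This is exactly the desired $\rho\, g = \rho\, g\,\rho$ for $g = \lambda\vect{X}.\,\vectarg{\epsilon}{I}\cup\Post_\cA(\vect{X})$.

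I do not anticipate any real obstacle, since the statement is the mirror image of Theorem~\ref{theorem:backComplete} and the symmetry is genuine: backward completeness for right concatenation plays the same structural role here that left concatenation played there. The only point worth a line of care is that backward completeness is asserted for the componentwise action of $\rho$ on $\wp(\Sigma^*)^{|Q|}$, which is immediate because both $\rho$ and $\Post_\cA$ act coordinate by coordinate. Note finally that the theorem claims backward completeness only for the combined transformer, so the $\Post_\cA$ computation serves as an auxiliary step rather than as part of the final statement; monotonicity of this transformer, required for the notion of backward completeness to apply, is clear from monotonicity of union and concatenation.
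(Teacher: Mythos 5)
Your proof is correct and follows exactly the route the paper intends: the paper states this theorem without proof as the ``symmetric version'' of Theorem~\ref{theorem:backComplete}, and your transcription with $\Pre_\cA$ replaced by $\Post_\cA$ and left concatenation replaced by right concatenation, using \eqref{equation:lubAndGlb} and backward completeness for $\lambda X.\,Xa$ in the same places, is precisely that symmetric argument.
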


\subsection{A Language Inclusion Algorithm with Abstract Inclusion Check}
Let us now apply the general Theorem~\ref{new-lemma-kleene} to design 
an algorithm that solves a language inclusion problem \(\lang{\cA} \subseteq L_2\)
by exploiting a language abstraction \(\rho\) that satisfies 
a list of requirements of backward completeness and computability.

\begin{theorem}\label{theorem:FiniteWordsAlgorithmGeneral}
Let \(\cA=\tuple{Q,\delta,I,F,\Sigma}\) be a FA, \(L_2\in \wp(\Sigma^*)\) and 
\(\rho \in \uco(\wp(\Sigma^*))\). 
Assume that the following properties hold:
\begin{compactenum}[\upshape(\rm i\upshape)]
\item The closure \(\rho\) is backward complete for \(\lambda X\in \wp(\Sigma^*)\ldotp aX\), for all \(a\in \Sigma\), and satisfies \(\rho(L_2) = L_2\).\label{theorem:FiniteWordsAlgorithmGeneral:rho}
\item \(\rho(\wp(\Sigma^*))\) does not contain
infinite ascending chains.  \label{theorem:FiniteWordsAlgorithmGeneral:ACC}
\item If $X,Y\in \wp(\Sigma^*)$ are finite sets of words then 
the inclusion  $\rho(X) \subseteqm \rho(Y)$ is decidable. 
\label{theorem:FiniteWordsAlgorithmGeneral:EQ}
\item If $Y\in \wp(\Sigma^*)$ is a finite set of words then 
the inclusion $\rho(Y) \subseteqm L_2$ is decidable. 
\label{theorem:FiniteWordsAlgorithmGeneral:inclrho}
\end{compactenum}
\medskip
Then,

\medskip
$
\begin{array}{l}
\tuple{Y_q}_{q\in Q} := \Kleene (\Incl_\rho,\lambda \vect{X}.
\vectarg{\epsilon}{F} \!\cup \Pre_{\cA}(\vect{X}), \vect{\varnothing}) \emph{;}\\
\emph{\textbf{return}}\: \Incl_\rho(\tuple{Y_q}_{q\in Q},\vectarg{L_2}{I})\emph{;}
\end{array}
$

\medskip
\noindent
is a decision algorithm for \(\lang{\cA} \subseteq L_2\).
\end{theorem}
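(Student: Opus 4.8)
The plan is to abbreviate $f \ud \lambda \vect{X}.\, \vectarg{\epsilon}{F}\cup\Pre_\cA(\vect{X})$, viewed as a monotonic function on the product complete lattice $\tuple{\wp(\Sigma^*)^{|Q|},\subseteq}$, and to argue in three stages: termination of the \Kleene\ loop, correctness of the Boolean produced by the final abstract inclusion check, and effectiveness of each individual operation. The first two stages are essentially an assembly of Theorems~\ref{new-lemma-kleene} and~\ref{theorem:backComplete} together with the abstract-check equation~\eqref{equation:abstractcheck} and the fixpoint characterization~\eqref{eq:lfp}; the third stage is where hypotheses~(\ref{theorem:FiniteWordsAlgorithmGeneral:EQ}) and~(\ref{theorem:FiniteWordsAlgorithmGeneral:inclrho}) are used.

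For termination, I would first invoke hypothesis~(\ref{theorem:FiniteWordsAlgorithmGeneral:rho}): since $\rho$ is backward complete for $\lambda X.\,aX$ for every $a\in\Sigma$, Theorem~\ref{theorem:backComplete} yields that the componentwise-lifted closure $\rho$ is backward complete for $f$. Hypothesis~(\ref{theorem:FiniteWordsAlgorithmGeneral:ACC}) states that $\rho(\wp(\Sigma^*))$ has no infinite ascending chains, and since a finite product of posets with no infinite ascending chains again has no infinite ascending chains, $\rho(\wp(\Sigma^*)^{|Q|})$ has none either. As $\vect{\varnothing}$ is the bottom element we trivially have $\vect{\varnothing}\subseteq f(\vect{\varnothing})$, so all hypotheses of Theorem~\ref{new-lemma-kleene} hold; that theorem then gives both termination of $\Kleene(\Incl_\rho,f,\vect{\varnothing})$ and the identity $\rho(\tuple{Y_q}_{q\in Q})=\rho(\lfp(f))$ for its output $\tuple{Y_q}_{q\in Q}$.

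For correctness, I would first note that $\vectarg{L_2}{I}$ is a fixpoint of $\rho$: each of its components is either $L_2$, fixed by $\rho(L_2)=L_2$ from hypothesis~(\ref{theorem:FiniteWordsAlgorithmGeneral:rho}), or $\Sigma^*$, fixed since $\rho$ is extensive and $\Sigma^*$ is the top of $\wp(\Sigma^*)$. Consequently the returned value $\Incl_\rho(\tuple{Y_q}_{q\in Q},\vectarg{L_2}{I})$ unfolds to $\rho(\tuple{Y_q}_{q\in Q})\subseteq\rho(\vectarg{L_2}{I})=\vectarg{L_2}{I}$, i.e.\ to $\rho(\lfp(f))\subseteq\vectarg{L_2}{I}$. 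Applying the abstract-check equivalence~\eqref{equation:abstractcheck} with $c_2=\vectarg{L_2}{I}\in\rho$ turns this into $\lfp(f)\subseteq\vectarg{L_2}{I}$, which by~\eqref{eq:lfp} holds if{}f $\lang{\cA}\subseteq L_2$. Hence the algorithm returns the correct Boolean. For effectiveness, the iterates start at the vector of finite sets $\vect{\varnothing}$, and $f$ preserves finiteness because $\vectarg{\epsilon}{F}$ has finite components and $\Pre_\cA$ maps vectors of finite sets to vectors of finite sets; thus every iterate, and in particular $\tuple{Y_q}_{q\in Q}$, is a vector of finite languages. The loop guard $\Incl_\rho(f(\vect{X}),\vect{X})$ is the componentwise test $\rho(f(\vect{X})_q)\subseteq\rho(\vect{X}_q)$ between finite sets, decidable by hypothesis~(\ref{theorem:FiniteWordsAlgorithmGeneral:EQ}); and the final test reduces componentwise to $\rho(Y_q)\subseteq L_2$ for $q\in I$, decidable by hypothesis~(\ref{theorem:FiniteWordsAlgorithmGeneral:inclrho}), and to the vacuously true $\rho(Y_q)\subseteq\Sigma^*$ for $q\notin I$.

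The only genuinely delicate point is the bookkeeping around the componentwise lifting of $\rho$: one must check that backward completeness transports to the product (which is precisely what Theorem~\ref{theorem:backComplete} delivers), that the absence of infinite ascending chains and the fixpoint-of-$\rho$ property transport by finiteness of $Q$, and, most importantly, that the two distinct inclusion tests appearing in the algorithm — the internal convergence test between consecutive finite iterates and the external test against $\vectarg{L_2}{I}$ — are matched exactly by the two separate decidability hypotheses~(\ref{theorem:FiniteWordsAlgorithmGeneral:EQ}) and~(\ref{theorem:FiniteWordsAlgorithmGeneral:inclrho}). Everything else is a direct consequence of the already established results.
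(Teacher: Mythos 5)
Your proposal is correct and follows essentially the same route as the paper's proof: both reduce termination and the identity $\rho(\tuple{Y_q}_{q\in Q})=\rho(\lfp(f))$ to Theorem~\ref{new-lemma-kleene} via the backward completeness transfer of Theorem~\ref{theorem:backComplete}, and both establish correctness of the returned Boolean by the chain of equivalences linking $\Incl_\rho(\tuple{Y_q}_{q\in Q},\vectarg{L_2}{I})$ to $\lang{\cA}\subseteq L_2$ through~\eqref{eq:lfp} and the fact that $\vectarg{L_2}{I}$ is fixed by $\rho$. Your treatment is in fact slightly more explicit than the paper's on the componentwise lifting and on matching the two decidability hypotheses to the two distinct inclusion tests, but there is no substantive difference in the argument.
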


\begin{proof}
Conditions~(\ref{theorem:FiniteWordsAlgorithmGeneral:rho}),~(\ref{theorem:FiniteWordsAlgorithmGeneral:ACC}) and~(\ref{theorem:FiniteWordsAlgorithmGeneral:EQ}) guarantee that 
the hypotheses of
Theorem~\ref{new-lemma-kleene} are satisfied. Thus, 
$\Kleene(\Incl_\rho, \lambda \vect{X}\ldotp \vectarg{\epsilon}{F} \!\cup \Pre_{\cA}(\vect{X}),\vect{\varnothing})$ is an algorithm that terminates  
with output $\tuple{Y_q}_{q\in Q}$ and 
\begin{equation*}\label{eq:lfpKleeneQO}
\rho(\lfp(\lambda \vect{X}\ldotp \vectarg{\epsilon}{F} \!\cup \Pre_{\cA}(\vect{X}))) = 
\rho(\tuple{Y_q}_{q\in Q}).
\end{equation*}
Moreover, by \eqref{eq:lfp},
$\lang{\cA} \subseteq L_2 \Lra \rho(\lang{\cA}) \subseteq \rho(L_2)=L_2 \Lra
\rho(\lfp(\lambda \vect{X}\ldotp\vectarg{\epsilon}{F} \cup \Pre_{\cA}(\vect{X}))) \subseteq \vectarg{L_2}{I} \Lra \rho(\tuple{Y_q}_{q\in Q}) \subseteq \rho(\vectarg{L_2}{I}) 
\Lra  \Incl_\rho(\tuple{Y_q}_{q\in Q},\vectarg{L_2}{I})$. Finally,  by condition~\eqref{theorem:FiniteWordsAlgorithmGeneral:inclrho}, $\Incl_\rho(\tuple{Y_q}_{q\in Q},\vectarg{L_2}{I})$ is decidable. 
\end{proof}

It is worth noticing that Theorem~\ref{theorem:FiniteWordsAlgorithmGeneral} can also be stated in a symmetric version for \(\lambda \vect{X}\ldotp\vectarg{\epsilon}{I} \cup \Post_{\cA}(\vect{X})\) similarly to Theorem~\ref{theorem:backCompleteRight}.

\section{Instantiating the Framework with Quasiorders}%
\label{sec:instantiating_the_framework_language_based_well_quasiorders}

We instantiate the general algorithmic framework of 
Section~\ref{sec:an_algorithmic_framework_for_language_inclusion_based_on_complete_abstractions} to the class of closure operators induced by quasiorder relations on words. 

\subsection{Word-based Abstractions}\label{sec:word-based}
Let \(\mathord{\leqslant} \subseteq \Sigma^* \times \Sigma^*\) be a quasiorder relation on words in $\Sigma^*$.  
The corresponding closure operator \(\rho_\leqslant \in \uco(\wp(\Sigma^*))\) is defined as follows: 
\begin{equation}\label{eq:qo-up-closure}
\rho_\leqslant(X) \ud  %
\{v\in \Sigma^* \mid \exists u\in X, \;u \leqslant v \} \enspace .
\end{equation}
Thus, $\rho_\leqslant(X)$ is the $\leqslant$-upward closure
of $X$ and it is easy to check that $\rho_\leqslant$ is indeed a closure
on the complete lattice $\tuple{\wp(\Sigma^*),\subseteq}$.
\\
\indent
Following \cite{deLuca1994}, a quasiorder \(\leqslant\) on \(\Sigma^*\) is \emph{left-monotonic} (resp.\ \emph{right-monotonic})
if 
\begin{equation*}
\forall y,x_1,x_2 \in \Sigma^*,\: x_1\leqslant x_2 \,\Ra\, y x_1 \leqslant y x_2 
\quad \text{(resp.\ $x_1 y \leqslant x_2 y$) \enspace.}
\end{equation*}
Also, \(\leqslant\) is called monotonic if it is both left- and right-monotonic. 
It turns out that \(\leqslant\) is left-monotonic (resp.\ right-monotonic) if{}f 
\begin{equation}\label{def-leftmon}
\forall x_1,x_2 \in \Sigma^*,\forall a\in \Sigma,\: x_1\leqslant x_2 \,\Ra\, a x_1 \leqslant a x_2 \quad \text{(resp.\ $x_1 a \leqslant x_2 a$)} \enspace .
\end{equation}
In fact, if $x_1\leqslant x_2$ then \eqref{def-leftmon} implies 
that for all $y\in \Sigma^*$, $y x_1 \leqslant y x_2$: by induction 
on the length $|y|\in\bN$, we have that:
(i)~if $y=\epsilon$ then $y x_1 \leqslant y x_2$; (ii)~if $y=av$ with $a\in \Sigma,v\in \Sigma^*$ then, by inductive
hypothesis, 
$v x_1 \leqslant v x_2$, so that by \eqref{def-leftmon}, $yx_1=av x_1 \leqslant av x_2=yx_2$.

\begin{definition}[$L$-Consistent Quasiorder]\label{def:LConsistent}\rm
	Let $L\in$ $\wp(\Sigma^*)$. A quasiorder \(\mathord{\leqslant_L} \subseteq \Sigma^* \times \Sigma^*\) is called \emph{left} (resp.\ \emph{right}) \(L\)\emph{-consistent} when:   
	\begin{compactenum}[\upshape(a)]
	\item \(\mathord{\leqslant}_L \cap (L\times \neg L) = \varnothing \);\label{eq:LConsistentPrecise}
	\item \(\mathord{\leqslant}_L\) is left-monotonic (resp.\ right-monotonic). \label{eq:LConsistentmonotonic}
	\end{compactenum} 
	Also, \(\mathord{\leqslant}_L\) is called \emph{\(L\)-consistent} when it is both left and right \(L\)-consistent.\hfill{\rule{0.5em}{0.5em}}
\end{definition}

It turns out that a quasiorder is $L$-consistent if{}f it induces a closure which includes $L$ in its image and it is 
backward complete for concatenation. 
\begin{lemma}\label{lemma:properties}
Let \(L\in \wp(\Sigma^*)\) and \(\mathord{\leqslant_L}\) be a quasiorder on \(\Sigma^*\).
Then, \(\mathord{\leqslant_L}\) is a 
left (resp.\ right) \(L\)-consistent quasiorder on \(\Sigma^*\) if and only if
\begin{compactenum}[\upshape(\rm a\upshape)]
\item \(\rho_{\leqslant_L}(L) = L\), and \label{lemma:properties:L}
\item \(\rho_{\leqslant_L}\) is backward complete for \(\lambda X\ldotp a X\) (resp.\ \(\lambda X\ldotp Xa\)) for all \(a\in \Sigma\).\label{lemma:properties:bw}
\end{compactenum}
\end{lemma}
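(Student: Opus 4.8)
The plan is to establish the biconditional by decoupling it into two independent correspondences, since both the definition of left $L$-consistency (Definition~\ref{def:LConsistent}) and the conjunction of conditions~(a) and~(b) split cleanly into a precision part and a monotonicity part. Concretely, I would show that clause~\eqref{eq:LConsistentPrecise} of left $L$-consistency is equivalent to condition~(a), and that clause~\eqref{eq:LConsistentmonotonic} (left-monotonicity) is equivalent to condition~(b); conjoining the two equivalences then yields the lemma. The right case is symmetric throughout.

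For the first correspondence I would simply unfold the definitions. The condition $\leqslant_L \cap (L \times \neg L) = \varnothing$ says that whenever $u \in L$ and $u \leqslant_L v$ one has $v \in L$, which by the definition~\eqref{eq:qo-up-closure} of $\rho_{\leqslant_L}$ is exactly $\rho_{\leqslant_L}(L) \subseteq L$. Since $\rho_{\leqslant_L}$ is a closure, hence extensive, the reverse inclusion $L \subseteq \rho_{\leqslant_L}(L)$ is automatic, so clause~\eqref{eq:LConsistentPrecise} holds if and only if $\rho_{\leqslant_L}(L) = L$. This half requires no real work.

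The second correspondence is the substance of the proof. I would first invoke the already-proven equivalence~\eqref{def-leftmon}, which replaces left-monotonicity by its single-symbol form $x_1 \leqslant_L x_2 \Rightarrow a x_1 \leqslant_L a x_2$ for all $a \in \Sigma$. Backward completeness of $\rho_{\leqslant_L}$ for $\lambda X\ldotp aX$ means $\rho_{\leqslant_L}(aX) = \rho_{\leqslant_L}(a\rho_{\leqslant_L}(X))$, and since the inclusion $\rho_{\leqslant_L}(aX) \subseteq \rho_{\leqslant_L}(a\rho_{\leqslant_L}(X))$ always holds by extensivity and monotonicity, only the reverse inclusion is at stake. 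For the forward direction I would take an arbitrary $w \in \rho_{\leqslant_L}(a\rho_{\leqslant_L}(X))$, extract $y \in \rho_{\leqslant_L}(X)$ with $ay \leqslant_L w$ and then $u \in X$ with $u \leqslant_L y$, apply single-symbol monotonicity to get $au \leqslant_L ay$, and close by transitivity to obtain $au \leqslant_L w$ with $au \in aX$, hence $w \in \rho_{\leqslant_L}(aX)$.

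For the converse direction I would instantiate backward completeness at the singleton $X = \{x_1\}$: if $x_1 \leqslant_L x_2$ then $x_2 \in \rho_{\leqslant_L}(\{x_1\})$, so $a x_2 \in \rho_{\leqslant_L}(a\rho_{\leqslant_L}(\{x_1\})) = \rho_{\leqslant_L}(\{a x_1\})$, which unfolds precisely to $a x_1 \leqslant_L a x_2$. This singleton instantiation is the one genuinely non-mechanical step, since it is what lets a set-level equation recover the pointwise monotonicity property; everything else is a routine chase through the definition of $\rho_{\leqslant_L}$, extensivity, and transitivity of $\leqslant_L$. I expect no further obstacle, and the right-concatenation version follows verbatim by swapping $aX$ for $Xa$ and left- for right-monotonicity.
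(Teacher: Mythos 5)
Your proposal is correct and follows essentially the same route as the paper's proof: part (a) by unfolding the definition of $\rho_{\leqslant_L}$ together with extensivity, and part (b) by the elementwise transitivity chase for the forward direction and the singleton instantiation $X=\{x_1\}$ for the converse, reducing to single-symbol monotonicity via~\eqref{def-leftmon}. No gaps.
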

\begin{proof}%
We consider the left case, the right case is symmetric. 

\begin{enumerate}[\upshape(\rm a\upshape)]
\item The inclusion 
\(L\subseteq \rho_{\leqslant_L}(L)\) always 
holds because \(\rho_{\leqslant_L}\) is an upper closure. Then, it turns out that $\rho_{\leqslant_L}(L)= L$ if{}f 
$\rho_{\leqslant_L}(L)\subseteq L$ if{}f $\forall v\in \Sigma^*$, 
$(\exists u\in L,\, u \leqslant_L v) \:\Ra\: v\in L$ if{}f \(\mathord{\leqslant}_L \cap (L\times \neg L) = \varnothing\). 
Thus, $\rho_{\leqslant_L}(L)= L$ if{}f
condition~(\ref{eq:LConsistentPrecise}) of Definition~\ref{def:LConsistent} holds. 

\item We first prove that if \(\mathord{\leqslant}_L\) is left-monotonic then for all $X\in \wp(\Sigma^*)$, 
\(\rho_{\leqslant_L}(a X) = \rho_{\leqslant_L}(a \rho_{\leqslant_L}(X))\) for all \(a\in\Sigma\). 
Monotonicity of concatenation together with monotonicity and extensivity of 
$\rho_{\leqslant_L}$ imply that \(\rho_{\leqslant_L}(a X) \subseteq \rho_{\leqslant_L}(a \rho_{\leqslant_L}(X))\) holds.
For the reverse inclusion, we have that:
\begin{align*}
	\rho_{\leqslant_L}(a \rho_{\leqslant_L}(X)) %
	=& \quad \text{~[by def.\ of \(\rho_{\leqslant_L}\)]}\\
	\rho_{\leqslant_L}\left( \{ a y \mid \exists x\in X, x \leqslant_L y \} \right)
	=& \quad \text{~[by def.\ of \(\rho_{\leqslant_L}\)]}\\
	\{ z \mid \exists x\in X, y\in \Sigma^*,\, x\leqslant_L y \land a y \leqslant_L z \}
	\subseteq& \quad \text{~[by left monotonicity of \(\leqslant_L\)]}\\
	\{ z \mid \exists x\in X, y\in \Sigma^*,\, ax\leqslant_L ay \land a y \leqslant_L z \}
	=& \quad \text{~[by transitivity of \(\leqslant_L\)]}\\
	\{ z \mid \exists x\in X , a x\leqslant_L z\}
	=& \quad \text{~[by def.\ of \(\rho_{\leqslant_L}\)]}\\
	\rho_{\leqslant_L}(a X) \;\phantom{=} &.   %
\end{align*}
Conversely, assume that for all $X\in \wp(\Sigma^*)$ and \(a\in\Sigma\),
\(\rho_{\leqslant_L}(a X) = \rho_{\leqslant_L}(a \rho_{\leqslant_L}(X))\). 
Consider $x_1,x_2\in \Sigma^*$ and $a\in \Sigma$. 
If $x_1 \leqslant_L x_2$ then
$\{x_2\} \subseteq \rho_{\leqslant_L}(\{x_1 \})$, and in turn 
$a\{x_2\} \subseteq  a\rho_{\leqslant_L}(\{x_1 \})$.
Then, by applying the monotonic function 
$\rho_{\leqslant_L}$, 
$\rho_{\leqslant_L}(a\{x_2\}) \subseteq  \rho_{\leqslant_L}(a\rho_{\leqslant_L}(\{x_1 \}))$, so that, by backward completeness, 
$\rho_{\leqslant_L}(a\{x_2\}) \subseteq  \rho_{\leqslant_L}(a\{x_1 \})$.
Hence, $a\{x_2\} \subseteq \rho_{\leqslant_L}(a\{x_1 \})$, namely, 
$ax_1 \leqslant_L ax_2$. By \eqref{def-leftmon}, this shows that $\leqslant_L$ is left-monotonic. \qedhere
\end{enumerate}
\end{proof}

We can apply Theorem~\ref{theorem:FiniteWordsAlgorithmGeneral} 
to the closure $\rho_{\leqslant^l_{L_2}}$ induced by a left $L_2$-consistent 
well-quasiorder, since it satisfies all the required hypotheses, 
thus obtaining the following Algorithm~\AlgRegularW which solves the language inclusion problem \(\lang{\cA} \subseteq L_2\) for any automaton \(\cA\).
This algorithm is called ``word-based'' because the output 
vector \(\tuple{Y_q}_{q \in Q}\) computed by $\Kleene$ 
consists of finite sets of words. Here, the convergence relation 
$\Incl_{\rho_{\leqslant^l_{L_2}}}$ of $\Kleene$ coincides with the relation  
\(\mathord{\sqsubseteq_{\leqslant^{l}_{L_2}}}\) 
because 
$\Incl_{\rho_{\leqslant^l_{L_2}}}(X,Y)$ if{}f $\rho_{\leqslant^l_{L_2}}(X) \subseteq 
\rho_{\leqslant^l_{L_2}}(Y)$ if{}f 
 \(X \sqsubseteq_{\leqslant^{l}_{L_2}} Y\).

\begin{figure}[!ht]
\RemoveAlgoNumber
\begin{algorithm}[H]
\SetAlgorithmName{\AlgRegularW}{}
\SetSideCommentRight
\caption{Word-based algorithm for \(\lang{\cA} \subseteq L_2\)\label{alg:RegIncW}}

\KwData{FA \(\cA=\tuple{Q,\delta,I,F,\Sigma}\); decision procedure for \(u\mathrel{\inqm} L_2\); decidable left \(L_2\)-consistent wqo \(\mathord{\leqslant^l_{L_2}}\).}

\(\tuple{Y_q}_{q\in Q} := \Kleene (\sqsubseteq_{\leqslant^l_{L_2}}, \lambda \vect{X}\ldotp\vectarg{\epsilon}{F} \cup \Pre_{\cA}(\vect{X}), \vect{\varnothing})\)\;

\ForAll{\(q \in I\)}{
  \ForAll{\(u \in Y_q\)}{
    \lIf{\(u \notin L_2\)}{\Return \textit{false}}
  }
}
\Return \textit{true}\;
\end{algorithm}
\end{figure}

\begin{theorem}\label{theorem:quasiorderAlgorithm}
Let \(\cA=\tuple{Q,\delta,I,F,\Sigma}\) be a FA and \(L_2\in \wp(\Sigma^*)\) be a language such that: 
\begin{inparaenum}[\upshape(i\upshape)]
\item membership in $L_2$ is decidable; \label{theorem:quasiorderAlgorithm:membership}
\item there exists a decidable left \(L_2\)-consistent wqo on $\Sigma^*$.\label{theorem:quasiorderAlgorithm:decidableL}%
\end{inparaenum}%
Then, Algorithm \AlgRegularW decides the inclusion problem \(\lang{\cA} \subseteq L_2\).
\end{theorem}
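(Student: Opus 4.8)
The plan is to derive the statement as a direct instantiation of the general decision procedure of Theorem~\ref{theorem:FiniteWordsAlgorithmGeneral}, taking as closure $\rho \ud \rho_{\leqslant^l_{L_2}}$ the upward closure induced by the given decidable left $L_2$-consistent well-quasiorder $\leqslant^l_{L_2}$, and then checking that the algorithm produced by Theorem~\ref{theorem:FiniteWordsAlgorithmGeneral} is literally Algorithm~\AlgRegularW. So I would verify the four hypotheses (i)--(iv) of Theorem~\ref{theorem:FiniteWordsAlgorithmGeneral} one at a time, each coming from a single ingredient of the present statement, and conclude by matching the two algorithms.

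For hypothesis~(i) I would simply invoke Lemma~\ref{lemma:properties}: since $\leqslant^l_{L_2}$ is left $L_2$-consistent, part~(a) gives $\rho_{\leqslant^l_{L_2}}(L_2) = L_2$ and part~(b) gives backward completeness of $\rho_{\leqslant^l_{L_2}}$ for $\lambda X.\,aX$ for every $a \in \Sigma$. For hypothesis~(iii) I would use the equivalence $\rho_{\leqslant^l_{L_2}}(X) \subseteq \rho_{\leqslant^l_{L_2}}(Y) \Lra X \sqsubseteq_{\leqslant^l_{L_2}} Y$, which follows from reflexivity and transitivity of the quasiorder (and is exactly the equivalence used to rewrite the convergence relation just before the algorithm); for finite $X,Y$ the relation $X \sqsubseteq_{\leqslant^l_{L_2}} Y$ unfolds to $\forall x \in X,\exists y\in Y,\,y \leqslant^l_{L_2} x$, a finite Boolean combination of atomic tests $y \leqslant^l_{L_2} x$, each decidable because $\leqslant^l_{L_2}$ is assumed decidable. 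For hypothesis~(iv), since $L_2$ is a fixpoint of $\rho_{\leqslant^l_{L_2}}$ (it is upward closed), extensivity and monotonicity of the closure give $\rho_{\leqslant^l_{L_2}}(Y) \subseteq L_2 \Lra Y \subseteq L_2$; for finite $Y$ the latter is a finite conjunction of membership tests, decidable by the decidability of membership in $L_2$ (hypothesis~(i) of the present theorem).

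The step that genuinely uses the \emph{well}-quasiorder assumption, and is the main obstacle, is hypothesis~(ii): that the image $\rho_{\leqslant^l_{L_2}}(\wp(\Sigma^*))$, i.e.\ the family of $\leqslant^l_{L_2}$-upward-closed languages ordered by $\subseteq$, contains no infinite ascending chain. I would give the standard argument for upward-closed subsets of a wqoset: from a strictly increasing chain $U_0 \subsetneq U_1 \subsetneq \cdots$ pick $x_i \in U_{i+1}\setminus U_i$; the wqo property yields indices $i<j$ with $x_i \leqslant^l_{L_2} x_j$, and then $x_i \in U_{i+1}\subseteq U_j$ together with upward-closure of $U_j$ forces $x_j \in U_j$, contradicting $x_j \in U_{j+1}\setminus U_j$. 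This is precisely the ACC of $\tuple{\AC_{\tuple{\Sigma^*,\leqslant^l_{L_2}}},\sqsubseteq}$ recorded in Section~\ref{sec:background}, transported along the correspondence between antichains (minor sets) and upward-closed languages under the equivalence $X\sqsubseteq_{\leqslant^l_{L_2}} Y \Lra \rho_{\leqslant^l_{L_2}}(X)\subseteq\rho_{\leqslant^l_{L_2}}(Y)$.

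With (i)--(iv) established, Theorem~\ref{theorem:FiniteWordsAlgorithmGeneral} yields a decision procedure for $\lang{\cA}\subseteq L_2$, and it remains to identify it with Algorithm~\AlgRegularW. The convergence relation $\Incl_{\rho_{\leqslant^l_{L_2}}}$ driving $\Kleene$ coincides with $\sqsubseteq_{\leqslant^l_{L_2}}$ by the equivalence used for hypothesis~(iii), so the two invocations of $\Kleene$ agree. The terminal test $\Incl_{\rho_{\leqslant^l_{L_2}}}(\tuple{Y_q}_{q\in Q},\vectarg{L_2}{I})$ reduces componentwise to $\rho_{\leqslant^l_{L_2}}(Y_q)\subseteq L_2$ for $q\in I$ (the non-initial components target $\Sigma^*=\rho_{\leqslant^l_{L_2}}(\Sigma^*)$ and hold trivially), which by the reduction in hypothesis~(iv) is exactly $Y_q\subseteq L_2$, namely the final loop of Algorithm~\AlgRegularW testing $u\in L_2$ for every $u\in Y_q$ with $q\in I$. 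Hence Algorithm~\AlgRegularW decides $\lang{\cA}\subseteq L_2$.
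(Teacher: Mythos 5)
Your proposal is correct and follows essentially the same route as the paper's own proof: instantiate Theorem~\ref{theorem:FiniteWordsAlgorithmGeneral} with $\rho_{\leqslant^l_{L_2}}$, discharge hypotheses (i), (iii), (iv) via Lemma~\ref{lemma:properties}, the equivalence $\rho_{\leqslant^l_{L_2}}(X)\subseteq\rho_{\leqslant^l_{L_2}}(Y)\Lra X\sqsubseteq_{\leqslant^l_{L_2}}Y$, and the reduction of $\rho_{\leqslant^l_{L_2}}(Y)\subseteq L_2$ to finitely many membership tests, then identify the resulting procedure with Algorithm~\AlgRegularW. The only difference is that you spell out the standard witness-extraction argument for the ACC of upward-closed sets under a wqo (hypothesis (ii)), which the paper simply asserts; this is a welcome addition rather than a divergence.
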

\begin{proof}
Let  \(\leqslant_{L_2}^l\) be the 
decidable left \(L_2\)-consistent wqo on $\Sigma^*$.
Let us check that the hypotheses~(i)-(ii)-(iii) 
of Theorem~\ref{theorem:FiniteWordsAlgorithmGeneral} are satisfied.

\begin{enumerate}[\upshape(\rm i\upshape)]
\item It follows from hypothesis~(\ref{theorem:quasiorderAlgorithm:decidableL}) and Lemma~\ref{lemma:properties} that \(\leqslant_{L_2}^l\) is backward complete for left concatenation and satisfies \(\rho_{\leqslant_{L_2}^l}(L_2) = L_2\).

\item Since \(\leqslant_{L_2}^l\) is a well-quasiorder, it follows that \(\{\rho_{\leqslant_{L_2}^l}(S) \mid S \in \wp(\Sigma^*)\}\) does not contain infinite ascending chains.  

\item For finite sets for finite sets $X$ and $Y$, the abstract inclusion 
$\Incl_{\rho_{\leqslant^l_{L_2}}}(X,Y)$ $\Lra$ 
 \(X \sqsubseteq_{\leqslant^{l}_{L_2}} Y\) is decidable 
since \(\leqslant_{L_2}^l\) is a decidable wqo.
\end{enumerate}
Moreover, it turns out that 
the check $\Incl_{\rho_{\leqslant^l_{L_2}}}(\tuple{Y_q}_{q\in Q},\vectarg{L_2}{I})$ of Theorem~\ref{theorem:FiniteWordsAlgorithmGeneral} is decidable and 
is performed
by lines 2-5 of Algorithm~\AlgRegularW.
In fact, since, by Theorem~\ref{theorem:FiniteWordsAlgorithmGeneral}, 
$\Kleene (\sqsubseteq_{\leqslant^l_{L_2}}, \lambda \vect{X}\ldotp\vectarg{\epsilon}{F} \cup \Pre_{\cA}(\vect{X}), \vect{\varnothing})$ terminates after a finite number of steps 
with output $\tuple{Y_q}_{q\in Q}$, each set of words 
$Y_q$ of the output turns out to be finite. 
Also, since \(\vectarg{L_2}{I} = \tuple{\nullable{q \inqm I}{L_2}{\Sigma^*}}_{q \in Q})\), the abstract inclusion trivially holds for all components \(Y_q\) with \(q \notin I\).
Therefore, it suffices to check whether \(Y_q \sqsubseteq_{\leqslant^l_{L_2}} L_2\) holds for all $q\in I$. Since \(Y_q \sqsubseteq_{\leqslant^l_{L_2}} L_2\) if{}f $\rho_{\leqslant^l_{L_2}}(Y_q) \subseteq \rho_{\leqslant^l_{L_2}}(L_2)=L_2$ if{}f 
$Y_q \subseteq L_2$ and since $Y_q$ is a finite set,  
\(Y_q \sqsubseteq_{\leqslant^l_{L_2}} L_2\) 
can be decided by performing the finitely many membership check $u\inqm L_2$
at lines 2-5, where 
by hypothesis~(\ref{theorem:quasiorderAlgorithm:decidableL}), any membership 
check is decidable. Thus, hypothesis~(iv) 
of Theorem~\ref{theorem:FiniteWordsAlgorithmGeneral} is satisfied.

\noindent 
Summing up, we have shown that 
Algorithm~\AlgRegularW decides the inclusion \(\lang{\cA} \subseteq L_2\).
\end{proof}

\begin{remark}\rm 
It is worth noticing that in each   iteration of $\Kleene (\sqsubseteq_{\leqslant^l_{L_2}}, \lambda \vect{X}\ldotp\vectarg{\epsilon}{F} \cup \Pre_{\cA}(\vect{X}), \vect{\varnothing})$
in Algorithm~\AlgRegularW, in the current
vector $\tuple{Y_q}_{q\in Q}$ one
could safely remove from a component $Y_q$ any word $w\in Y_q$ such that there
exists a word $u\in Y_q$ such that $u \leqslant^l_{L_2} w$ and $u\neq w$. 
This enables replacing each finite set $Y_q$ occurring in Kleene iterates 
with a minor subset $\minor{Y_q}$ w.r.t.\ $\leqslant^l_{L_2}$. 
This replacement is correct, i.e.\ Theorem~\ref{theorem:quasiorderAlgorithm} still holds
for the corresponding modified language inclusion algorithm, because an
inclusion check $X\sqsubseteq_{\leqslant^l_{L_2}}Y$ holds if{}f the check
$\minor{X}\sqsubseteq_{\leqslant^l_{L_2}}\minor{Y}$ for the corresponding minor subsets 
holds. 
\eox
\end{remark}

\subsubsection{Right Concatenation}
Following Section~\ref{rightwqo},
a symmetric version, called \AlgRegularWr, of the algorithm \AlgRegularW (and of Theorem~\ref{theorem:quasiorderAlgorithm}) for \emph{right} \(L_2\)-consistent wqos can be given  as follows. 
\begin{figure}[H]
\RemoveAlgoNumber
\begin{algorithm}[H]
\SetAlgorithmName{\AlgRegularWr}{}

\caption{Word-based algorithm for \(\lang{\cA} \subseteq L_2\)} \label{alg:RegIncWr}

\KwData{FA \(\cA=\tuple{Q,\delta,I,F,\Sigma}\); decision procedure for \(u\inqm L_2\); decidable right \(L_2\)-consistent wqo \(\mathord{\leqslant^r_{L_2}}\).}

\(\tuple{Y_q}_{q\in Q} := \Kleene (\sqsubseteq_{\leqslant^r_{L_2}},\lambda \vect{X}\ldotp\vectarg{\epsilon}{I}\cup \Post_{\cA}(\vect{X}), \vect{\varnothing})\)\;

\ForAll{\(q \in F\)}{
	\ForAll{\(u \in Y_q\)} {
		\lIf{\(u \notin L_2\)}{\Return \textit{false}}
	}
}
\Return \textit{true}\;
\end{algorithm}
\end{figure}

\begin{theorem}\label{theorem:quasiorderAlgorithmR}
Let \(\cA=\tuple{Q,\delta,I,F,\Sigma}\) be a FA and \(L_2\in \wp(\Sigma^*)\) be a language such that: 
\begin{inparaenum}[\upshape(i\upshape)]
\item membership in $L_2$ is decidable; \label{theorem:quasiorderAlgorithmR:membership}
\item there exists a decidable right \(L_2\)-consistent wqo on $\Sigma^*$.\label{theorem:quasiorderAlgorithmR:decidableL}%
\end{inparaenum}%
Then, Algorithm \AlgRegularWr decides the inclusion problem \(\lang{\cA} \subseteq L_2\).
\end{theorem}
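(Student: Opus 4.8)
The plan is to mirror the proof of Theorem~\ref{theorem:quasiorderAlgorithm}, exploiting the left/right symmetry that has already been set up in Section~\ref{rightwqo}. Concretely, I would invoke the symmetric version of Theorem~\ref{theorem:FiniteWordsAlgorithmGeneral} for the transformer $\lambda \vect{X}\ldotp\vectarg{\epsilon}{I}\cup \Post_{\cA}(\vect{X})$, instantiated with the closure $\rho_{\leqslant^r_{L_2}}$ induced by the decidable right $L_2$-consistent wqo $\leqslant^r_{L_2}$ supplied by hypothesis~(\ref{theorem:quasiorderAlgorithmR:decidableL}). The entire task then reduces to verifying the four hypotheses of that framework theorem.

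First I would check backward completeness together with exactness of $L_2$: by Lemma~\ref{lemma:properties} in its right-handed reading, right $L_2$-consistency of $\leqslant^r_{L_2}$ is equivalent to $\rho_{\leqslant^r_{L_2}}(L_2)=L_2$ together with backward completeness of $\rho_{\leqslant^r_{L_2}}$ for $\lambda X\ldotp Xa$ for every $a\in\Sigma$; Theorem~\ref{theorem:backCompleteRight} then lifts this to backward completeness for $\lambda \vect{X}\ldotp\vectarg{\epsilon}{I}\cup \Post_{\cA}(\vect{X})$. Next, since $\leqslant^r_{L_2}$ is a wqo, its induced image $\rho_{\leqslant^r_{L_2}}(\wp(\Sigma^*))$ contains no infinite ascending chain, which delivers the termination condition. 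For the two decidability conditions I would use $\Incl_{\rho_{\leqslant^r_{L_2}}}(X,Y) \Lra X \sqsubseteq_{\leqslant^r_{L_2}} Y$: for finite $X,Y$ this is decidable because $\leqslant^r_{L_2}$ is decidable, and for the target check $\rho_{\leqslant^r_{L_2}}(Y)\subseteqm L_2$ I would observe that $Y \sqsubseteq_{\leqslant^r_{L_2}} L_2$ holds iff $Y\subseteq L_2$ (using $\rho_{\leqslant^r_{L_2}}(L_2)=L_2$), which for finite $Y$ reduces to finitely many membership tests $u\inqm L_2$, each decidable by hypothesis~(\ref{theorem:quasiorderAlgorithmR:membership}).

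Finally I would match the abstract procedure to the code of Algorithm~\AlgRegularWr. The symmetric equivalence $\lang{\cA}\subseteq L_2 \Lra \lfp(\lambda \vect{X}\ldotp\vectarg{\epsilon}{I} \cup \Post_{\cA}(\vect{X})) \subseteq \vectarg{L_2}{F}$ from Section~\ref{rightwqo} guarantees that the $\Kleene$ call on line~1 computes a finite vector $\tuple{Y_q}_{q\in Q}$ whose abstract inclusion into $\vectarg{L_2}{F}$ is equivalent to $\lang{\cA}\subseteq L_2$. Because $\vectarg{L_2}{F}$ has component $\Sigma^*$ at every non-final state, the inclusion is trivial off $F$, so only the components indexed by final states matter; this is precisely why lines~2--5 range over $q\in F$ (rather than over $q\in I$ as in \AlgRegularW), reducing the test to $Y_q\subseteq L_2$ via finitely many membership checks. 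The only real care needed is this bookkeeping of indices — initial versus final states and $\Pre$ versus $\Post$ — since everything else is a verbatim transposition of the left-handed argument; I expect no genuine new mathematical obstacle.
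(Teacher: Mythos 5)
Your proposal is correct and follows exactly the route the paper intends: the paper gives no separate proof for Theorem~\ref{theorem:quasiorderAlgorithmR}, presenting it as the verbatim left/right transposition of Theorem~\ref{theorem:quasiorderAlgorithm} via the symmetric versions of Theorem~\ref{theorem:FiniteWordsAlgorithmGeneral}, Lemma~\ref{lemma:properties} and Theorem~\ref{theorem:backCompleteRight}, with $\Post_{\cA}$, $\vectarg{\epsilon}{I}$ and the final-state check replacing their left-handed counterparts. Your verification of the four hypotheses and the index bookkeeping (initial versus final states) matches the paper's argument for the left case point for point.
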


In the following, we will consider different quasiorders on $\Sigma^*$ and we will show that they fulfill the requirements of Theorem~\ref{theorem:quasiorderAlgorithm}, therefore yielding algorithms for solving language inclusion problems.

\subsection{Nerode Quasiorders}\label{sec:nerode}
\label{sub:the_left_nerode_quasi_order_relative_to_a_language}
The notions of \emph{left}  and \emph{right quotient} of a language \(L \in \wp(\Sigma^*)\) w.r.t.\ a word \(w\in \Sigma^*\) are standard:  
\begin{align*}
w^{-1}L &\ud \{u\in \Sigma^* \mid  wu\in L\} \,, & 
Lw^{-1} &\ud \{u\in \Sigma^* \mid  uw\in L\} \enspace .
\end{align*}
Correspondingly, let us define the following quasiorder relations on \(\Sigma^*\):
\begin{align}\label{def-Nerodeqo}
	u\leqq_L^l v &\udr\; L u^{-1} \subseteq L v^{-1} \,,&
	u\leqq_L^r v &\udr\; u^{-1} L \subseteq v^{-1} L \enspace . 
\end{align}
De Luca and Varricchio \citeyearpar[Section~2]{deLuca1994} call them, resp., the \emph{left} 
($\leqq_L^l$)
and \emph{right} ($\leqq_L^r$) 
\emph{Nerode quasiorders relative to \(L\)}. 
The following result shows that Nerode quasiorders are the weakest (i.e., greatest w.r.t.\ set inclusion of binary relations) \(L_2\)-consistent quasiorders for which the algorithm \AlgRegularW can be instantiated to decide a language inclusion \(\lang{\cA}\subseteq L_2\).

\begin{lemma}\label{lemma:leftrightnerodegoodqo}
Let $L\in \wp(\Sigma^*)$.  
\begin{compactenum}[\upshape(\rm a\upshape)]
\item \(\mathord{\leqq_L^l}\) and {}
\(\mathord{\leqq_L^r}\) are, resp., left and right 
\(L\)-consistent quasiorders.
	If $L$ is regular then, additionally, \(\mathord{\leqq_L^l}\) and \(\mathord{\leqq_L^r}\)
are decidable wqos. 	\label{lemma:leftrightnerodegoodqo:Consistent}
\item If \(\mathord{\leqslant}\) is a left (resp.\ right) \(L\)-consistent quasiorder on $\Sigma^*$ then \( \rho_{\leqq_L^l}(\wp(\Sigma^*)) \subseteq \rho_{\leqslant}(\wp(\Sigma^*)) \) (resp.\ \( \rho_{\leqq_L^r}(\wp(\Sigma^*)) \subseteq \rho_{\leqslant}(\wp(\Sigma^*)) \)).\label{lemma:leftrightnerodegoodqo:Incl}
\end{compactenum}
\end{lemma}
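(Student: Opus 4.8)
The plan is to treat the two parts in turn, and throughout to argue only for the left quasiorder $\leqq_L^l$, since the right case is entirely symmetric (swap left/right quotients and left/right monotonicity). For part~(a), I would first observe that $\leqq_L^l$ is a quasiorder \emph{for free}: it is defined by $u \leqq_L^l v \Lra Lu^{-1} \subseteq Lv^{-1}$, so reflexivity and transitivity are inherited directly from set inclusion. Next I would verify the two conditions of Definition~\ref{def:LConsistent}. For condition~(a), $\mathord{\leqq_L^l} \cap (L \times \neg L) = \varnothing$, the key remark is that $\epsilon \in Lw^{-1}$ iff $w \in L$; hence if $u \in L$ and $u \leqq_L^l v$ then $\epsilon \in Lu^{-1} \subseteq Lv^{-1}$, so $v \in L$. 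For left-monotonicity I would use the quotient identity $L(ax)^{-1} = (Lx^{-1})a^{-1}$ for every $a \in \Sigma$ and $x \in \Sigma^*$ (both sides equal $\{w \mid wax \in L\}$). Since right quotient by $a$ is monotone for $\subseteq$, from $Lx_1^{-1} \subseteq Lx_2^{-1}$ we get $L(ax_1)^{-1} \subseteq L(ax_2)^{-1}$, i.e.\ $x_1 \leqq_L^l x_2 \Ra a x_1 \leqq_L^l a x_2$, which is precisely left-monotonicity in the form~\eqref{def-leftmon}.

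For the regular case of part~(a), I would invoke the Myhill--Nerode characterisation in its left form: $L$ is regular iff the family of left quotients $\{Lw^{-1} \mid w \in \Sigma^*\}$ is finite. Decidability of $\leqq_L^l$ then follows because each $Lw^{-1}$ is effectively a regular language, computable from a finite automaton for $L$, and inclusion between regular languages is decidable. For the well-quasiorder property I would note that the equivalence induced by $\leqq_L^l$ identifies $u$ and $v$ exactly when $Lu^{-1} = Lv^{-1}$, so $\leqq_L^l$ has only finitely many equivalence classes; and a quasiorder with finitely many classes is automatically a wqo, since in any infinite sequence $\{x_i\}$ some class repeats, giving $i<j$ with $x_i \leqq_L^l x_j$.

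For part~(b), the heart of the argument is the relational inclusion $\mathord{\leqslant} \subseteq \mathord{\leqq_L^l}$ whenever $\leqslant$ is a left $L$-consistent quasiorder. Assuming $u \leqslant v$, I would show $Lu^{-1} \subseteq Lv^{-1}$: for any $w$ with $wu \in L$, left-monotonicity extended to words (the induction following~\eqref{def-leftmon}) gives $wu \leqslant wv$, and then condition~(a) of $L$-consistency, $\mathord{\leqslant} \cap (L \times \neg L) = \varnothing$, together with $wu \in L$, forces $wv \in L$; hence $u \leqq_L^l v$. From $\mathord{\leqslant} \subseteq \mathord{\leqq_L^l}$ it follows that $\rho_{\leqslant}(X) \subseteq \rho_{\leqq_L^l}(X)$ for every $X$ (a larger quasiorder can only add more words to an upward closure), and by the characterisation of coarser abstractions recalled in Section~\ref{subsec:abstract-interpretation} this is exactly $\rho_{\leqq_L^l}(\wp(\Sigma^*)) \subseteq \rho_{\leqslant}(\wp(\Sigma^*))$; equivalently, every $\leqq_L^l$-upward closed set is $\leqslant$-upward closed.

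The routine manipulations --- the quotient identity and the $\epsilon$-membership trick --- are all short, and part~(b) reduces cleanly to left-monotonicity plus condition~(a). The only genuinely substantive ingredient is the regular case of part~(a), where both the wqo and the decidability claims rest on finiteness of the set of left quotients, i.e.\ on Myhill--Nerode. I expect no real obstacle beyond keeping the left/right orientation of quotients and concatenation consistent throughout, which is the one place a sign error could silently slip in.
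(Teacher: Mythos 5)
Your proposal is correct and follows the same overall skeleton as the paper's proof --- verify the two conditions of Definition~\ref{def:LConsistent} using the $\epsilon$-membership observation, then derive part~(b) from the relational inclusion $\mathord{\leqslant}\subseteq\mathord{\leqq_L^l}$ --- but it is substantially more self-contained. The paper delegates three of the key facts to de Luca and Varricchio: left/right monotonicity of the Nerode quasiorders, the wqo property when $L$ is regular, and the maximality of $\leqq_L^l$ among left $L$-consistent quasiorders (which is what drives part~(b)). You instead prove each of these directly, and your arguments are all sound: the quotient identity $L(ax)^{-1}=(Lx^{-1})a^{-1}$ together with monotonicity of right quotienting gives exactly the letter-form of left-monotonicity in~\eqref{def-leftmon}; the Myhill--Nerode finiteness of $\{Lw^{-1}\mid w\in\Sigma^*\}$ for regular $L$ yields both decidability (inclusion of effectively computable regular languages) and the wqo property (finitely many equivalence classes forces a repeat, hence a comparable pair, in any infinite sequence); and for part~(b) the chain $wu\in L$, $wu\leqslant wv$ (by word-level left-monotonicity), $wv\in L$ (by condition~(a) of consistency) establishes $Lu^{-1}\subseteq Lv^{-1}$, which is precisely the maximality statement the paper cites. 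The final step from $\mathord{\leqslant}\subseteq\mathord{\leqq_L^l}$ to $\rho_{\leqslant}(X)\subseteq\rho_{\leqq_L^l}(X)$ and thence to the image inclusion is the same in both. What your route buys is independence from the external reference; what it costs is nothing, since each of the unpacked arguments is only a few lines. The one caveat you flagged yourself --- keeping the left/right orientation of quotients straight --- is handled correctly throughout.
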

\begin{proof}
Let us consider point~(\ref{lemma:leftrightnerodegoodqo:Consistent}).
De Luca and Varricchio~\citeyearpar[Section~2]{deLuca1994} observe that \(\mathord{\leqq_L^l}\) and \(\mathord{\leqq_L^r}\) are,
resp., left and right monotonic. 
Moreover, De Luca and Varricchio~\citeyearpar[Theorem~2.4]{deLuca1994} show that  if 
$L$ is regular then both \(\mathord{\leqq_L^l}\) and \(\mathord{\leqq_L^r}\) are wqos. 
Let us also observe that given \(u \in L\) and \(v \notin L\) we have that \(\epsilon \in Lu^{-1}\) and \(\epsilon \in u^{-1}L\) while \(\epsilon \notin Lv^{-1}\) and \(\epsilon \notin v^{-1}L\). Hence, \(\mathord{\leqq_L^l}\) (\(\mathord{\leqq_L^r}\)) is a left (right) \(L\)-consistent quasiorder.
Finally, if $L$ is regular then both relations are 
clearly decidable.

\noindent
Let us now consider point (\ref{lemma:leftrightnerodegoodqo:Incl}) for the left case (the right case is symmetric). 
By the characterization of left consistent quasiorders given by Lemma~\ref{lemma:properties}, 
De Luca and Varricchio~\citeyearpar[Section~2, point~4]{deLuca1994} observe that \(\mathord{\leqq_L^l}\) is maximum in the set of all left \(L\)-consistent quasiorders, i.e.\ every left \(L\)-consistent quasiorder \(\leqslant\) is such that 
	\(x \leqslant y \Ra x \leqq_L^l y \).
As a consequence, \(\rho_{\leqslant}(X) \subseteq \rho_{\leqq_L^l}(X)\) holds for all \(X\in \wp(\Sigma^*)\), namely, \( \rho_{\leqq_L^l}(\wp(\Sigma^*)) \subseteq \rho_{\leqslant}(\wp(\Sigma^*)) \).
\end{proof}

This allows us to derive a first instantiation of Theorem~\ref{theorem:quasiorderAlgorithm}. %
Because membership is decidable for regular languages $L_2$, Lemma~\ref{lemma:leftrightnerodegoodqo}~(\ref{lemma:leftrightnerodegoodqo:Consistent}) for \(\leqq^l_{L_2}\) implies that the hypotheses (\ref{theorem:quasiorderAlgorithm:membership}) and (\ref{theorem:quasiorderAlgorithm:decidableL}) of Theorem~\ref{theorem:quasiorderAlgorithm} are satisfied, so that the algorithm \AlgRegularW instantiated to \(\leqq^l_{L_2}\)
decides the inclusion \(\lang{\cA} \subseteq L_2\) when $L_2$ is regular. 
Furthermore, under these hypotheses,
Lemma~\ref{lemma:leftrightnerodegoodqo}~(\ref{lemma:leftrightnerodegoodqo:Incl}) shows that \(\leqq_{L_2}^l\) is the weakest
left \(L_2\)-consistent quasiorder relation on $\Sigma^*$ for which the algorithm \AlgRegularW can be instantiated 
for deciding an inclusion $\lang{\cA}\subseteq L_2$.

\begin{figure}[t]
    \centering
  \hfill\begin{minipage}{0.45\textwidth}
  \begin{tikzpicture}[->,>=stealth',shorten >=1pt,auto,node distance=5mm and 1cm,thick,initial text=]
  \tikzstyle{every state}=[scale=0.75,fill=blue!20,draw=blue!60,text=black]

  \node[initial, state] (1) {\(q_1\)};
  \node[accepting, state] (2) [right=of 1] {\(q_2\)};
  \node (A1) [left=of 1] {$\cA_1$};
  
  \path (1) edge node {\(a, b, c\)} (2)
        (1) edge[loop, in=60, out=120, looseness=5] node {\(a\)} (1)
            ;
  \end{tikzpicture}
  \end{minipage}
  \begin{minipage}{0.45\textwidth}
  \begin{tikzpicture}[->,>=stealth',shorten >=1pt,auto,node distance=6mm and 1cm,thick,initial text=]
  \tikzstyle{every state}=[scale=0.75,fill=blue!20,draw=blue!60,text=black]
  
  \node[initial,state] (1) {\(q_1\)};
  \node[state] (2) [right=of 1] {\(q_2\)};
  \node[accepting, state] (5) [right=of 2] {\(q_5\)};
  \node[state] (3) [above=of 2] {\(q_3\)};
  \node[state] (4) [below=of 2] {\(q_4\)};
  \node (A2) [left=of 1] {$\cA_2$};
  
  \path (1) edge[loop, in=80, out=140, looseness=5] node[above]  {\(a\)} (1)
        (1) edge node {\(a\)} (2)
        (1) edge[bend left=30] node {\(a\)} (3)
        (1) edge[bend right=30] node[below] {\(a,b\)} (4)
        (3) edge[bend left=30] node {\(a\)} (5)
        (3) edge[loop, in=60, out=120, looseness=5] node {\(a,b\)} (3)
        (2) edge node {\(c\)} (5)
        (2) edge[loop, in=60, out=120, looseness=4] node {\(a\)} (2)
        (4) edge[bend right=30] node[below] {\(b\)} (5)
          ;
  \end{tikzpicture}
  \end{minipage}
  \caption{Two automata \(\cA_1\) (left) and \(\cA_2\) (right) generating the regular languages \(\lang{\cA_1} = a^*(a+b+c)\) and \(\lang{\cA_2}= a^* (a(a+b)^*a+a^+c+ab+bb)\).}
    \label{fig:B}
  \end{figure}

\begin{example}\label{example:Word_Regular_LInc}
We illustrate the use of the left Nerode quasiorder in Algorithm~\AlgRegularW 
for solving the language inclusion \(\lang{\cA_1} \subseteq \lang{\cA_2}\), where \(\cA_1\) and \(\cA_2\) are the FAs shown in Figure~\ref{fig:B}.
The equations for  \(\cA_1\) are as follows:
\[
  \Eqn(\cA_1)=\begin{cases}
    X_1 = \varnothing \cup aX_1 \cup aX_2 \cup bX_2 \cup cX_2\\
    X_2 = \{\epsilon\}
  \end{cases} \enspace .
\]

\noindent
We have the following quotients (among others) for \(L = \lang{\cA_2}=a^* (a(a+b)^*a+a^+c+ab+bb)\): 
\begin{align*}
L \epsilon^{-1} = \; & a^* (a(a+b)^*a+a^+c+ab+bb) & L b^{-1} = \; & a^* (a + b)  \\
L a^{-1} = \; & a^* a(a+b)^* =a^+(a+b)^* &  L c^{-1} = \; & a^* a^+ = a^+\\
L w^{-1} = \; & a^* \text{ if{}f } w \in (a(a+b)^*a+ac+ab+bb) \span
\end{align*}

\noindent
Hence, among others, the following relations hold: 
\(c \leqq_L^l a\), \(c \leqq_L^l b\) and \(c \leqq_L^l w\)
 for all $w \in (a(a+b)^*a+ac+ab+bb)$.
Then, let us show the computation of the Kleene iterates performed by the Algorithm \AlgRegularW.
\begin{align*}
\vect{Y}^{(0)} &= \vect{\varnothing}\\
\vect{Y}^{(1)} &= \vectarg{\epsilon}{F} = \tuple{\varnothing, \{\epsilon\}} \\
\vect{Y}^{(2)} &= \vectarg{\epsilon}{F} \cup \Pre_{\cA_1}(\vect{Y}^{(1)}) 
= \tuple{\varnothing, \{\epsilon\}} \cup \tuple{\varnothing \cup a\varnothing \cup a\{\epsilon\} \cup b\{\epsilon\} \cup c\{\epsilon\}, \{\epsilon\}}\\
&= \tuple{\{a,b,c\}, \{\epsilon\}} \\
\vect{Y}^{(3)} &= \vectarg{\epsilon}{F} \cup \Pre_{\cA_1}(\vect{Y}^{(2)}) = 
\tuple{\varnothing, \{\epsilon\}} \cup \tuple{\varnothing \cup a\{a,b,c\} \cup a\{\epsilon\} \cup b\{\epsilon\} \cup c\{\epsilon\}, \{\epsilon\}}\\
&=
\tuple{\{aa,ab,ac, a, b, c\}, \{\epsilon\}} 
\end{align*}
It turns out that 
$\tuple{\{aa,ab,ac, a, b, c\}, \{\epsilon\}} \sqsubseteq_{\leqq_L^l} \tuple{\{a,b,c\}, \{\epsilon\}}$ because $c \leqq_L^l aa$, $c \leqq_L^l ab$ and 
$c \leqq_L^l ac$ hold, so that $\Kleene$ stops  with $\vect{Y}^{(3)}$ and outputs 
$\vect{Y}=\tuple{\{a,b,c\}, \{\epsilon\}}$. 
Since $c\in \vect{Y}_1$ and \(c \notin \lang{\cA_2}\), the Algorithm~\AlgRegularW correctly concludes that \(\lang{\cA_1} \subseteq \lang{\cA_2}\) does not hold. \eox
\end{example}

\subsubsection{On the Complexity of Nerode quasiorders}\label{Nerode-remark}\rm 
For the inclusion problem between languages generated by finite automata, deciding the 
(left or right) Nerode quasiorder relation between words  can be easily shown to be as hard as the language inclusion problem itself, which is PSPACE-complete.
In fact, given the automata \(\cA_1=(Q_1,\delta_1,I_1,F_1,\Sigma)\) and \(\cA_2=(Q_2,\delta_2,I_2,F_2,\Sigma)\), one can define the union automaton \(\cA_3\ud (Q_1\cup Q_2\cup\{q^{\iota}\}, \delta_3, \{q^{\iota}\}, F_1\cup F_2)\) where \(\delta_3 \) maps \((q^\iota,a)\) to \(I_1\), \( (q^\iota,b) \) to \(I_2\) and behaves like \(\delta_1\) or \(\delta_2\) elsewhere. Then, it turns out that \(a \leqq^r_{\lang{\cA_3}} b \Lra a^{-1}\lang{\cA_3} \subseteq b^{-1}\lang{\cA_3} \Lra \lang{\cA_1}\subseteq \lang{\cA_2}\).

Also, for the inclusion problem of a language generated by an 
automaton within the trace set of a one-counter net (cf.\ Section~\ref{sub:containment_in_one_counter_languages}), the right Nerode quasiorder is a right language-consistent well-quasiorder but it turns out to be undecidable (cf.~Lemma~\ref{lemma:RightNerodeOcnwqo}).

\subsection{State-based Quasiorders}\label{subsec:state-qos}
Consider an inclusion problem \(\lang{\cA_1} \subseteq \lang{\cA_2}\) where \(\cA_1\) and \(\cA_2\) are FAs.
In the following, we study a class of well-quasiorders based on \(\cA_2\), that we call state-based quasiorders. 
These quasiorders are strictly stronger (i.e., lower w.r.t.\ set inclusion of binary relations) than the Nerode quasiorders defined in Section~\ref{sub:the_left_nerode_quasi_order_relative_to_a_language} and sidestep the untractability or undecidability of Nerode quasiorders (cf.\ Section~\ref{Nerode-remark}) yet allowing to define an algorithm solving the language inclusion \(\lang{\cA_1} \subseteq \lang{\cA_2}\).

\subsubsection{Inclusion in Regular Languages.}
\label{sub:automata_based}

We  define the 
quasiorders \(\leq^l_{\cA}\)  and \(\leq^r_{\cA}\) on \(\Sigma^*\) 
induced by a FA \(\cA=\tuple{Q,\delta,I,F,\Sigma}\)
as follows: for all $u,v\in \Sigma^*$,
\begin{align}\label{eqn:state-qo}
u \leq^l_{\cA} v & \udr \pre^{\cA}_{u}(F) \subseteq \pre^{\cA}_{v}(F)
&
u \leq^r_{\cA} v & \udr \post^{\cA}_{u}(I) \subseteq \post^{\cA}_{v}(I) 
\end{align}
where, for all $X\in \wp(Q)$,
\(\pre_u^{\cA}(X) \ud \{q\in Q\mid  u\in W^{\cA}_{q,X}\}\)
and \(\post_u^{\cA}(X) \ud \{q'\in Q\mid  u\in W^{\cA}_{X,q'}\}\) denote the standard predecessor/successor state 
transformers in $\cA$.
The superscripts in $\leq^l_{\cA}$ and $\leq^r_{\cA}$ stand, resp., for left/right because the following result holds. 

\begin{lemma}\label{lemma:LAconsistent}
The relations \(\mathord{\leq^l_{\cA}}\) and \(\mathord{\leq^r_{\cA}}\) are, resp., decidable left and right 
\(\lang{\cA}\)-consistent wqos.
\end{lemma}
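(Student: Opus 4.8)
The plan is to verify directly the two defining conditions of an $L$-consistent quasiorder from Definition~\ref{def:LConsistent} for $L = \lang{\cA}$, and then to argue the well-quasiorder property and decidability separately. Reflexivity and transitivity of both relations are immediate, since each is the pullback of the partial order $\subseteq$ on the \emph{finite} powerset $\wp(Q)$ along the map $u \mapsto \pre^{\cA}_{u}(F)$ (resp.\ $u \mapsto \post^{\cA}_{u}(I)$). I would carry out the argument in full for the left relation $\mathord{\leq^l_{\cA}}$ and then observe that $\mathord{\leq^r_{\cA}}$ is entirely symmetric, swapping $\pre/F/I$ for $\post/I/F$ and left for right.

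For the consistency condition $\mathord{\leq^l_{\cA}} \cap (\lang{\cA} \times \neg\lang{\cA}) = \varnothing$, the key observation is that membership of a word is witnessed by the initial states: $u \in \lang{\cA} \Lra I \cap \pre^{\cA}_{u}(F) \neq \varnothing$, because $q \in \pre^{\cA}_{u}(F)$ means exactly $u \in W^{\cA}_{q,F}$, i.e.\ some final state is reachable from $q$ by reading $u$. Hence if $u \leq^l_{\cA} v$ and $u \in \lang{\cA}$, then $\varnothing \neq I \cap \pre^{\cA}_{u}(F) \subseteq I \cap \pre^{\cA}_{v}(F)$, so $v \in \lang{\cA}$.

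For left-monotonicity, by the reformulation~\eqref{def-leftmon} it suffices to show $u \leq^l_{\cA} v \Ra a u \leq^l_{\cA} a v$ for every $a \in \Sigma$. The crucial step here is the transition-unfolding identity $\pre^{\cA}_{au}(F) = \{q \mid \exists q' \in \delta(q,a),\ q' \in \pre^{\cA}_{u}(F)\}$, obtained by peeling off the first symbol of $au$; that is, $\pre^{\cA}_{au}(F)$ is the image of $\pre^{\cA}_{u}(F)$ under the one-step predecessor operator for $a$, which is monotonic w.r.t.\ $\subseteq$. Consequently $\pre^{\cA}_{u}(F) \subseteq \pre^{\cA}_{v}(F)$ propagates to $\pre^{\cA}_{au}(F) \subseteq \pre^{\cA}_{av}(F)$, i.e.\ $au \leq^l_{\cA} av$, which gives left-monotonicity.

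The well-quasiorder property is the easiest point and rests purely on finiteness: since $\wp(Q)$ is finite, any infinite sequence $\{u_i\}_{i\in\bN}$ must contain indices $i<j$ with $\pre^{\cA}_{u_i}(F) = \pre^{\cA}_{u_j}(F)$ by the pigeonhole principle, whence $u_i \leq^l_{\cA} u_j$. Decidability is likewise routine: given $u,v$ one computes the finite sets $\pre^{\cA}_{u}(F), \pre^{\cA}_{v}(F) \subseteq Q$ and tests the inclusion. I do not expect a genuine obstacle; the only point needing care is correctly stating the transition-unfolding identity for $\pre^{\cA}_{au}$ (and its mirror $\post^{\cA}_{ua}(I) = \{q' \mid \exists q'' \in \post^{\cA}_{u}(I),\ q' \in \delta(q'',a)\}$), since it is precisely this compositional form that makes the monotonicity argument go through.
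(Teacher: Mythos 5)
Your proposal is correct and follows essentially the same route as the paper's proof: the consistency condition via the observation that membership is witnessed by an initial state in $\pre^{\cA}_u(F)$, left-monotonicity via the compositional identity $\pre^{\cA}_{au}(F)=\pre^{\cA}_a(\pre^{\cA}_u(F))$ together with monotonicity of $\pre^{\cA}_a$, and the wqo/decidability claims from finiteness and computability of $\wp(Q)$. The only cosmetic differences are that you phrase consistency contrapositively and make the pigeonhole argument for the wqo property explicit, which the paper leaves implicit.
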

\begin{proof}
Since, for every \(u \in \Sigma^*\), \(\pre^{\cA}_u(F)\) is a finite and computable set, it turns out that \(\mathord{\leq^l_{\cA}}\) is a decidable wqo. 
Let us check that \(\mathord{\leq^l_{\cA}}\) is left \(\lang{A}\)-consistent according to Definition~\ref{def:LConsistent}~(\ref{eq:LConsistentPrecise})-(\ref{eq:LConsistentmonotonic}). 

\noindent
\eqref{eq:LConsistentPrecise}	By picking \(u\in \lang{\cA}\) and \(v\notin \lang{\cA}\) we have that \(\pre^{\cA}_u(F)\) contains some initial state while \(\pre^{\cA}_v(F)\) does not, hence \(u \nleq^l_{\cA} v\).

\noindent
\eqref{eq:LConsistentmonotonic} Let us check that $\leq^l_{\cA}$ is left monotonic.
Observe that, for all $x\in \Sigma^*$, $\pre^\cA_x$ is a monotonic function and that 
\begin{eqnarray}
\pre^{\cA}_{uv} = \pre^{\cA}_{u} \comp \pre^{\cA}_v \enspace .\label{eq:prepre}
\end{eqnarray}
Therefore, for all $x_1,x_2\in \Sigma^*$ and $a\in \Sigma$, 
\begin{align*}
x_1 \leq^l_{\cA} x_2 & \Ra  \quad\text{~[by def.\ of \(\leq^l_{\cA}\)]} \\
\pre^{\cA}_{x_1}(F) \subseteq \pre^{\cA}_{x_2}(F) & \Ra  \quad\text{~[as $\pre^\cA_a$ is monotonic]} \\
\pre^{\cA}_{a}(\pre^{\cA}_{x_1}(F)) \subseteq \pre^{\cA}_{a}(\pre^{\cA}_{x_2}(F)) & \Lra  \quad\text{~[by~\eqref{eq:prepre}]} \\
\pre^{\cA}_{ax_1}(F) \subseteq \pre^{\cA}_{ax_2}(F) & \Lra  \quad\text{~[by def.\ of \(\leq^l_{\cA}\)]} \\
ax_1 \leq^l_{\cA} ax_2 & \enspace . 
\end{align*}
\noindent
The proof that \(\leq_{\cA}^r\) is a decidable right \(\lang{\cA}\)-consistent quasiorder is symmetric. 
\end{proof}

As a consequence, Theorem~\ref{theorem:quasiorderAlgorithm} applies to the wqo \(\mathord{\leq^l_{\cA_2}}\) (and 
\(\mathord{\leq^r_{\cA_2}}\)), so that one can instantiate the algorithm \AlgRegularW to  $\mathord{\leq^l_{\cA_2}}$ for deciding 
an inclusion $\lang{\cA_1}\subseteq \lang{\cA_2}$. 

Turning back to the left Nerode wqo
$\leqq_{\lang{\cA_2}}^l$, it turns out that the following equivalences hold:
\begin{align*}
u \leqq_{\lang{\cA_2}}^l v  \Lra \lang{\cA_2}u^{-1} \subseteq \lang{\cA_2} v^{-1} 
\Lra W_{I,\pre^{\cA_2}_u(F)} \subseteq W_{I,\pre^{\cA_2}_v(F)} \enspace .
\end{align*}
Since \(\pre^{\cA_2}_u(F) \subseteq \pre^{\cA_2}_v(F)$ entails  $W_{I,\pre^{\cA_2}_u(F)} \subseteq W_{I,\pre^{\cA_2}_v(F)}\), it follows that \(u \leq_{\cA_2}^l v \Ra u \leqq_{\lang{\cA_2}}^l v\) and, in turn,
\( \rho_{\leqq_{\lang{\cA_2}}^l}(\wp(\Sigma^*)) \subseteq \rho_{\leq^l_{\cA_2}}(\wp(\Sigma^*))\).

\begin{example}\label{example:Word_Regular_LInc:states}
We illustrate the left state-based quasiorder by using it to solve the language inclusion \(\lang{\cA_1} \subseteq \lang{\cA_2}\) of Example~\ref{example:Word_Regular_LInc}.
We have, among others, the following set of predecessors of $F_{\cA_2}$:
\begin{align*}
\pre_{\epsilon}^{\cA_2}(F_{\cA_2}) & = \{q_5\} & \pre_{a}^{\cA_2}(F_{\cA_2}) & = \{q_3\} & \pre_{b}^{\cA_2}(F_{\cA_2}) & = \{q_4\} & \pre_{c}^{\cA_2}(F_{\cA_2}) & = \{q_2\} \\
\pre_{aa}^{\cA_2}(F_{\cA_2}) & = \{q_1, q_3\} & \pre_{ab}^{\cA_2}(F_{\cA_2}) & = \{q_1\} & \pre_{ac}^{\cA_2}(F_{\cA_2}) & = \{q_1, q_2\} & \pre_{aaa}^{\cA_2}(F_{\cA_2}) & = \{q_1,q_3\}\\
\pre_{aab}^{\cA_2}(F_{\cA_2}) & = \{q_1\} & \pre_{aac}^{\cA_2}(F_{\cA_2}) & = \{q_1,q_2\}
\end{align*}
Recall from Example~\ref{example:Word_Regular_LInc} that, for the Nerode quasiorder, we have \(c \leqq_{\lang{\cA_2}}^l b\), \(c \leqq_{\lang{\cA_2}}^l a\) while none of these relations hold for \(\leq^l_{\cA_2}\).

\noindent
Let us next show the Kleene iterates computed by Algorithm \AlgRegularW when using the quasiorder \(\leq^l_{\cA_2}\).
\begin{align*}
\vect{Y}^{(0)} &=\vect{\varnothing}\\
\vect{Y}^{(1)} &= \vectarg{\epsilon}{F} = \tuple{\varnothing, \{\epsilon\}} \\
\vect{Y}^{(2)} &= \vectarg{\epsilon}{F} \cup \Pre_{\cA_1}(\vect{Y}^{(1)}) = \tuple{\{a, b, c\}, \{\epsilon\}}\\
\vect{Y}^{(3)} &= \vectarg{\epsilon}{F} \cup \Pre_{\cA_1}(\vect{Y}^{(2)}) = \tuple{\{aa, ab, ac, a, b, c\}, \{\epsilon\}} \\
\vect{Y}^{(4)} &= \vectarg{\epsilon}{F} \cup \Pre_{\cA_1}(\vect{Y}^{(3)}) = \tuple{\{aaa, 
aab, aac, aa, ab, ac, a, b, c\}, \{\epsilon\}}
\end{align*}
It turns out that $\tuple{\{aaa, 
aab, aac, aa, ab, ac, a, b, c\}, \{\epsilon\}} \sqsubseteq_{\leq^l_{\cA_2}} \tuple{\{aa, ab, ac, a, b, c\}, \{\epsilon\}}$ so that $\Kleene$ outputs the vector
$\vect{Y}=\tuple{\{aa, ab, ac, a, b, c\}, \{\epsilon\}}$.
Since $c\in \vect{Y}_0$ and \(c \notin \lang{\cA_2}\), Algorithm~\AlgRegularW concludes that the language inclusion \(\lang{\cA_1} \subseteq \lang{\cA_2}\) does not hold. \eox
\end{example}

\subsubsection{Simulation-based Quasiorders.}\label{subsec:simulation}
Recall that a \emph{simulation} on a FA $\cA= \tuple{Q,\delta,I,F,\Sigma}$  is a binary relation \(\mathord{\preceq} \subseteq Q\times Q\) such that for all $p,q\in Q$ such that \(p\preceq q\) the following two conditions hold:
\begin{compactenum}[\upshape(\rm i\upshape)]
\item if \(p\in F\) then \(q\in F\);
\item for every transition \(p \xrightarrow{a} p'\), there exists a transition \(q \xrightarrow{a} q'\) such that \(p'\preceq q'\).
\end{compactenum}
It is well known that simulation relations are closed under arbitrary unions, where
the greatest (w.r.t.\ inclusion) simulation relation $\mathord{\preceq_A} \ud 
\cup \{\mathord{\preceq}\subseteq Q\times Q \mid \mathord{\preceq}$ is a simulation on $\cA\}$
is a quasiorder, called simulation quasiorder
of $\cA$. 
It is also well known that simulation implies language inclusion, i.e., if $\preceq$ is 
a simulation on $\cA$ then
\[ q \preceq q'  \Ra W^{\cA}_{q,F}\subseteq W^{\cA}_{q',F} \enspace .\]
A relation \(\mathord{\leq}\subseteq Q\times Q\) on states 
can be lifted in the standard universal/existential way to a relation $\leq^{\forall\exists}\subseteq \wp(Q)\times \wp(Q)$ on sets of states as follows: 
\[ X \leq^{\forall\exists} Y \:\udr\: \forall x\in X, \exists y\in Y,\: x\leq y \enspace.\]
In particular, if $\leq$ is a quasiorder 
then $\leq^{\forall\exists}$ is a quasiorder as well. 
Also, if $\preceq$ is a simulation relation then its lifting $\preceq^{\forall\exists}$ is such 
that \(X \preceq^{\forall\exists} Y \Ra W^{\cA}_{X,F} \subseteq W^{\cA}_{Y,F}\) holds. This suggests us to
define a \emph{right simulation-based quasiorder} \(\preceq^{r}_{\cA}\) on $\Sigma^*$ induced by a simulation $\preceq$ on $\cA$ as follows: for all $u,v\in \Sigma^*$,
\begin{equation}\label{eq:sim-qo}
u \preceq^{r}_{\cA} v \:\udr\:  \post^{\cA}_u(I) \preceq^{\forall\exists} \post^{\cA}_v(I) \enspace .
\end{equation}
\begin{lemma}
	Given a simulation relation \(\mathord{\preceq}\) on $\cA$, the right simulation-based quasiorder \(\mathord{\preceq^r_{\cA}}\) is a decidable right \(\lang{\cA}\)-consistent wqo.
\end{lemma}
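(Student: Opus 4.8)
The plan is to verify three things about $\preceq^r_\cA$: that it is a decidable quasiorder, that it is a wqo, and that it is right $\lang{\cA}$-consistent in the sense of Definition~\ref{def:LConsistent}, the last being the substantive part. Throughout I would rely on the composition identity $\post^{\cA}_{ua}(I) = \post^{\cA}_a(\post^{\cA}_u(I))$, the exact analogue for $\post$ of the identity~\eqref{eq:prepre} used for $\pre$, which records that the states reachable from $I$ on reading $ua$ are precisely the $a$-successors of the states reachable on reading $u$. Since we may take $\preceq$ to be reflexive and transitive (the simulation quasiorder $\preceq_\cA$ is such, and is the canonical choice), the lifting $\preceq^{\forall\exists}$ is a quasiorder on $\wp(Q)$, and $\preceq^r_\cA$, being the pullback of $\preceq^{\forall\exists}$ along the map $u \mapsto \post^{\cA}_u(I)$, is reflexive and transitive as well.

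For decidability and the wqo property I would exploit finiteness of $Q$. Each $\post^{\cA}_u(I)$ is a finite, effectively computable subset of $Q$, and $\preceq$ is computable, so testing $\post^{\cA}_u(I) \preceq^{\forall\exists} \post^{\cA}_v(I)$ is a finite check; hence $\preceq^r_\cA$ is decidable. For the wqo property, the map $u \mapsto \post^{\cA}_u(I)$ takes values in the finite set $\wp(Q)$, so in any infinite sequence $\{u_i\}_{i\in\bN}$ there are $i<j$ with $\post^{\cA}_{u_i}(I) = \post^{\cA}_{u_j}(I)$ by the pigeonhole principle, whence $u_i \preceq^r_\cA u_j$ by reflexivity of $\preceq^{\forall\exists}$.

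It then remains to establish right $\lang{\cA}$-consistency. For condition~\eqref{eq:LConsistentPrecise}, suppose $u \in \lang{\cA}$ and $u \preceq^r_\cA v$: then $\post^{\cA}_u(I)$ contains some final state $p$, and since $\post^{\cA}_u(I) \preceq^{\forall\exists} \post^{\cA}_v(I)$ there is $q \in \post^{\cA}_v(I)$ with $p \preceq q$; the first simulation clause forces $q \in F$, so $v \in \lang{\cA}$. This gives $\mathord{\preceq^r_\cA} \cap (\lang{\cA} \times \neg\lang{\cA}) = \varnothing$. For condition~\eqref{eq:LConsistentmonotonic}, by the characterization~\eqref{def-leftmon} it suffices to prove $x_1 \preceq^r_\cA x_2 \Ra x_1 a \preceq^r_\cA x_2 a$ for every $a \in \Sigma$.

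The heart of the argument, and the step I expect to be the main obstacle since it is the only place the transition-matching clause~(ii) of simulation is genuinely used, is the claim that $\post^{\cA}_a$ preserves $\preceq^{\forall\exists}$: if $X \preceq^{\forall\exists} Y$ then $\post^{\cA}_a(X) \preceq^{\forall\exists} \post^{\cA}_a(Y)$. To see this, take $p' \in \post^{\cA}_a(X)$, so $p \xrightarrow{a} p'$ for some $p \in X$; choose $q \in Y$ with $p \preceq q$ and apply clause~(ii) to obtain $q \xrightarrow{a} q'$ with $p' \preceq q'$ and $q' \in \post^{\cA}_a(Y)$. Granting this claim, $x_1 \preceq^r_\cA x_2$ means $\post^{\cA}_{x_1}(I) \preceq^{\forall\exists} \post^{\cA}_{x_2}(I)$, and combining it with the composition identity yields $\post^{\cA}_{x_1 a}(I) = \post^{\cA}_a(\post^{\cA}_{x_1}(I)) \preceq^{\forall\exists} \post^{\cA}_a(\post^{\cA}_{x_2}(I)) = \post^{\cA}_{x_2 a}(I)$, i.e.\ $x_1 a \preceq^r_\cA x_2 a$, which completes right monotonicity and hence the proof.
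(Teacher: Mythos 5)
Your proposal is correct and follows essentially the same route as the paper's proof: condition~(\ref{eq:LConsistentPrecise}) via clause~(i) of simulation applied to a final state in $\post^{\cA}_u(I)$ (you state it contrapositively, the paper by contradiction, but the argument is the same), right monotonicity via clause~(ii) showing that $\post^{\cA}_a$ preserves $\preceq^{\forall\exists}$ combined with $\post^{\cA}_{ua}(I)=\post^{\cA}_a(\post^{\cA}_u(I))$, and decidability and the wqo property from finiteness and computability of the sets $\post^{\cA}_u(I)\in\wp(Q)$. Your explicit remark that one should take $\preceq$ to be the simulation quasiorder (so that $\preceq^{\forall\exists}$ is reflexive and transitive) is a small point of care the paper leaves implicit, but it does not change the substance.
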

\begin{proof}

Let \(u \in \lang{\cA}\) and \(v \notin \lang{\cA}\), so that
\(F \cap \post^{\cA}_u(I) \neq \varnothing\) and \((F \cap \post^{\cA}_v(I)) = \varnothing\) hold. Hence, there exists \(q \in \post^{\cA}_u(F) \cap F\) such that \(q \preceq^r_{\cA} q'\) for no \(q' \in \post^{\cA}_v(F)\) since, by simulation, this would imply \(q' \in \post^{\cA}_v(F) \cap F\), which would contradict \(F \cap \post^{\cA}_v(I) = \varnothing\).
Therefore, \(u \npreceq^r_{\cA} v\) holds.

\noindent
Next we show that \(\preceq^r_{\cA}\) is right monotonic. By \eqref{def-leftmon}, we check that for all $u,v\in \Sigma^*$ and $a\in \Sigma$,  
\(u\preceq^r_{\cA} v \Ra ua \preceq^r_{\cA} va\):  
\begin{align*}
u \preceq^r_{\cA} v & \Lra \quad \text{~[def. \(\preceq^r_{\cA}\)]}\\
\post^{\cA}_u(I) \preceq^{\forall\exists} \post^{\cA}_v(I) & \Lra \quad \text{~[by def.\ of \(\preceq^{\forall\exists}\)]} \\
\forall x \in \post^{\cA}_u(I), \exists y \in \post^{\cA}_v(I), x \preceq y & \Ra \quad \text{~[by def.\ of \(\preceq\)]} \\
\forall x \ggoes{a} x' ,\; x \in \post^{\cA}_u(I),\; \exists y \ggoes{a} y' ,\; y\in \post^{\cA}_v(u), x' \preceq y' & \Lra \quad \text{~[by \(\post^{\cA}_{u}\comp \post^{\cA}_{a} = \post^{\cA}_{ua}(I)\)]}\\
\forall x' \in \post^{\cA}_{ua}(I), \exists y' \in \post^{\cA}_{va}(I), \; x' \preceq y' & \Lra \quad\text{~[by def.\ of \(\preceq^{\forall\exists}\)]}\\
\post^{\cA}_{ua}(I) \preceq^{\forall\exists} \post^{\cA}_{va}(I) & \Lra \quad\text{~[by def.\ of \(\preceq_{\cA}^r\)]}\\
ua \preceq_{\cA}^r va & \enspace .
\end{align*}
Thus, \(\preceq_{\cA}^r\) is a right \(\lang{\cA}\)-consistent quasiorder.

\noindent
Finally, since \(\wp(Q)\) is finite, it follows that \(\preceq_{\cA}^r\) is a well-quasiorder and, since \(\post^{\cA}_u(I)\) is finite and computable for every \(u\in \Sigma^*\), it follows that \(\preceq_{\cA}^r\) is decidable. %
\end{proof}

Thus, once again, Theorem~\ref{theorem:quasiorderAlgorithmR} applies to 
\(\mathord{\preceq^r_{\cA_2}}\) and this allows us to instantiate the 
algorithm \AlgRegularWr to the quasiorder 
$\mathord{\preceq^r_{\cA_2}}$ for deciding an inclusion $\lang{\cA_1}\subseteq \lang{\cA_2}$. 

Note that it is possible to define a left simulation \(\preceq^{\forall\exists}_{R}\) on an automaton \(\cA\) by applying \(\preceq^{\forall\exists}\) on the reverse automaton 
$\cA^R$ of \(\cA\) where arrows are flipped and initial/final states are swapped.
This left simulation induces a \emph{left simulation-based quasiorder} on \(\Sigma^*\) as follows: for all $u,v\in \Sigma^*$,

\begin{equation}\label{eq:sim-qo:left}
u \preceq^{l}_{\cA} v \:\udr\:  \pre^{\cA}_u(F) \preceq^{\forall\exists}_R \pre^{\cA}_v(F) \enspace .
\end{equation}

It is straightforward to check that Theorem~\ref{theorem:quasiorderAlgorithm} applies to \(\mathord{\preceq^l_{\cA_2}}\) and, therefore, we can instantiate the Algorithm~\AlgRegularW for deciding \(\lang{\cA_1} \subseteq \lang{\cA_2}\). 

\begin{example}\label{example:Word_Regular_LInc:sim}
Let us illustrate the use of the left simulation-based quasiorder to solve the language inclusion \(\lang{\cA_1} \subseteq \lang{\cA_2}\) of Example~\ref{example:Word_Regular_LInc}.
For the set $F_{\cA_2}$ of final states \(\cA_2\) 
we have the same set of predecessors computed in Example~\ref{example:Word_Regular_LInc:states} and, among others, the following left simulations between these sets w.r.t.\ the simulation quasiorder 
$\preceq_{\cA_2^R}$
of the reverse of \(\cA_2\)
(recall that 
\(\preceq^{\forall\exists}\) is defined w.r.t.\ simulations of $\cA_2^R$):
\begin{align*}
\pre_{c}^{\cA_2}(F_{\cA_2}) = \{q_2\}  & \preceq^{\forall\exists}_R \{q_3\} = \pre_{a}^{\cA_2}(F_{\cA_2}) & \pre_{c}^{\cA_2}(F_{\cA_2}) = \{q_2\} & \preceq^{\forall\exists}_R \{q_4\} = \pre_{b}^{\cA_2}(F_{\cA_2}) \\
\pre_{c}^{\cA_2}(F_{\cA_2}) = \{q_2\} & \preceq^{\forall\exists}_R \{q_1,q_3\} = \pre_{aa}^{\cA_2}(F_{\cA_2}) & \pre_{c}^{\cA_2}(F_{\cA_2}) = \{q_2\} & \preceq^{\forall\exists}_R \{q_1\} = \pre_{ab}^{\cA_2}(F_{\cA_2})\\
 \pre_{c}^{\cA_2}(F_{\cA_2}) = \{q_2\} & \preceq^{\forall\exists}_R \{q_1,q_2\} = \pre_{ac}^{\cA_2}(F_{\cA_2})
\end{align*}
because $q_2 \preceq_{\cA_2^R} q_1$, $q_2 \preceq_{\cA_2^R} q_3$ and $q_2 \preceq_{\cA_2^R} q_4$ hold. 

\noindent
Let us show the computation of the Kleene iterates performed by Algorithm \AlgRegularW when using the quasiorder \(\sqsubseteq_{\mathord{\preceq_{\cA_2}^l}}\) as abstract inclusion check:
\begin{align*}
\vect{Y}^{(0)} &= \vect{\varnothing}\\
\vect{Y}^{(1)} &= \vectarg{\epsilon}{F} = \tuple{\varnothing, \{\epsilon\}} \\
\vect{Y}^{(2)} &= \vectarg{\epsilon}{F} \cup \Pre_{\cA_1}(\vect{Y}^{(1)}) = \tuple{\{a, b, c\},\{\varepsilon\}}\\
\vect{Y}^{(3)} &= \vectarg{\epsilon}{F} \cup \Pre_{\cA_1}(\vect{Y}^{(2)}) = \tuple{\{aa,ab,ac, a, b, c\}, \{\varepsilon\}} %
\end{align*}
It turns out that $\tuple{\{aa, ab, ac, a, b, c\}, \{\epsilon\}} \sqsubseteq_{\preceq^l_{\cA_2}} \tuple{\{a, b, c\}, \{\epsilon\}}$, so that $\Kleene$ outputs the vector
$\vect{Y}=\tuple{\{a, b, c\}, \{\epsilon\}}$. Thus, once again,
since $c\in \vect{Y}_0$ and \(c \notin \lang{\cA_2}\), Algorithm~\AlgRegularW concludes that \(\lang{\cA_1} \subseteq \lang{\cA_2}\) does not hold. \eox
\end{example}

Let us observe that \(u \preceq^{r}_{\cA_2} v\) implies \(W_{\post^{\cA_2}_u(I),F} \subseteq W_{\post^{\cA_2}_v(I),F}\), which is equivalent to the right Nerode quasiorder \(u\leqq^r_{\lang{\cA_2}} v\) for $\lang{\cA_2}$ defined in \eqref{def-Nerodeqo}, 
so that \(u \preceq^{r}_{\cA_2} v \Ra u\leqq^r_{\lang{\cA_2}} v\) holds. Furthermore, for the state-based quasiorder defined in
\eqref{eqn:state-qo}, we have that
\(u \leq^{r}_{\cA_2} v \Ra u\preceq^r_{\cA_2} v\) trivially holds.
Summing up, the following containments relate (the right versions of) state-based, 
simulation-based and Nerode quasiorders:
\[
\mathord{\leq^r_{\cA_2}} \,\subseteq\, \mathord{\preceq^r_{\cA_2}} \,\subseteq\, \mathord{\leqq^r_{\lang{\cA_2}}} \enspace .
\]
All these quasiorders are decidable \(\lang{\cA_2}\)-consistent wqos so
that the 
algorithm \AlgRegularW can be instantiated to each of them for deciding an inclusion $\lang{\cA_1}\subseteq \lang{\cA_2}$.
Examples~\ref{example:Word_Regular_LInc},~\ref{example:Word_Regular_LInc:states} and~\ref{example:Word_Regular_LInc:sim} show how \AlgRegularW  behaves for each of the three quasiorders considered in this section.
Despite their simplicity, the examples show the differences in the behavior of the algorithm when considering the different quasiorders.
In particular, we observe that the iterations of $\Kleene$ for \(\mathord{\leqq^r_{\lang{\cA_2}}}\) coincides with those for \(\mathord{\preceq^r_{\cA_2}}\) and, as expected, these Kleene iterates converge faster than those for \(\mathord{\leq^r_{\cA_2}}\).
Recall that \(\mathord{\leqq^r_{\lang{\cA_2}}}\) is the coarsest well-quasiorder for which Algorithm~\AlgRegularW works, hence its corresponding Kleene iterates exhibit optimal behavior in terms of number of iterations to converge. 
The drawback of using the Nerode  quasiorder \(\mathord{\leqq^r_{\lang{\cA_2}}}\) 
is that it requires checking language inclusion in order to decide whether two words are related, and this is a PSPACE-complete problem.
Therefore, the coincidence of the Kleene iterates for \(\mathord{\leqq^r_{\lang{\cA_2}}}\) and \(\mathord{\preceq^r_{\cA_2}}\) is of special interest since it highlights that Algorithm~\AlgRegularW might exhibit optimal behavior while using a ``simpler'' (i.e., finer) well-quasiorder such as \(\mathord{\preceq^r_{\cA_2}}\), which is a polynomial approximation of \(\mathord{\leqq^r_{\lang{\cA_2}}}\).

\subsection{Inclusion in Traces of One-Counter Nets.}%
\label{sub:containment_in_one_counter_languages}
We show that our framework can be instantiated to systematically derive an algorithm for deciding an inclusion \(\lang{\cA} \subseteq L_2\) where \(L_2\) is the trace set of a one-counter net (OCN). 
This is accomplished by defining a decidable \(L_2\)-consistent quasiorder so that Theorem~\ref{theorem:quasiorderAlgorithm} can be applied.

Intuitively, an OCN is a FA endowed with a nonnegative integer counter which 
can be incremented, decremented or
left unchanged by a transition. 
Formally, a one-counter net~\cite{hofman_trace_2018} is a tuple $\cO=\tuple{Q,\Sigma,\delta}$ where $Q$ is a finite set of states, $\Sigma$ is an alphabet and $\delta\subseteq Q\times \Sigma\times \{-1,0,1\}\times Q$ is a set of transitions.
A configuration of \(\cO\) is a pair $qn$ consisting of a state \(q\in Q\) and a value \(n\in\bN\) for the counter. 
Given two configurations \(qn, q'n'\in Q\times \bN\) we write \(qn \xrightarrow{a} q'n'\) and call it a \(a\)-step (or simply step) if there exists a transition \( (q,a,d,q')\in\delta \) such that \(n'=n+d\).
Given \(qn\in Q\times\bN\), the \emph{trace set} \(T(qn)\subseteq \Sigma^*\) of an OCN is defined as follows:
\begin{align*}
	T(qn) & \ud \{u \in \Sigma^* \mid Z_u^{qn} \neq \varnothing\} \quad \text{ where } \\
	Z_u^{qn} & \ud \{ q_k n_k \in Q\times \bN \mid qn=q_0n_0 \xrightarrow{a_1} q_1n_1\xrightarrow{a_2}\cdots \xrightarrow{a_k} q_kn_k,\: a_1\cdots a_k=u \}\enspace .
\end{align*}

\noindent
Observe that \(Z_{\epsilon}^{qn}= \{ qn \}\) and \(Z_u^{qn}\) is a finite set for every word \(u\in\Sigma^*\).

Let us consider the poset $\tuple{\bN_{\bot}\ud \bN\cup\{\bot\},\leq_{\bN_{\bot}}}$ where \(\bot\leq_{\bN_{\bot}} n\) holds for all \(n\in\bN_{\bot}\), while for all $n,n'\in
\bN$, $n\leq_{\bN_{\bot}}n'$ is the standard ordering relation between numbers.  
For a finite set of states \(S \subseteq Q\times\bN\), define the so-called macro state \(M_S \colon Q \ra \bN_{\bot}\) as follows: 
\[M_S( q ) \ud \max \{ n\in \bN \mid q n \in S \}\,,\]
where $\max\varnothing\ud\bot$. %
Let us define the following quasiorder $\mathord{\leq_{{qn}}^r}\subseteq \Sigma^*\times \Sigma^*$:
\begin{equation}\label{eq:ocnleq}
	u \leq_{{qn}}^r v \:\udr\:\forall q\in Q,\, M_{Z_u^{qn}}(q) \leq_{\bN_{\bot}} M_{Z_v^{qn}}(q) \enspace .
\end{equation}

\begin{figure}[t]
	\centering
\begin{tikzpicture}[->,>=stealth',shorten >=1pt,auto,node distance=6mm and 1cm,thick,initial text=]
	\tikzstyle{every state}=[inner sep=3pt,minimum size=19pt,fill=blue!20,draw=blue!60,text=black]
            \node[state,initial] (P) at (0,0) {$q_1$};
            \node[state] (Q) at (2, 0)        {$q_2$};
            \node[state] (R) at (1,1.5)       {$q_3$};   
          
            \path[->]
                  (P) edge node[above] {$a, 1$} (Q)
                  (Q) edge node[right] {$a, 0$} (R)
                  (R) edge node[left]  {$a,-1$} (P);
     \path[->]
		(R) edge [loop left] node[left] {$a,1$} (R);           
	\end{tikzpicture}
	\caption{ A one-counter net \(\cO\).}%
	\label{fig:ocnexample}
\end{figure}

\begin{example}\label{example:ocn}
	Figure~\ref{fig:ocnexample} depicts an OCN over the singleton alphabet \(\Sigma = \{ a \}\).
	For \(\cO\) we have the following sets:
	\begin{align*}
		Z_{\epsilon}^{q_10} &= \{ q_10 \} & Z_{a}^{q_10} &= \{ q_21 \} & Z_{aa}^{q_10}&= \{ q_31 \}\\
		Z_{aaa}^{q_10} &= \{ q_32, q_10 \} & Z_{aaaa}^{q_10} &= \{ q_33,q_11,q_21 \} & Z_{b}^{q_10}&= \varnothing
	\end{align*}
	Hence, we have that: 		
	\begin{align*}
		M_{Z_{\epsilon}^{q_10}} &= 
		\left(%
			\begin{array}{c}
				q_1 \mapsto 0 \\
				q_2 \mapsto \bot  \\
				q_3 \mapsto \bot
		  \end{array}%
		\right) %
		& M_{Z_{a}^{q_10}} &=
		\left(%
			\begin{array}{c}
				q_1 \mapsto\bot \\
				q_2 \mapsto 1 \\
				q_3 \mapsto \bot
		  \end{array}%
		\right) %
		& M_{Z_{aa}^{q_10}} &=
		\left(%
			\begin{array}{c}
				q_1 \mapsto\bot \\
				q_2 \mapsto \bot \\
				q_3 \mapsto 1 
		  \end{array}%
		\right)
		& M_{Z_{aaa}^{q_10}} &=
		\left(%
			\begin{array}{c}
				q_1 \mapsto 0 \\
				q_2 \mapsto \bot \\
				q_3 \mapsto 2 
		  \end{array}%
		\right)
	\end{align*}
	Therefore, the words \(\epsilon,a\) and \(aa\) are pairwise incomparable for \(\mathord{\leq_{{q_10}}^r}\), while we have that \(aa \leq_{q_10}^r aaa \)  and \(\epsilon \leq_{q_10}^r aaa \).\eox
\end{example}

\begin{lemma}\label{lemma:ocnwqo}
	Let \(\cO\) be an OCN. For any configuration $q n$ of $\cO$, \(\mathord{\leq_{{qn}}^r}\) is a right \(T(qn)\)-consistent decidable wqo.
\end{lemma}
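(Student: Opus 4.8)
The plan is to factor $\leq^r_{qn}$ through the finite-state ``macro'' abstraction $u \mapsto M_{Z_u^{qn}}$ and to reduce every required property to a statement about this abstraction. Write $\bN_{\bot}^{Q}$ for the set of macro states $Q \ra \bN_{\bot}$ ordered componentwise, $M \sqsubseteq M' \udr \forall q\in Q,\, M(q) \leq_{\bN_{\bot}} M'(q)$. By \eqref{eq:ocnleq} we have $u \leq^r_{qn} v \Lra M_{Z_u^{qn}} \sqsubseteq M_{Z_v^{qn}}$, so $\leq^r_{qn}$ is the pullback of $\sqsubseteq$ along the map $\phi \ud \lambda u.\, M_{Z_u^{qn}}$. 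I would also record the one-step successor operator $\post_a(S) \ud \{ q'(n{+}d) \mid qn \in S,\, (q,a,d,q')\in\delta,\, n{+}d \geq 0 \}$ on finite sets of configurations, together with the identity $Z_{ua}^{qn} = \post_a(Z_u^{qn})$, which is immediate from the definition of $Z$.

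Condition~(\ref{eq:LConsistentPrecise}) of Definition~\ref{def:LConsistent} and decidability are then quick. For~(\ref{eq:LConsistentPrecise}), if $u \in T(qn)$ then $M_{Z_u^{qn}}(q') \neq \bot$ for some $q'$; if moreover $u \leq^r_{qn} v$, then $M_{Z_v^{qn}}(q') \geq_{\bN_{\bot}} M_{Z_u^{qn}}(q') \neq \bot$ forces $M_{Z_v^{qn}}(q') \neq \bot$, hence $Z_v^{qn} \neq \varnothing$ and $v \in T(qn)$; this shows $\mathord{\leq^r_{qn}} \cap (T(qn) \times \neg T(qn)) = \varnothing$. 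Decidability holds because each $Z_u^{qn}$ is finite and effectively computable by iterating $\post$ along $u$, so both macro states and the componentwise test are computable.

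The crux is right-monotonicity, for which by \eqref{def-leftmon} it suffices to prove $u \leq^r_{qn} v \Ra ua \leq^r_{qn} va$ for every $a\in\Sigma$. Using $Z_{ua}^{qn} = \post_a(Z_u^{qn})$, this reduces to the claim $M_S \sqsubseteq M_{S'} \Ra M_{\post_a(S)} \sqsubseteq M_{\post_a(S')}$, which I would prove by a step-matching argument. Fix $q'$ with $M_{\post_a(S)}(q') \neq \bot$; it is realized by a transition $(q,a,d,q')$ and a configuration $qm \in S$ with $m{+}d = M_{\post_a(S)}(q')$. Since $qm \in S$ we have $M_S(q) \geq m$, hence $M_{S'}(q) \geq M_S(q) \geq m$, so $S'$ contains $q\,M_{S'}(q)$. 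Firing the same transition from it is legal, as its counter $M_{S'}(q){+}d \geq m{+}d \geq 0$, and yields $q'(M_{S'}(q){+}d) \in \post_a(S')$, whence $M_{\post_a(S')}(q') \geq M_{S'}(q){+}d \geq m{+}d = M_{\post_a(S)}(q')$. The only delicate point is the decrement case $d=-1$: the inequality $m{+}d\geq 0$ that makes the original step legal also guarantees $M_{S'}(q){+}d \geq 0$, so no feasibility is lost when passing to the larger counter value. I expect this to be the main obstacle, since it is exactly where the arithmetic of counter updates interacts with the max-based definition of $M$.

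Finally, for the wqo property I would use that pullbacks of wqos are wqos: given any infinite sequence $(u_i)_{i\in\bN}$, apply $\phi$ to obtain $(M_{Z_{u_i}^{qn}})_i$ in $\bN_{\bot}^{Q}$, and if $\sqsubseteq$ is a wqo there are $i<j$ with $M_{Z_{u_i}^{qn}} \sqsubseteq M_{Z_{u_j}^{qn}}$, i.e.\ $u_i \leq^r_{qn} u_j$. It then remains to observe that $\tuple{\bN_{\bot}^{Q}, \sqsubseteq}$ is a wqo, which holds because $\tuple{\bN_{\bot}, \leq_{\bN_{\bot}}}$ is a well-order (hence a wqo) and finite products of wqos are again wqos (Dickson's lemma), using that $Q$ is finite. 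Combining the three parts yields that $\leq^r_{qn}$ is a decidable right $T(qn)$-consistent wqo.
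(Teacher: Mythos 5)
Your proof is correct and follows essentially the same route as the paper's: Dickson's lemma for the wqo property, computability of the finite sets $Z_u^{qn}$ for decidability, the $\bot$/non-$\bot$ dichotomy for consistency, and a step-matching argument on macro states for right-monotonicity. The only difference is that you argue monotonicity directly (and make the feasibility of the matched step in the decrement case explicit) where the paper argues by contradiction; this is a presentational rather than a structural difference.
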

\begin{proof}
It follows from Dickson's Lemma \cite[Section~II.7.1.2]{Sakarovitch} that \(\mathord{\leq_{{qn}}^r}\) is a wqo.
Since \(Z_u^{qn}\) and \(Z_v^{qn}\) are finite sets of configurations, the macro state functions 
\(M_{Z_u^{qn}}\) and \(M_{Z_v^{qn}}\) are computable, hence the relation \(\mathord{\leq_{{qn}}^r}\) is decidable.
If \(u\in T(qn)\) and \(v\notin T(qn)\) then \(u  \not\leq_{{qn}}^r v \), otherwise we would have that 
\(M_{Z_u^{qn}}(q')\neq \bot\) for some \(q'\in Q\), hence \(M_{Z_v^{qn}}(q')\neq \bot\), and this would be 
a contradiction because \(Z_v^{qn}=\varnothing\), so that \(M_{Z_v^{qn}}(q')= \bot\).

\noindent
Finally, let us show that 
\(u \leq_{{qn}}^r v\) implies \(ua \leq_{{qn}}^r va\)
for all \(a\in \Sigma\), since, by \eqref{def-leftmon}, this is equivalent to the fact that $\leq_{{qn}}^r$ is right monotonic. 
We proceed by contradiction.
Assume that \(u \leq_{{qn}}^r v\) and \(\exists q' \in Q\),  \(M_{Z^{qn}_{ua}}(q') \not\leq_{\bN_{\bot}}  M_{Z^{qn}_{va}}(q')\).
Then, \(m_1\ud\max\{n \mid pn \in Z^{qn}_{ua}\} \not\leq_{\bN_{\bot}} m_2\ud\max\{n \mid pn \in Z^{qn}_{va}\}\), which implies, since
$m_1\neq \bot$, that
$m_1,m_2\in \bN$ and $m_1 > m_2$. 
Thus, for all \( (q,a,d,q') \in \delta\) we have \(q'(m_1-d) \in Z_u^{qn}\) and \(q'(m_2-d) \in Z_v^{qn}\).
Since \(m_1-d > m_2-d\) we have that \(\max\{n \mid pn \in Z_u^{qn}\} > \max\{n \mid pn \in Z_v^{qn}\}\), which contradicts \(u \leq_{{qn}}^r v\). %
\end{proof}

By Theorem~\ref{theorem:quasiorderAlgorithm},
Lemma~\ref{lemma:ocnwqo} and the decidability of membership \(u\inqm T(qn)\),
the following known decidability result for inclusion of regular languages into traces of OCNs~\cite[Theorem 3.2]{JANCAR1999476} is systematically derived as a consequence of our algorithmic framework.

\begin{corollary}\label{theorem:ocncontainment}
Let \(\cA\) be a FA and \(\cO\) be an OCN. For any configuration \(qn \) of $\cO$, the language inclusion problem
\(\lang{\cA} \subseteq T(qn)\) is decidable.
\end{corollary}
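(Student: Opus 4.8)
The plan is to instantiate the right-concatenation framework of Theorem~\ref{theorem:quasiorderAlgorithmR} with $L_2 \ud T(qn)$, thereby reducing the corollary to checking its two hypotheses. The well-quasiorder hypothesis is immediate: Lemma~\ref{lemma:ocnwqo} already establishes that $\mathord{\leq_{{qn}}^r}$ is a decidable right $T(qn)$-consistent wqo on $\Sigma^*$, which is exactly hypothesis~(ii) of that theorem.

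The remaining work is to verify hypothesis~(i), namely decidability of membership $u \inqm T(qn)$. First I would recall that $u \in T(qn)$ if{}f $Z_u^{qn} \neq \varnothing$, so it suffices to compute $Z_u^{qn}$ and test it for emptiness. I would compute $Z_u^{qn}$ by a forward exploration: starting from $\{qn\}$, process the symbols of $u$ one at a time, at each step replacing the current finite set of configurations by the set of all configurations reachable by a single matching $a$-step. This terminates because along any run reading a prefix of $u$ the counter changes by at most one per step, so every configuration in an intermediate set has counter value bounded by $n + |u|$; hence each intermediate set is finite and effectively enumerable, and $Z_u^{qn}$ is finite and computable, as already noted right after its definition. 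Emptiness of a finite set is trivially decidable, so membership in $T(qn)$ is decidable.

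With both hypotheses discharged, Theorem~\ref{theorem:quasiorderAlgorithmR} applies and guarantees that Algorithm~\AlgRegularWr decides the inclusion $\lang{\cA} \subseteq T(qn)$, which is precisely the claim.

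Since the two genuinely nontrivial ingredients---the wqo property (via Dickson's Lemma) and right $T(qn)$-consistency of $\mathord{\leq_{{qn}}^r}$---are already supplied by Lemma~\ref{lemma:ocnwqo}, there is no real obstacle left in the corollary itself; the only point requiring a little care is making the membership procedure effective, i.e.\ justifying that the reachable configuration sets $Z_u^{qn}$ can actually be computed and are finite. Everything else is a direct appeal to the general theorem.
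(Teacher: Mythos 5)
Your proof is correct and follows essentially the same route as the paper, which derives the corollary directly from the framework theorem, Lemma~\ref{lemma:ocnwqo}, and decidability of membership in \(T(qn)\). You add two small improvements: you explicitly justify membership decidability by computing the finite sets \(Z_u^{qn}\) (which the paper only asserts), and you correctly invoke the right-concatenation version (Theorem~\ref{theorem:quasiorderAlgorithmR}) to match the \emph{right} \(T(qn)\)-consistency of \(\mathord{\leq_{qn}^r}\), whereas the paper cites the left version.
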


\begin{example}
	Consider the OCN of Figure~\ref{fig:ocnexample} and the problem of deciding whether \(\Sigma^* = a^*\) is included into \(T(q_00)\), i.e., whether the trace set of \(\cO\) is universal.
	By considering 
	the equation \(X = X a \cup \{ \epsilon \}\) which defines \(\Sigma^*\), it turns
	out that the 
  Kleene iterates computed by Algorithm \AlgRegularW when using the abstract inclusion check
  given by  \(\sqsubseteq_{\leq^r_{q_00}}\) are as follows:
\begin{align*}
	Y^{(0)} &=\varnothing & Y^{(1)} &= \{ \epsilon \} & Y^{(2)} &= \{ a,\epsilon \} &
	Y^{(3)} &= \{ aa,a,\epsilon \} & Y^{(4)} &= \{ aaa,aa,a,\epsilon \} \enspace .
\end{align*}
We have that $Y^{(4)} \sqsubseteq_{\leq^r_{q_00}} Y^{(3)}$ because 
\(aa \leq^r_{q_00} aaa \) holds, as shown in Example~\ref{example:ocn}, so that
the output of $\Kleene$ is $Y^{(3)}=\{ aa,a,\epsilon \}$. 
Since $\{aa,a,\epsilon\}$ is a set of traces of \(\cO\) (i.e. \(\{ aa,a,\epsilon \}\subseteq T(q_00)\)) we conclude that \(\cO\) is universal. \eox
\end{example}

Moreover, by exploiting 
Lemma~\ref{lemma:ocnwqo} and \cite[Theorem 20]{Hofman:2013:DWS:2591370.2591405}, the following result settles a conjecture made by de Luca and Varricchio \citeyearpar[Section~6]{deLuca1994} on the right Nerode quasiorder
for traces of OCNs. 

\begin{lemma}\label{lemma:RightNerodeOcnwqo}
The right Nerode quasiorder \(\mathord{\leqq^r_{T(qn)}}\) is an undecidable well-quasiorder.
\end{lemma}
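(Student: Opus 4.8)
The plan is to prove the two assertions in the statement separately: first that $\mathord{\leqq^r_{T(qn)}}$ is a well-quasiorder, and then that it is undecidable.

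For the \textbf{well-quasiorder} part, I would argue by a containment of relations, avoiding any direct combinatorial analysis of the quotients $u^{-1}T(qn)$. By Lemma~\ref{lemma:ocnwqo}, the state-based quasiorder $\mathord{\leq^r_{qn}}$ is a right $T(qn)$-consistent well-quasiorder. By the maximality of the Nerode quasiorder among right $T(qn)$-consistent quasiorders, established in the proof of Lemma~\ref{lemma:leftrightnerodegoodqo}~(b) (which shows $x \leqslant y \Ra x \leqq^r_L y$ for every right $L$-consistent $\leqslant$), it follows that $\mathord{\leq^r_{qn}} \subseteq \mathord{\leqq^r_{T(qn)}}$ as binary relations on $\Sigma^*$. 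Now a quasiorder that contains a well-quasiorder is itself a well-quasiorder: given any infinite sequence $\{x_i\}_{i\in\bN}$, the wqo $\mathord{\leq^r_{qn}}$ yields indices $i<j$ with $x_i \leq^r_{qn} x_j$, hence $x_i \leqq^r_{T(qn)} x_j$. Therefore $\mathord{\leqq^r_{T(qn)}}$ is a well-quasiorder.

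For \textbf{undecidability}, the key observation is that deciding $u \leqq^r_{T(qn)} v$ is exactly deciding a trace inclusion. Indeed $u^{-1}T(qn) = \bigcup_{q'n'\in Z_u^{qn}} T(q'n')$, so the relation $u \leqq^r_{T(qn)} v$ holds if and only if $\bigcup_{q'n'\in Z_u^{qn}} T(q'n') \subseteq \bigcup_{q''n''\in Z_v^{qn}} T(q''n'')$, a containment between trace sets of reachable configurations. I would reduce from the undecidability of trace inclusion between configurations of an OCN \cite[Theorem 20]{Hofman:2013:DWS:2591370.2591405}. Given an instance asking whether $T(p_1 m_1) \subseteq T(p_2 m_2)$ over an OCN $\cO = \tuple{Q,\Sigma,\delta}$ (with the two configurations possibly living in a disjoint union of two nets, combined into one), I would build an OCN $\cO'$ by adding a fresh initial state $q_0$ and two fresh symbols $a,b\notin\Sigma$, together with \emph{deterministic} transition chains that read $a$ (resp.\ $b$) from $q_0$ and lead to the single configuration $p_1 m_1$ (resp.\ $p_2 m_2$), with the counter raised to $m_1$ (resp.\ $m_2$) by a chain of unit increments; no fresh symbol is readable from any state of $\cO$. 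By determinism, $Z_a^{q_0 0} = \{p_1 m_1\}$ and $Z_b^{q_0 0} = \{p_2 m_2\}$, and since $p_1,p_2$ have no outgoing fresh-symbol transitions, the continuations of $a$ and $b$ range only over $\Sigma^*$, giving $a^{-1}T(q_0 0) = T(p_1 m_1)$ and $b^{-1}T(q_0 0) = T(p_2 m_2)$. Hence $a \leqq^r_{T(q_0 0)} b$ holds if and only if $T(p_1 m_1) \subseteq T(p_2 m_2)$, so a decision procedure for the right Nerode quasiorder would decide the undecidable trace-inclusion problem.

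I expect the \textbf{main obstacle} to be the correctness of this gadget rather than the wqo argument, which is essentially immediate. Specifically, I must certify that the left quotients equal exactly the intended trace sets, which requires (i)~the reading of $a$ and $b$ to be deterministic so that $Z_a^{q_0 0}$ and $Z_b^{q_0 0}$ are singletons, (ii)~the fresh letters to occur only on the two prefix paths so that no spurious continuation over $\Sigma\cup\{a,b\}$ enters the quotients, and (iii)~the counter to be set to the required values $m_1,m_2$ by short deterministic increment chains matching the counter convention of the cited undecidability result. Once these three points are verified, the biconditional $a \leqq^r_{T(q_0 0)} b \Lra T(p_1 m_1)\subseteq T(p_2 m_2)$ is immediate and the reduction is complete.
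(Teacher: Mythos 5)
Your proposal follows the paper's own argument essentially verbatim: the well-quasiorder part uses exactly the maximality of the right Nerode quasiorder among right \(T(qn)\)-consistent quasiorders together with the fact (Lemma~\ref{lemma:ocnwqo}) that \(\mathord{\leq^r_{qn}}\) is such a wqo, and the undecidability part is the same reduction from OCN trace inclusion via a fresh-initial-state gadget that the paper only sketches as ``an argument similar to the automata case.'' The one slip is that reading the single letter \(a\) traverses only one transition, so the counter can rise by at most one and \(Z_a^{q_0 0}=\{p_1 m_1\}\) fails for \(m_1>1\); you should compare the words \(a^{k_1}\) and \(b^{k_2}\) (with \(k_1,k_2\) the lengths of your increment chains) rather than the letters \(a\) and \(b\), after which the biconditional goes through unchanged.
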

\begin{proof}
As already recalled, de Luca and Varricchio~\citeyearpar[Section~2, point~4]{deLuca1994} 
show that \(\mathord{\leqq^r_{T(qn)}}\) is maximum in the set of all right \(T(qn)\)-consistent quasiorders, so that  \(u\leq^r_{{qn}}v\) $\Ra$ {$u\leqq^r_{T(qn)} v$}, for all \(u,v\in\Sigma^*\).
By Lemma~\ref{lemma:ocnwqo}, \(\leq^r_{{qn}}\) is a wqo, so that  \(\mathord{\leqq^r_{T(qn)}}\) is a wqo as well. 
Undecidability of \(\mathord{\leqq^r_{T(qn)}}\) follows from the undecidability of the trace inclusion problem for nondeterministic OCNs \cite[Theorem 20]{Hofman:2013:DWS:2591370.2591405} by an argument similar to the automata case.
\end{proof}

It is worth remarking that, by Lemma~\ref{lemma:leftrightnerodegoodqo}~(\ref{lemma:leftrightnerodegoodqo:Consistent}), the left and right Nerode quasiorders \(\mathord{\leqq^l_{T(qn)}}\) and \(\mathord{\leqq^r_{T(qn)}}\) are \(T(qn)\)-consistent. 
However, the left Nerode quasiorder does not need to be a wqo, otherwise \(T(qn)\) would be regular.

We conclude this section by conjecturing that our framework could be instantiated for extending 
Corollary~\ref{theorem:ocncontainment} to traces of Petri Nets, a result 
which is already known to be true~\cite{JANCAR1999476}.

\section{A Novel Perspective on the Antichain Algorithm}%
\label{sec:novel_perspective_AC}

In this section we 
will show how 
to solve the language inclusion problem by computing Kleene iterates 
in an abstract domain of $\wp(\Sigma^*)$ as  
defined by a Galois connection. 
This is of practical interest since it allows us to decide a language inclusion problem \(\lang{\cA} \subseteq L_2\) by manipulating an 
automaton representation for \(L_2\). 

\subsection{A Language Inclusion Algorithm Using Galois Connections}\label{sec:usingGC}
The next result provides a formulation of Theorem~\ref{theorem:FiniteWordsAlgorithmGeneral}
by using a Galois Connection  \(\tuple{\wp(\Sigma^*),\subseteq} \galois{\alpha}{\gamma}\tuple{D,\leq_D}\)
rather than a closure operator $\rho \in \uco(\wp(\Sigma^*)$ and shows 
how to design 
an algorithm that solves a language inclusion 
$\lang{\cA} \subseteq L_2$ by computing Kleene iterates on the abstract domain $D$.

\begin{theorem}\label{theorem:EffectiveAlgorithm}
Let \(\cA=\tuple{Q,\delta,I,F,\Sigma}\) be a FA and \(L_2\in \wp(\Sigma^*)\).
Let \( \tuple{D,\leq_D}\) be a poset and 
\(\tuple{\wp(\Sigma^*),\subseteq} \galois{\alpha}{\gamma}\tuple{D,\leq_D}\) be a GC.
Assume that the following properties hold:
\begin{compactenum}[\upshape(\rm i\upshape)]
\item \(L_2\in\gamma(D)\) and for all \( a \in \Sigma\) and \(X \in \wp(\Sigma^*)\), \(\gamma\alpha(a X) = \gamma\alpha(a \gamma\alpha(X))\).\label{theorem:EffectiveAlgorithm:prop:rho}
\item \(\tuple{D,\leq_D,\sqcup,\bot_D}\) is an effective domain, meaning that: \(\tuple{D,\leq_D,\sqcup,\bot_D}\) is an ACC join-semilattice with bottom $\bot_D$, 
every element of \(D\) has a finite representation, the binary relation 
\(\leq_D\) is decidable and the binary lub \(\sqcup\) is computable.\label{theorem:EffectiveAlgorithm:prop:absdecidable}
\item There is an algorithm, say \(\Pre^{\sharp}\), which computes \(\alpha\comp \Pre_{\cA}\comp \gamma\).\label{theorem:EffectiveAlgorithm:prop:abspre}
\item There is an algorithm, say \(\epsilon^{\sharp}\), which computes \(\alpha(\vectarg{\epsilon}{F})\).\label{theorem:EffectiveAlgorithm:prop:abseps}
\item There is an algorithm, say \(\absincl\), which decides 
\(\vect{X}^\sharp \leq_D \alpha(\vectarg{L_2}{I})\), for all 
\(\vect{X}^\sharp\in \alpha(\wp(\Sigma^*))^{|Q|}\).\label{theorem:EffectiveAlgorithm:prop:absincl}
\end{compactenum}
\medskip
Then,

\medskip
\(\tuple{Y^\sharp_q}_{q\in Q} := \Kleene (\leq_D,\lambda \vect{X}^\sharp\ldotp\epsilon^{\sharp} \sqcup \Pre^{\sharp}(\vect{X}^\sharp), \vect{\bot_D})\)\emph{;}

\emph{\textbf{return}} \(\absincl(\tuple{Y^\sharp_q}_{q\in Q})\)\emph{;}

\medskip
\noindent
is a decision algorithm for \(\lang{\cA} \subseteq L_2\).
\end{theorem}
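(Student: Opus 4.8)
The plan is to run the abstract iteration as an exact shadow of the concrete least-fixpoint computation of Section~\ref{sub:languages_as_fixpoints}, using the closure \(\rho \ud \gamma\alpha \in \uco(\wp(\Sigma^*))\) canonically associated with the Galois Connection. First I would lift the GC componentwise to \(\tuple{\wp(\Sigma^*)^{|Q|},\subseteq}\galois{\alpha}{\gamma}\tuple{D^{|Q|},\leq_D}\) and record the elementary facts to lean on: \(\gamma\alpha\) is a closure, its image \(\gamma\alpha(\wp(\Sigma^*))\) coincides with \(\gamma(D)\) (since \(\gamma=\gamma\alpha\gamma\)), and hypothesis~(\ref{theorem:EffectiveAlgorithm:prop:rho}) is precisely the statement that \(\gamma\alpha\) is backward complete for \(\lambda X.\,aX\) for every \(a\in\Sigma\), i.e.\ \(\gamma\alpha(aX)=\gamma\alpha(a\gamma\alpha(X))\). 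Moreover \(L_2\in\gamma(D)\) together with \(\gamma\alpha(\Sigma^*)=\Sigma^*\) (extensivity) gives \(\gamma\alpha(\vectarg{L_2}{I})=\vectarg{L_2}{I}\). Finally, as \(D\) is an ACC join-semilattice with bottom, \(D^{|Q|}\) is an ACC CPO.

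Next I identify the abstract transformer. Writing \(F\ud\lambda\vect{X}.\,\vectarg{\epsilon}{F}\cup\Pre_{\cA}(\vect{X})\), additivity of \(\alpha\) together with hypotheses~(\ref{theorem:EffectiveAlgorithm:prop:abspre}) and~(\ref{theorem:EffectiveAlgorithm:prop:abseps}) yields \(\lambda\vect{X}^\sharp.\,\epsilon^\sharp\sqcup\Pre^\sharp(\vect{X}^\sharp)=\alpha F\gamma\). The iterates of this monotone map from \(\vect{\bot_D}\) form an ascending chain (as \(\bot_D\) is least), so by ACC of \(D^{|Q|}\) the procedure \(\Kleene(\leq_D,\alpha F\gamma,\vect{\bot_D})\) terminates—each step being computable by hypothesis~(\ref{theorem:EffectiveAlgorithm:prop:absdecidable})—and returns \(\tuple{Y^\sharp_q}_{q\in Q}=\lfp(\alpha F\gamma)\).

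The crux is to relate this abstract fixpoint to the concrete one, i.e.\ to prove \(\tuple{Y^\sharp_q}_{q\in Q}=\alpha(\lfp F)\). By Corollary~\ref{corol:rholfp} applied to \(\rho=\gamma\alpha\) (legitimate by hypothesis~(\ref{theorem:EffectiveAlgorithm:prop:rho})) we have \(\gamma\alpha(\lfp F)=\lfp(\gamma\alpha F)\); applying \(\alpha\) and using the GC identity \(\alpha\gamma\alpha=\alpha\) gives \(\alpha(\lfp F)=\alpha(\lfp(\gamma\alpha F))\). On the other hand, Lemma~\ref{lemma:alpharhoequality} gives \(\gamma(\lfp(\alpha F\gamma))=\lfp(\gamma\alpha F)\); since \(\lfp(\alpha F\gamma)\) is a fixpoint of \(\alpha F\gamma\) it lies in the image of \(\alpha\), so \(\alpha\gamma\) fixes it, and applying \(\alpha\) to the previous equality yields \(\lfp(\alpha F\gamma)=\alpha(\lfp(\gamma\alpha F))\). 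Chaining these equalities proves \(\tuple{Y^\sharp_q}_{q\in Q}=\lfp(\alpha F\gamma)=\alpha(\lfp F)\).

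Finally I conclude correctness. By~\eqref{eq:lfp}, \(\lang{\cA}\subseteq L_2\) iff \(\lfp F\subseteq\vectarg{L_2}{I}\); since \(\vectarg{L_2}{I}=\gamma\alpha(\vectarg{L_2}{I})=\gamma(\alpha(\vectarg{L_2}{I}))\), the adjunction turns this into \(\alpha(\lfp F)\leq_D\alpha(\vectarg{L_2}{I})\), that is \(\tuple{Y^\sharp_q}_{q\in Q}\leq_D\alpha(\vectarg{L_2}{I})\), which hypothesis~(\ref{theorem:EffectiveAlgorithm:prop:absincl}) says \(\absincl\) decides. I expect the main obstacle to be the bookkeeping of the adjunction identities (\(\alpha\gamma\alpha=\alpha\) and \(\alpha\gamma\) fixing the image of \(\alpha\)) needed both for \(\tuple{Y^\sharp_q}_{q\in Q}=\alpha(\lfp F)\) and for transferring the concrete inclusion to the abstract one; termination, computability of the iteration, and the ACC transfer from \(D\) to \(D^{|Q|}\) are routine once the componentwise lifting of the GC is in place.
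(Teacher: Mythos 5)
Your proposal is correct and follows essentially the same route as the paper: both set $\rho=\gamma\alpha$, invoke Corollary~\ref{corol:rholfp} and Lemma~\ref{lemma:alpharhoequality}, use additivity of $\alpha$ to identify the abstract transformer with $\alpha\comp F\comp\gamma$, and transfer the final inclusion through the adjunction using $\gamma\alpha(\vectarg{L_2}{I})=\vectarg{L_2}{I}$. The only difference is presentational: you isolate the identity $\lfp(\alpha F\gamma)=\alpha(\lfp F)$ as an intermediate step, whereas the paper chains the same facts into one sequence of equivalences ending at $\lfp(\alpha F\gamma)\leq_D\alpha(\vectarg{L_2}{I})$.
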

\begin{proof}
Let \(\rho \ud \gamma \alpha\in \uco(\wp(\Sigma^*))\), so that hypothesis~(\ref{theorem:EffectiveAlgorithm:prop:rho}) can be stated as 
\(\rho(L_2)=L_2\) and \(\rho(aX) = \rho(a\rho(X))\), and this allows us to apply 
Corollary~\ref{corol:rholfp}.
It turns out that:
\begin{align*}
	\lang{\cA}\subseteq L_2 &\Lra\quad
	\text{~[by~Corollary~\ref{corol:rholfp} and \eqref{equation:checklfpRLintoRL}]}\\
	\lfp(\lambda \vect{X}\ldotp\gamma\alpha (\vectarg{\epsilon}{F} \cup \Pre_{\cA}(\vect{X}))) \subseteq \vectarg{L_2}{I} &\Lra \quad
	\text{~[by Lemma~\ref{lemma:alpharhoequality}]}\\
		\gamma(\lfp (\lambda \vect{X}^\sharp\ldotp \alpha (\vectarg{\epsilon}{F} \cup \Pre_{\cA}(\gamma(\vect{X}^\sharp))))) \subseteq \vectarg{L_2}{I} &\Lra 
	\quad \text{~[by GC]}\\
	\gamma(\lfp (\lambda \vect{X}^\sharp\ldotp\alpha(\vectarg{\epsilon}{F}) \sqcup \alpha(\Pre_{\cA}(\gamma(\vect{X}^\sharp))))) \subseteq \vectarg{L_2}{I} &\Lra 
	\quad \text{~[by GC and since, by (i), $L_2\in \gamma(D)$]}\\
	\lfp (\lambda \vect{X}^\sharp\ldotp\alpha(\vectarg{\epsilon}{F}) \sqcup \alpha(\Pre_{\cA}(\gamma(\vect{X}^\sharp)))) \leq_D \alpha(\vectarg{L_2}{I}) &\end{align*}

\noindent
Thus, by hypotheses~(\ref{theorem:EffectiveAlgorithm:prop:absdecidable}), (\ref{theorem:EffectiveAlgorithm:prop:abspre}) and (\ref{theorem:EffectiveAlgorithm:prop:abseps}), it turns out that 
\(\Kleene (\leq_D,\lambda \vect{X}^\sharp\ldotp\epsilon^{\sharp} \sqcup \Pre^{\sharp}(\vect{X}^\sharp), \vect{\bot_D})\) is an algorithm computing the least fixpoint
$\lfp (\lambda \vect{X}^\sharp\ldotp\alpha(\vectarg{\epsilon}{F}) \sqcup \alpha(\Pre_{\cA}(\gamma(\vect{X}^\sharp))))$.  In particular, 
(\ref{theorem:EffectiveAlgorithm:prop:absdecidable}), (\ref{theorem:EffectiveAlgorithm:prop:abspre}) and (\ref{theorem:EffectiveAlgorithm:prop:abseps}) ensure that the Kleene iterates of 
$\lambda \vect{X}^\sharp\ldotp\epsilon^{\sharp} \sqcup \Pre^{\sharp}(\vect{X}^\sharp)$
starting from $\vect{\bot_D}$
are computable and finitely many and that 
it is decidable when the iterates converge for $\leq_D$, namely, reach the least fixpoint.
Finally, hypothesis~(\ref{theorem:EffectiveAlgorithm:prop:absincl}) ensures the 
decidability of the  $\leq_D$-inclusion check of this least fixpoint 
in $\alpha(\vectarg{L_2}{I})$.  
\end{proof}

It is worth pointing out that, analogously to Theorem~\ref{theorem:backCompleteRight}, 
the above Theorem~\ref{theorem:EffectiveAlgorithm} can be also stated in a symmetric version
for right (rather than left) concatenation. 
 
\subsection{Antichains as a Galois Connection}\label{sec:usingAntiChains}

Let \(\cA_1 = \tuple{Q_1,\delta_1,I_1,F_1,\Sigma}\) and \(\cA_2 = \tuple{Q_2,\delta_2,I_2,F_2,\Sigma}\) be two FAs 
and consider
the state-based left \(\lang{\cA_2}\)-consistent wqo
 \(\mathord{\leqslant_{\cA_2}^l}\) defined by~\eqref{eqn:state-qo}. 
Theorem~\ref{theorem:quasiorderAlgorithm} shows that Algorithm \AlgRegularW decides \(\lang{\cA_1} \subseteq \lang{\cA_2}\) by
computing vectors of finite sets of words. 
Since \(u \leqslant_{\cA_2}^l v \Lra \pre^{\cA_2}_u(F_2) \subseteq \pre^{\cA_2}_v(F_2)\), we can equivalently consider 
the 
set of states \(\pre^{\cA_2}_u(F_2)\in \wp(Q_2)\) rather than
a word $u\in \Sigma^*$. 
This observation suggests to design a version of  Algorithm~\AlgRegularW that 
computes Kleene iterates on the poset
\(\tuple{\AC_{\tuple{\wp(Q_2),\subseteq}},\sqsubseteq}\) of antichains 
of sets of states of the complete lattice $\tuple{\wp(Q_2),\subseteq}$. 
In order to do this, \(\tuple{\AC_{\tuple{\wp(Q_2),\subseteq}},\sqsubseteq}\) is viewed as an abstract domain through the following maps \(\alpha\colon \wp(\Sigma^*) \ra \AC_{\tuple{\wp(Q_2),\subseteq}}\) and \(\gamma\colon \AC_{\tuple{\wp(Q_2),\subseteq}}\ra\wp(\Sigma^*)\). Moreover, we use 
the abstract function \({\Pre}_{\cA_1}^{\cA_2}:(\AC_{\tuple{\wp(Q_2),\subseteq}})^{|Q_1|}\ra (\AC_{\tuple{\wp(Q_2),\subseteq}})^{|Q_1|}\) defined as follows:
\begin{align}
& \alpha(X) \ud \lfloor \{ \pre_u^{\cA_2}(F_2) \in \wp(Q_2) \mid u\in X\} \rfloor \,,\nonumber\\
& \gamma(Y) \ud \{v \in \Sigma^* \mid \exists y \in Y, \, y \subseteq \pre_{v}^{\cA_2}(F_2)\} \,,\label{def-antichain-state-abs}\\
&\Pre_{\cA_1}^{\cA_2}(\tuple{X_q}_{q\in Q_1}) \ud \langle \lfloor \{ \pre_a^{\cA_2}(S)  \in \wp(Q_2) \mid  \exists a\in \Sigma, q'\in Q_1,q'\in\delta_1(q,a) \wedge S \in X_{q'} \} \rfloor \rangle_{q\in Q_1} \nonumber ,
\end{align}

\noindent
where $\minor{X}$ is the unique minor set w.r.t.\ subset inclusion 
of $X\subseteq \wp(Q_2)$.   
Observe that the functions $\alpha$ and ${\Pre}_{\cA_1}^{\cA_2}$ are well-defined because minors of finite subsets of $\wp(Q_2)$ are uniquely defined antichains. 

\begin{lemma}\label{lemma:rhoisgammaalpha}
The following properties hold:
\begin{compactenum}[\upshape(\rm a\upshape)]
\item \(\tuple{\wp(\Sigma^*),\subseteq}\galois{\alpha}{\gamma}\tuple{\AC_{\tuple{\wp(Q_2),\subseteq}},\sqsubseteq}\) is a GC.
\label{lemma:rhoisgammaalpha:GC}
\item \(\gamma \comp \alpha = \rho_{\leqslant^l_{\cA_2}}\).\label{lemma:rhoisgammaalpha:rho}
\item \(\Pre_{\cA_1}^{\cA_2} = {\alpha\comp \Pre_{\cA_1} \comp \gamma }\). \label{lemma:rhoisgammaalpha:pre}
\end{compactenum}
\end{lemma}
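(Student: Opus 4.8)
The three items are essentially independent, so the plan is to dispatch (a) and (b) as direct definition chases and to reserve the real effort for (c). For (a) I would verify the adjunction condition $\alpha(X)\sqsubseteq Y \Lra X\subseteq\gamma(Y)$ by hand, for every $X\in\wp(\Sigma^*)$ and every antichain $Y$. Two facts do all the work: every element of $\alpha(X)=\minor{\{\pre^{\cA_2}_u(F_2)\mid u\in X\}}$ is of the shape $\pre^{\cA_2}_u(F_2)$ with $u\in X$, and the defining minor property $\{\pre^{\cA_2}_u(F_2)\mid u\in X\}\sqsubseteq\alpha(X)$ holds. For the forward direction, given $v\in X$ the minor property yields $s\in\alpha(X)$ with $s\subseteq\pre^{\cA_2}_v(F_2)$, and $\alpha(X)\sqsubseteq Y$ then yields $y\in Y$ with $y\subseteq s\subseteq\pre^{\cA_2}_v(F_2)$, i.e.\ $v\in\gamma(Y)$; for the backward direction I take $s=\pre^{\cA_2}_u(F_2)\in\alpha(X)$ and use $u\in X\subseteq\gamma(Y)$ to extract $y\in Y$ with $y\subseteq s$. (One could instead observe that $\gamma$ preserves meets and invoke the adjoint criterion, but the computation is shorter.)

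For (b), I would unfold $\gamma\alpha(X)=\{v\mid\exists y\in\alpha(X),\,y\subseteq\pre^{\cA_2}_v(F_2)\}$ and use the characterisation $u\leqslant^l_{\cA_2}v\Lra\pre^{\cA_2}_u(F_2)\subseteq\pre^{\cA_2}_v(F_2)$ coming from \eqref{eqn:state-qo} together with the definition \eqref{eq:qo-up-closure} of $\rho_{\leqslant^l_{\cA_2}}$. The inclusion $\gamma\alpha(X)\subseteq\rho_{\leqslant^l_{\cA_2}}(X)$ is immediate because each witness $y\in\alpha(X)$ is some $\pre^{\cA_2}_u(F_2)$ with $u\in X$; the reverse inclusion again replaces a witness $\pre^{\cA_2}_u(F_2)$ by a smaller element of $\alpha(X)$ via the minor property. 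So (b) is really a corollary of the same two facts used in (a).

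The substance is in (c). I would first use additivity of $\alpha$ (it is a left adjoint by (a)) to distribute it over the union defining $\Pre_{\cA_1}$, together with the composition law $\pre^{\cA_2}_{av}=\pre^{\cA_2}_a\circ\pre^{\cA_2}_v$ from \eqref{eq:prepre}. This rewrites the $q$-component of $\alpha\,\Pre_{\cA_1}\,\gamma(\vect{X}^\sharp)$ as $\minor{A}$, where $A=\{\pre^{\cA_2}_a(\pre^{\cA_2}_v(F_2))\mid a\in\Sigma,\,q'\in\delta_1(q,a),\,v\in\gamma(X^\sharp_{q'})\}$, whereas $\Pre_{\cA_1}^{\cA_2}(\vect{X}^\sharp)_q=\minor{B}$ with $B=\{\pre^{\cA_2}_a(S)\mid a\in\Sigma,\,q'\in\delta_1(q,a),\,S\in X^\sharp_{q'}\}$. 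Since minors are unique in the poset $\tuple{\AC,\sqsubseteq}$, it suffices to prove $A\sqsubseteq B$ and $B\sqsubseteq A$. The direction $A\sqsubseteq B$ is the easy one: for $v\in\gamma(X^\sharp_{q'})$ there is $S\in X^\sharp_{q'}$ with $S\subseteq\pre^{\cA_2}_v(F_2)$, and monotonicity of $\pre^{\cA_2}_a$ gives $\pre^{\cA_2}_a(S)\subseteq\pre^{\cA_2}_a(\pre^{\cA_2}_v(F_2))$.

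The main obstacle is $B\sqsubseteq A$: given $S\in X^\sharp_{q'}$ I must exhibit a word $v\in\gamma(X^\sharp_{q'})$ with $\pre^{\cA_2}_a(\pre^{\cA_2}_v(F_2))\subseteq\pre^{\cA_2}_a(S)$, and the only natural way forces $\pre^{\cA_2}_v(F_2)=S$. This is achievable exactly when $S$ is \emph{realisable}, i.e.\ $S=\pre^{\cA_2}_u(F_2)$ for some word $u$ — which is precisely what holds when $\vect{X}^\sharp$ lies in the image $\alpha(\wp(\Sigma^*))^{|Q_1|}$, since every element of $\alpha(X)$ is some $\pre^{\cA_2}_u(F_2)$ with $u\in X$. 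I would therefore state and prove (c) for the abstract values the fixpoint computation actually manipulates, namely those in $\alpha(\wp(\Sigma^*))$ (a set containing $\epsilon^\sharp$ and the outputs of $\Pre^\sharp$ and closed under $\sqcup$ by additivity of $\alpha$): picking $u=u_S$ with $\pre^{\cA_2}_{u_S}(F_2)=S$, one gets $u_S\in\gamma(X^\sharp_{q'})$ (witnessed by $S\subseteq S$) and $\pre^{\cA_2}_a(\pre^{\cA_2}_{u_S}(F_2))=\pre^{\cA_2}_a(S)$, closing $B\sqsubseteq A$. I expect this realisability/domain point to be the only genuinely delicate step: a non-realisable singleton antichain $\{\{q\}\}$ (one whose required predecessors never arise) shows the two transformers can disagree on arbitrary antichains, so making explicit that the equality is used on the image of $\alpha$ is exactly where the care must go.
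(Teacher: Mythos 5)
Your proof is correct, and parts (a) and (b) follow essentially the same definition-chasing route as the paper's own proof. For part (c) the two arguments diverge in their final step: the paper proceeds by a single chain of equalities, pushing the minor operator inward via \(\minor{\{\pre^{\cA_2}_a(S)\mid S\in Y\}} = \minor{\{\pre^{\cA_2}_a(S)\mid S\in \minor{Y}\}}\) so as to recognize \(\alpha(\gamma(\vect{X}_{q'}))\) inside the expression and then collapsing it using \(\alpha\comp\gamma=\id\) on the image of \(\alpha\), whereas you compare the two generating sets \(A\) and \(B\) by mutual \(\sqsubseteq\)-domination and conclude \(\minor{A}=\minor{B}\) from antisymmetry of \(\sqsubseteq\) on antichains over the poset \(\tuple{\wp(Q_2),\subseteq}\). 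Both hinge on the same realizability fact --- every element of an abstract value in \(\img \alpha\) is of the form \(\pre^{\cA_2}_u(F_2)\) for some \(u\) in its concretization --- so the difference is largely presentational. What your version buys is that it makes explicit the restriction to \(\vect{X}^\sharp\in\alpha(\wp(\Sigma^*))^{|Q_1|}\): the paper's proof opens with a quantification over all antichain vectors but then silently uses that restriction in the step justified by ``since \(\vect{X}\in\alpha\), \(\alpha(\gamma(\vect{X}_{q'}))=\vect{X}_{q'}\)''. Your observation that the equality can genuinely fail on antichains containing non-realizable sets is accurate, and the restriction is harmless, since Theorem~\ref{theorem:statesQuasiorderAlgorithm} only ever invokes item (c) on vectors in the image of \(\alpha\).
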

\begin{proof}\hfill

\begin{enumerate}[\upshape(\rm a\upshape)]
\item 
Let us first observe that $\alpha$ is well-defined: in fact,
$\alpha(X)$ is an antichain of  $\tuple{\wp(Q_2),\subseteq}$ since it is a minor for the well-quasiorder \(\subseteq\) and, therefore, it is finite.
Then, for all $X\in \wp(\Sigma^*)$ and 
$Y\in \AC_{\tuple{\wp(Q_2),\subseteq}}$, 
it turns out that:
\begin{align*}
\alpha(X) \sqsubseteq Y & \Lra \quad\text{~[by definition of \(\sqsubseteq\)]} \\
\forall z \in \alpha(X), \exists y \in Y, \; y \subseteq z &\Lra \quad\text{~[by definition of  \(\alpha\)]} \\
\forall z \in \lfloor \{ \pre_u^{\cA_2}(F_2) \in \wp(Q_2) \mid u\in X\} \rfloor, \exists y \in Y, \; y \subseteq z &\Lra \quad\text{~[by definition of \(\minor{\cdot}\)]} \\
\forall u \in X, \exists y \in Y, \; y \subseteq \pre^{\cA_2}_u(F_2) &\Lra  \quad\text{~[by definition of  \(\gamma\)]} \\
X \subseteq \gamma(Y) &\enspace . 
\end{align*}

\item For all \(X \in \wp(\Sigma^*)\):
\begin{align*}
  \gamma(\alpha(X)) &=\quad\text{~[by definition of $\alpha,\gamma$]}\\
  \{v \in \Sigma^* \mid \exists u\in \Sigma^*, \pre_{u}^{\cA_2}(F_2) \in \lfloor \{ \pre_w^{\cA_2}(F_2) \mid w\in X\} \rfloor  
  \span \land \pre_{u}^{\cA_2}(F_2) \subseteq \pre_{v}^{\cA_2}(F_2)\} \\
  &=\quad\text{~[by definition of minor]} \\
  \{v \in \Sigma^* \mid \exists u\in X,\, \pre_{u}^{\cA_2}(F_2) \subseteq \pre_{v}^{\cA_2}(F_2)\} &=\quad\text{~[by definition of \(\mathord{\leqslant^l_{\cA_2}}\)]}\\
  \{v \in \Sigma^* \mid \exists u \in X ,\,  u \leqslant^l_{\cA_2} v\}&=\quad\text{~[by definition of\ \(\rho_{\leqslant^l_{\cA_2}}\)]}\\
  \rho_{\leqslant^l_{\cA_2}}(X) &\enspace .
\end{align*}

\item For all \(\vect{X} \in (\AC_{\tuple{\wp(Q_2),\subseteq}})^{|Q_1|}\):
\begin{align*}
\alpha(\Pre_{\cA_1}(\gamma(\vect{X}))) &= \qquad\qquad\qquad\\
\span\specialcell{\hfill \text{~[by definition of \(\Pre_{\cA_1}\)]}}\\
\tuple{\alpha({\textstyle \bigcup_{a \in \Sigma, q\ggoes{a}_{\cA_1} q'}} a\gamma(\vect{X}_{q'}))}_{q \in Q_1} &=\\
\span\specialcell{\hfill \text{~[by definition of \(\alpha\)]}}\\
\langle\lfloor \{ \pre^{\cA_2}_u(F_2)  \mid u \in {\textstyle \bigcup_{a \in \Sigma, q\ggoes{a}_{\cA_1} q'}} a\gamma(\vect{X}_{q'})\rfloor\rangle_{q\in Q_1} &= \\
\span\specialcell{\hfill\text{~[by \(\pre^{\cA_2}_{av} = \pre^{\cA_2}_a\comp \pre^{\cA_2}_v\)]}}\\
\langle\lfloor \{ \pre^{\cA_2}_a(\{ \pre^{\cA_2}_v(F_2) \mid v \in {\textstyle\bigcup_{q\ggoes{a}_{\cA_1} q'}}\gamma(\vect{X}_{q'})\})  \mid a \in \Sigma\}\rfloor\rangle_{q\in Q_1} &=\\
\span\specialcell{\hfill\text{~[by rewriting]}}\\
\langle\lfloor \{ \pre^{\cA_2}_a(S)  \mid a \in \Sigma, q\ggoes{a}_{\cA_1} q', S\in \{ \pre^{\cA_2}_v(F_2) \mid v \in \gamma(\vect{X}_{q'})\}\}\rfloor\rangle_{q\in Q_1} &= \\
\span\specialcell{\hfill\text{~[by \( \minor{\{\pre^{\cA_2}_a(S)\mid S\in Y\}} = \minor{\{\pre^{\cA_2}_a(S)\mid S\in \minor{Y}\}}\)]}}\\
\langle\lfloor \{ \pre^{\cA_2}_a(S)  \mid a \in \Sigma, q\ggoes{a}_{\cA_1} q', S\in \minor{\{ \pre^{\cA_2}_v(F_2) \mid v \in \gamma(\vect{X}_{q'})\}}\}\rfloor\rangle_{q\in Q_1} & = \\
\span\specialcell{\hfill\text{~[by definition of \(\alpha\)]}}\\
\langle\lfloor \{ \pre^{\cA_2}_a(S)  \mid a \in \Sigma, q\ggoes{a}_{\cA_1} q', S\in \alpha(\gamma(\vect{X}_{q'}))\rfloor\rangle_{q\in Q_1} &= \\
\span\specialcell{\hfill\text{~[since \(\vect{X} \in \alpha\), \(\alpha(\gamma(\vect{X}_{q'})) = \vect{X}_{q'}\)]}}\\
\langle \lfloor \{ \pre^{\cA_2}_a(S)  \mid a\in\Sigma, q\ggoes{a}_{\cA_1}q', S \in \vect{X}_{q'} \} \rfloor   \rangle_{q\in Q_1} &=\\ 
\span\specialcell{\hfill\text{~[by definition of ${\Pre}_{\cA_1}^{\cA_2}$]}}\\
{\Pre}_{\cA_1}^{\cA_2}(\vect{X}) & \enspace .\tag*{\qedhere} 
\end{align*}%
\end{enumerate}%
\end{proof}

Thus, by Lemma~\ref{lemma:LAconsistent} and 
Lemma~\ref{lemma:rhoisgammaalpha}, it turns out that the GC \(\tuple{\wp(\Sigma^*),\subseteq}\galois{\alpha}{\gamma}\tuple{\AC_{\tuple{\wp(Q_2),\subseteq}},\sqsubseteq}\) and the abstract 
function \(\Pre_{\cA_1}^{\cA_2}\) satisfy the hypotheses (\ref{theorem:EffectiveAlgorithm:prop:rho})-(\ref{theorem:EffectiveAlgorithm:prop:abseps}) of Theorem~\ref{theorem:EffectiveAlgorithm}.
In order to obtain an algorithm for deciding \(\lang{\cA_1} \subseteq \lang{\cA_2}\) it remains to show that the hypothesis~(\ref{theorem:EffectiveAlgorithm:prop:absincl}) of Theorem~\ref{theorem:EffectiveAlgorithm} holds, i.e., 
there is an algorithm to decide whether \(\vect{Y} \sqsubseteq \alpha(\vectarg{L_2}{I_2})\) for every \(\vect{Y} \in \alpha(\wp(\Sigma^*))^{|Q_1|}\).

Notice that the Kleene iterates of  \(\lambda \vect{X}^\sharp\ldotp\alpha(\vectarg{\epsilon}{F_1}) \sqcup \Pre_{\cA_1}^{\cA_2}(\vect{X}^\sharp)\) of Theorem~\ref{theorem:EffectiveAlgorithm} are vectors of antichains in \(\tuple{\AC_{\tuple{\wp(Q_2),\subseteq}},\sqsubseteq}\), where 
each component is indexed by some \(q\in Q_1\) and represents, 
through its minor, a set of sets of states that are predecessors of \(F_2\) in \(\cA_2\) through a word $u$ generated by \(\cA_1\) from that state \(q\), i.e., \(\pre_u^{\cA_2}(F_2)\) with \(u \in W^{\cA_1}_{q,F_1}\).
Since \(\epsilon \in W_{q,F_1}^{\cA_1}\) for all \(q \in F_1\) and \(\pre_\epsilon^{\cA_2}(F_2) = F_2\), the first 
iteration of $\Kleene$ gives the vector \(\alpha(\vectarg{\epsilon}{F_1}) = \tuple{\nullable{q\inqm F_1}{F_2}{\varnothing}}_{q\in Q_1}\).
Let us also observe that by taking the minor of each vector component, 
we are considering smaller sets which still preserve the relation \(\sqsubseteq\) since 
the following equivalences hold: \(A \sqsubseteq B \Lra \minor{A} \sqsubseteq B \Lra A \sqsubseteq \minor{B} \Lra \minor{A} \sqsubseteq \minor{B}\).
Let \(\tuple{Y_q}_{q\in Q_1}\) be the output of 
$\Kleene (\sqsubseteq, \lambda \vect{X}^\sharp\ldotp\alpha(\vectarg{\epsilon}{F_1}) \sqcup \Pre_{\cA_1}^{\cA_2}(\vect{X}^\sharp),\vect{\varnothing})$. 
Hence, we have that, for each component $q\in Q_1$, \(Y_q = \minor{\{\pre_u^{\cA_2}(F_2)\mid u \in W_{q,F_1}^{\cA_1}\}}\) holds.
Whenever the inclusion \(\lang{\cA_1} \subseteq \lang{\cA_2}\) holds, all the sets of states in \(Y_q\) for some initial state \(q \in I_1\) are predecessors of \(F_2\) in \(\cA_2\) through words in \(\lang{\cA_2}\), so that for
each $q\in I_1$ and $S\in Y_q$, $S \cap I_2\neq\varnothing$ must hold. 
As a result, the following  state-based
algorithm \AlgRegularA  ($\mathtt{S}$ stands for state) decides the
inclusion \(\lang{\cA_1} \subseteq \lang{\cA_2}\) by computing on the abstract domain of antichains \(\tuple{\AC_{\tuple{\wp(Q_2),\subseteq}},\sqsubseteq}\). 

\begin{figure}[H]
\RemoveAlgoNumber
\begin{algorithm}[H]
\SetAlgorithmName{\AlgRegularA}{}

\caption{State-based algorithm for {\(\lang{\cA_1} \subseteq \lang{\cA_2}\)}}\label{alg:RegIncA}

\KwData{FAs \(\cA_1=\tuple{Q_1,\delta_1,I_1,F_1,\Sigma}\) and \(\cA_2=\tuple{Q_2,\delta_2,I_2,F_2,\Sigma}\).}
\medskip
\(\tuple{Y_q}_{q\in Q_1} := \Kleene (\sqsubseteq, \lambda \vect{X}^\sharp\ldotp\alpha(\vectarg{\epsilon}{F_1}) \sqcup \Pre_{\cA_1}^{\cA_2}(\vect{X}^\sharp),\vect{\varnothing})\)\;

\ForAll{\(q \in I_1\)}{
	\ForAll{\(S \in Y_q\)} {
		\lIf{\(S \cap I_2 = \varnothing\)}{\Return \textit{false}}
	}
}
\Return \textit{true}\;
\end{algorithm}
\end{figure}

\begin{theorem}\label{theorem:statesQuasiorderAlgorithm}
	The algorithm \AlgRegularA decides the inclusion problem \(\lang{\cA_1} \subseteq \lang{\cA_2}\).
\end{theorem}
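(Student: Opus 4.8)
The plan is to instantiate Theorem~\ref{theorem:EffectiveAlgorithm} with $L_2 = \lang{\cA_2}$, abstract domain $D = \AC_{\tuple{\wp(Q_2),\subseteq}}$, the Galois Connection $\tuple{\wp(\Sigma^*),\subseteq}\galois{\alpha}{\gamma}\tuple{\AC_{\tuple{\wp(Q_2),\subseteq}},\sqsubseteq}$ of Lemma~\ref{lemma:rhoisgammaalpha}~(\ref{lemma:rhoisgammaalpha:GC}), and the abstract transformer $\Pre^{\sharp} = \Pre_{\cA_1}^{\cA_2}$. First I would discharge hypotheses~(i)--(iv). For~(i), Lemma~\ref{lemma:rhoisgammaalpha}~(\ref{lemma:rhoisgammaalpha:rho}) gives $\gamma\alpha = \rho_{\leqslant^l_{\cA_2}}$, so the backward-completeness identity $\gamma\alpha(aX)=\gamma\alpha(a\gamma\alpha(X))$ and the fixpoint equality $\gamma\alpha(\lang{\cA_2})=\lang{\cA_2}$ (hence $\lang{\cA_2}\in\gamma(D)$) follow from Lemma~\ref{lemma:LAconsistent} together with Lemma~\ref{lemma:properties}. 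Hypothesis~(ii) holds because $\wp(Q_2)$ is a finite poset, so $\tuple{\AC_{\tuple{\wp(Q_2),\subseteq}},\sqsubseteq}$ is an ACC join-semilattice whose elements are finite antichains, $\sqsubseteq$ is decidable, and binary lubs are computable. Hypothesis~(iii) is exactly Lemma~\ref{lemma:rhoisgammaalpha}~(\ref{lemma:rhoisgammaalpha:pre}), whose right-hand side $\Pre_{\cA_1}^{\cA_2}$ is computable from its definition; and~(iv) holds since $\pre^{\cA_2}_\epsilon(F_2)=F_2$ yields $\alpha(\vectarg{\epsilon}{F_1}) = \tuple{\nullable{q\inqm F_1}{F_2}{\varnothing}}_{q\in Q_1}$.

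It then remains to establish hypothesis~(v): decidability of the final check $\vect{Y}\sqsubseteq \alpha(\vectarg{\lang{\cA_2}}{I_2})$ for the output $\vect{Y}=\tuple{Y_q}_{q\in Q_1}$ of $\Kleene$, and its equivalence with lines~2--5 of \AlgRegularA. By Theorem~\ref{theorem:EffectiveAlgorithm}, via Corollary~\ref{corol:rholfp} and Lemma~\ref{lemma:alpharhoequality}, the iterates compute the abstract least fixpoint, which equals $\alpha(\tuple{W^{\cA_1}_{q,F_1}}_{q\in Q_1})$; hence $Y_q = \minor{\{\pre^{\cA_2}_u(F_2)\mid u\in W^{\cA_1}_{q,F_1}\}}$ for each $q$, a finite antichain. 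I would first dispose of the components $q\notin I_1$: there $(\vectarg{\lang{\cA_2}}{I_2})_q = \Sigma^*$ and $\gamma\alpha(\Sigma^*)=\rho_{\leqslant^l_{\cA_2}}(\Sigma^*)=\Sigma^*$, so $Y_q\sqsubseteq\alpha(\Sigma^*)$ holds vacuously, which matches the fact that the algorithm inspects only initial states.

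The core step is the characterization, for $q\in I_1$,
\[
Y_q \sqsubseteq \alpha(\lang{\cA_2}) \;\Lra\; \forall S\in Y_q,\; S\cap I_2\neq\varnothing,
\]
which shows that the membership-free test of lines~2--5 decides~(v). The pivot is the elementary equivalence $\pre^{\cA_2}_u(F_2)\cap I_2\neq\varnothing \Lra u\in\lang{\cA_2}$, valid because a state in $\pre^{\cA_2}_u(F_2)\cap I_2$ is precisely an initial state from which $u$ reaches $F_2$. For $(\Leftarrow)$, if $S=\pre^{\cA_2}_u(F_2)\in Y_q$ meets $I_2$ then $u\in\lang{\cA_2}$, so $S\in\{\pre^{\cA_2}_v(F_2)\mid v\in\lang{\cA_2}\}$ and its minor $\alpha(\lang{\cA_2})$ contains some $T\subseteq S$, giving $Y_q\sqsubseteq\alpha(\lang{\cA_2})$. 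For $(\Rightarrow)$, any $S\in Y_q$ dominates some $T=\pre^{\cA_2}_v(F_2)\in\alpha(\lang{\cA_2})$ with $v\in\lang{\cA_2}$, whence $\varnothing\neq T\cap I_2\subseteq S\cap I_2$. I expect this characterization, and in particular the interaction between the minor operator and the initial-state test, to be the only genuinely delicate point; the rest is bookkeeping already supplied by Theorem~\ref{theorem:EffectiveAlgorithm}. Finally, since each $Y_q$ is finite and $S\cap I_2\neq\varnothing$ is trivially decidable, hypothesis~(v) holds and \AlgRegularA decides $\lang{\cA_1}\subseteq\lang{\cA_2}$.
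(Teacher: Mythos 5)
Your proposal is correct and follows essentially the same route as the paper's proof: instantiate Theorem~\ref{theorem:EffectiveAlgorithm} with the Galois Connection of Lemma~\ref{lemma:rhoisgammaalpha}, discharge hypotheses (i)--(iv) via Lemmata~\ref{lemma:properties}, \ref{lemma:LAconsistent} and \ref{lemma:rhoisgammaalpha}, and reduce hypothesis (v) to the characterization \(Y_q \sqsubseteq \alpha(\lang{\cA_2}) \Lra \forall S\in Y_q,\, S\cap I_2\neq\varnothing\). The only (immaterial) difference is that the paper derives this last equivalence by passing through the adjunction \(\alpha(U)\sqsubseteq\alpha(L_2)\Lra U\subseteq\gamma\alpha(L_2)=L_2\), whereas you argue directly on the antichain elements via \(\pre^{\cA_2}_u(F_2)\cap I_2\neq\varnothing\Lra u\in\lang{\cA_2}\); both arguments are sound.
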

\begin{proof}
We show that all the hypotheses~(i)-(v) of Theorem~\ref{theorem:EffectiveAlgorithm} are satisfied for the abstract domain \(\tuple{D,\leq_D}=\tuple{\AC_{\tuple{\wp(Q_2),\subseteq}},\sqsubseteq}\) as defined by the GC of Lemma~\ref{lemma:rhoisgammaalpha}. 

\begin{compactenum}[\upshape(i\upshape)]
\item Since \(\rho_{\leq^l_{\cA_2}}(X) = \gamma(\alpha(X))\), it follows from Lemmata~\ref{lemma:properties} and~\ref{lemma:LAconsistent} that \(L_2\in \gamma(D)\).
Moreover, by Lemma~\ref{lemma:properties}~(\ref{lemma:properties:bw}) with \(\rho_{\leq^l_{\cA_2}} = \gamma\alpha\), we have that
for all \(a\in\Sigma\), \(X\in\wp(\Sigma^*)\), \(\gamma(\alpha(a X)) = \gamma(\alpha(a\gamma(\alpha(X))))\).

\item \( \tuple{\AC_{\tuple{\wp(Q_2),\subseteq}},\sqsubseteq,\sqcup,\varnothing}\) is an effective domain 
because 
$Q_2$ is finite.
\item  By Lemma~\ref{lemma:rhoisgammaalpha}~(\ref{lemma:rhoisgammaalpha:pre}) we have that
\(\alpha(\Pre_{\cA_1}(\gamma(\vect{X}^\sharp))) = {\Pre}_{\cA_1}^{\cA_2}(\vect{X}^\sharp)\) for all \(\vect{X}^\sharp\in \alpha(\wp(\Sigma^*))^{|Q_1|}\), and ${\Pre}_{\cA_1}^{\cA_2}$ is computable.
\item \(\alpha(\{\epsilon\}) = \{F_2\}\) and \(\alpha({\varnothing})=\varnothing\), hence \(\alpha(\vectarg{\epsilon}{F_1})\) is trivial to compute. \label{prop:alphaepsilon}

\item Since \(\alpha(\vectarg{L_2}{I_1})=\tuple{\alpha(\nullable{q\inqm I_1}{L_2}{\Sigma^*})}_{q\in Q_1}\), for all $\vect{Y}\in\alpha(\wp(\Sigma^*))^{|Q_1|}$   the relation \(\tuple{Y_q}_{q\in Q_1} \sqsubseteq \alpha(\vectarg{L_2}{I_1})\) trivially holds for all components \(q \notin I_1\), since $\alpha(\Sigma^*)$ is the greatest antichain. 
For the components $q\in I_1$, it suffices to show that
\(Y_q \sqsubseteq \alpha(L_2) \Lra \forall S \in Y_q, \; S \cap I_2 \neq \varnothing\), which is the check performed by lines 2-5 of algorithm \AlgRegularA:
\begin{align*}
Y_q \sqsubseteq \alpha(L_2) & \Lra \quad \text{~[because \(Y_q = \alpha(U)\) for some \(U \in \wp(\Sigma^*)\)]} \\
\alpha(U) \sqsubseteq \alpha(L_2) & \Lra \quad \text{~[by GC]} \\
U \subseteq \gamma(\alpha(L_2)) & \Lra \quad \text{~[by (i), $\gamma(\alpha(L_2))=L_2$]} \\
U \subseteq L_2 & \Lra \quad \text{~[by definition of \(\pre_u^{\cA_2}\)]} \\
\forall u \in U, \pre_u^{\cA_2}(F_2) \cap I_2 \neq \varnothing & \Lra \quad \text{~[because \(Y_q=\alpha(U)= \lfloor \{ \pre_u^{\cA_2}(F_2) \mid u\in U\} \rfloor \)]} \\
\forall S \in Y_q, S \cap I_2 \neq \varnothing &\enspace .
\end{align*}

\end{compactenum}
Thus, by Theorem~\ref{theorem:EffectiveAlgorithm}, the algorithm \AlgRegularA solves the inclusion problem \(\lang{\cA_1} \subseteq \lang{\cA_2}\). %
\end{proof}

\subsection{Relationship to the Antichain Algorithm}%
\label{sub:relationship_to_the_antichain_algorithm}
De Wulf et al.~\shortcite{DBLP:conf/cav/WulfDHR06} introduced two so-called antichain algorithms, called 
forward and backward, for deciding the universality of the language accepted by a FA, i.e., whether the language is $\Sigma^*$ or not.
Then, they extended the backward algorithm in order to decide inclusion of languages accepted by FAs.
In what follows we show that the above algorithm \AlgRegularA is equivalent to the corresponding extension of the forward antichain algorithm and, therefore, dual to the backward antichain algorithm for language inclusion put forward by De Wulf et al.~\shortcite[Theorem 6]{DBLP:conf/cav/WulfDHR06}.
In order to do this, we first define the poset of antichains in which the forward antichain algorithm computes its fixpoint.
Then, we give a formal definition of the forward antichain algorithm for deciding language inclusion and show that this algorithm coincides with \AlgRegularA when applied to the reverse automata.
Since language inclusion between the languages generated by two FAs holds if{}f inclusion holds between the languages generated by their reverse FAs, this entails that our algorithm \AlgRegularA is equivalent to the forward antichain algorithm.

Consider a language inclusion problem 
\(\lang{\cA_1} \subseteq \lang{\cA_2}\) where
\(\cA_1=\tuple{Q_1,\delta_1,I_1,F_1,\Sigma}\) and \(\cA_2=\tuple{Q_2,\delta_2,I_2,F_2,\Sigma}\).
Let us consider the following poset of antichains 
\( \tuple{\AC_{\tuple{\wp(Q_2),\subseteq}},\wsqsubseteq} \) where
\[X \wsqsubseteq Y \udr \forall y \in Y, \exists x \in X, \; x \subseteq y\enspace \]
and notice that \(\wsqsubseteq\) coincides with the reverse 
\(\sqsubseteq^{-1}\) of the relation defined by \eqref{ordering-definition}. 
As observed by De Wulf et al.~\shortcite[Lemma~1]{DBLP:conf/cav/WulfDHR06}, it turns out that \( \tuple{\AC_{\tuple{\wp(Q_2),\subseteq}},\wsqsubseteq, \wsqcup, \wsqcap, \{\varnothing\}, \varnothing} \) is a finite lattice, where \(\wsqcup\) and \(\wsqcap\) denote, resp., lub and glb, and $\{\varnothing\}$ and $\varnothing$ are, resp., the least and greatest elements. 
This lattice \( \tuple{\AC_{\tuple{\wp(Q_2),\subseteq}},\wsqsubseteq} \) is the domain in which the forward antichain algorithm computes on for deciding language universality \cite[Theorem~3]{DBLP:conf/cav/WulfDHR06}.
The following result extends this forward algorithm in order to decide language inclusion.

\begin{theorem}[\textbf{{\cite[Theorems~3 and 6]{DBLP:conf/cav/WulfDHR06}}}] \label{theorem:antichainpaper}
Let
\begin{align*}
\vect{\fp} \ud \textstyle{\wbigsqcup}\{\vect{X} \in (\AC_{\tuple{\wp(Q_2),\subseteq}})^{|Q_1|} \mid \vect{X} = \Post_{\cA_1}^{\cA_2}(\vect{X})\;\wsqcap\; \tuple{\nullable{q \inqm I_1}{\{I_2\}}{\varnothing}}_{q\in Q_1}\}
\end{align*}
where \(\Post_{\cA_1}^{\cA_2}(\tuple{X_q}_{q\in Q_1}) \ud  \langle \minor{\{\post_a^{\cA_2}(S) \in \wp(Q_2) \mid \exists a \in \Sigma, q'\in Q_1,\, 
q\in\delta_1(q',a) \wedge S \in X_{q'}  \}} \rangle_{q \in Q_1}\).
Then, \(\lang{\cA_1} \nsubseteq \lang{\cA_2}\) if and only if there exists \(q \in F_1\) such that \(\vect{\fp}_q \,\wsqsubseteq\, \{F_2^c\} \).
\end{theorem}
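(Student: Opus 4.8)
This statement is De Wulf et al.'s forward antichain characterization, and my plan is to reprove it inside the framework already developed, using order duality together with the \(\Post\)-version of the antichain Galois connection. First I would record that \(\tuple{\AC_{\tuple{\wp(Q_2),\subseteq}},\wsqsubseteq}\) is a finite, hence complete, lattice (as cited from De Wulf et al.) and that the map \(G \ud \lambda \vect{X}.\,\Post_{\cA_1}^{\cA_2}(\vect{X}) \wsqcap \tuple{\nullable{q \inqm I_1}{\{I_2\}}{\varnothing}}_{q\in Q_1}\) is monotonic, being the meet of a constant with the monotone operator \(\Post_{\cA_1}^{\cA_2}\). By Knaster--Tarski the join \(\vect{\fp}\) of all fixpoints of \(G\) is itself the \(\wsqsubseteq\)-greatest fixpoint of \(G\). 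Since \(\wsqsubseteq\) is exactly \(\sqsubseteq^{-1}\), this greatest fixpoint coincides with the \(\sqsubseteq\)-least fixpoint of \(\lambda \vect{X}.\,\tuple{\nullable{q \inqm I_1}{\{I_2\}}{\varnothing}}_{q\in Q_1} \sqcup \Post_{\cA_1}^{\cA_2}(\vect{X})\), where \(\sqcup\) denotes the \(\sqsubseteq\)-lub, which coincides with \(\wsqcap\) read in the reversed order.

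Next I would put this fixpoint into the form handled by the symmetric (right-concatenation) version of Theorem~\ref{theorem:EffectiveAlgorithm}. Let \(\alpha(X) \ud \minor{\{\post_w^{\cA_2}(I_2) \mid w\in X\}}\) and \(\gamma(Y) \ud \{v\in\Sigma^* \mid \exists y\in Y,\, y\subseteq \post_v^{\cA_2}(I_2)\}\) be the \(\Post\)-analogues of the maps in \eqref{def-antichain-state-abs}. The symmetric version of Lemma~\ref{lemma:rhoisgammaalpha} gives that \(\tuple{\wp(\Sigma^*),\subseteq}\galois{\alpha}{\gamma}\tuple{\AC_{\tuple{\wp(Q_2),\subseteq}},\sqsubseteq}\) is a GC, that \(\gamma\alpha = \rho_{\leq^r_{\cA_2}}\), and that \(\alpha\comp\Post_{\cA_1}\comp\gamma = \Post_{\cA_1}^{\cA_2}\); moreover \(\alpha(\vectarg{\epsilon}{I_1}) = \tuple{\nullable{q \inqm I_1}{\{I_2\}}{\varnothing}}_{q\in Q_1}\) since \(\post_\epsilon^{\cA_2}(I_2)=I_2\). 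Because \(\rho_{\leq^r_{\cA_2}}\) is backward complete for right concatenation (Lemmata~\ref{lemma:properties} and~\ref{lemma:LAconsistent}), the reasoning in the proof of Theorem~\ref{theorem:EffectiveAlgorithm} (Lemma~\ref{lemma:alpharhoequality} together with the symmetric version of Corollary~\ref{corol:rholfp}, via Theorem~\ref{theorem:backCompleteRight}) lets me transfer the abstract least fixpoint to an abstraction of the concrete one, namely \(\vect{\fp} = \alpha(\lfp(\lambda\vect{X}.\,\vectarg{\epsilon}{I_1}\cup\Post_{\cA_1}(\vect{X}))) = \alpha(\tuple{W^{\cA_1}_{I_1,q}}_{q\in Q_1})\) by \eqref{eq:WIqAequalslfp}. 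Hence, componentwise, \(\vect{\fp}_q = \minor{\{\post_w^{\cA_2}(I_2) \mid w\in W^{\cA_1}_{I_1,q}\}}\).

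Finally I would unfold the test. By definition of \(\wsqsubseteq\), the condition \(\vect{\fp}_q \wsqsubseteq \{F_2^c\}\) says exactly that some \(S\in\vect{\fp}_q\) satisfies \(S\subseteq F_2^c\); since \(\vect{\fp}_q\) is the \(\subseteq\)-minor of \(\{\post_w^{\cA_2}(I_2)\mid w\in W^{\cA_1}_{I_1,q}\}\), this holds if{}f there is \(w\in W^{\cA_1}_{I_1,q}\) with \(\post_w^{\cA_2}(I_2)\subseteq F_2^c\), i.e.\ \(\post_w^{\cA_2}(I_2)\cap F_2 = \varnothing\), i.e.\ \(w\notin\lang{\cA_2}\). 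Taking the disjunction over \(q\in F_1\) and using \(\lang{\cA_1}=\bigcup_{q\in F_1}W^{\cA_1}_{I_1,q}\) from \eqref{eq:unionofrightlg}, I obtain that \(\exists q\in F_1,\ \vect{\fp}_q\wsqsubseteq\{F_2^c\}\) if{}f there is \(w\in\lang{\cA_1}\setminus\lang{\cA_2}\), that is, if{}f \(\lang{\cA_1}\nsubseteq\lang{\cA_2}\). The main obstacle is purely the bookkeeping of the order duality: one must check that meeting with the seed in \(\wsqsubseteq\) really corresponds to joining with it in \(\sqsubseteq\), and that the \(\Post\)-versions of Lemma~\ref{lemma:rhoisgammaalpha} (in particular the minor-commutation \(\minor{\{\post_a^{\cA_2}(S)\mid S\in Y\}}=\minor{\{\post_a^{\cA_2}(S)\mid S\in\minor{Y}\}}\)) and of Theorem~\ref{theorem:EffectiveAlgorithm} go through verbatim.
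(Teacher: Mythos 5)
Your proof is correct, but it takes a genuinely different route from the paper's. The paper does not reprove the statement from first principles: its proof is a \emph{translation} argument. It recalls De Wulf et al.'s original formulation of the forward antichain algorithm on the lattice of antichains over \(Q_1\times\wp(Q_2)\) ordered by \(\wsqsubseteq_\times\), invokes their Theorem~6 (dualized) as a black box for correctness, and then shows that an antichain of pairs \((q,S)\) can be equivalently repackaged as a \(Q_1\)-indexed vector of antichains in \(\AC_{\tuple{\wp(Q_2),\subseteq}}\), so that \(\Post\), \(\fp\) and the acceptance test become \(\Post_{\cA_1}^{\cA_2}\), \(\vect{\fp}\) and \(\exists q\in F_1,\ \vect{\fp}_q\wsqsubseteq\{F_2^c\}\). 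You instead give a self-contained derivation inside the paper's own framework: Knaster--Tarski plus the duality \(\wsqsubseteq=\sqsubseteq^{-1}\) to identify \(\vect{\fp}\) with the \(\sqsubseteq\)-least fixpoint of \(\lambda\vect{X}.\,\alpha(\vectarg{\epsilon}{I_1})\sqcup\Post_{\cA_1}^{\cA_2}(\vect{X})\), the \(\Post\)-analogue of the Galois connection of Lemma~\ref{lemma:rhoisgammaalpha} together with backward completeness of \(\rho_{\leq^r_{\cA_2}}\) to obtain \(\vect{\fp}_q=\minor{\{\post_w^{\cA_2}(I_2)\mid w\in W^{\cA_1}_{I_1,q}\}}\) via \eqref{eq:WIqAequalslfp} and Lemma~\ref{lemma:alpharhoequality}, and finally an elementary unfolding of \(\wsqsubseteq\) against \(\{F_2^c\}\). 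Each step checks out (in particular the passage from the minor back to the full set of \(\post_w^{\cA_2}(I_2)\) in the final test is sound, since every element of a set is above some element of its minor). What your approach buys is independence from the external citation, making the theorem a corollary of the framework rather than an imported fact; what it costs is that you must assert the right-concatenation (\(\Post\)) versions of Lemma~\ref{lemma:rhoisgammaalpha} and Theorem~\ref{theorem:EffectiveAlgorithm}, which the paper only claims "can be stated symmetrically" without writing out — your proof would need those spelled out to be fully rigorous, whereas the paper's translation argument needs nothing beyond the cited theorem. Interestingly, the identification of \(\vect{\fp}\) with the \(\sqsubseteq\)-lfp that you establish as a step is exactly what the paper proves \emph{after} the theorem, in the discussion following Corollary~\ref{theorem:antichainpaperReverse}, to connect \(\vect{\fp}\) with the output of \AlgRegularA; your proof essentially folds that discussion into the proof of the theorem itself.
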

\begin{proof}%
Let us first introduce some notation to describe the forward antichain algorithm by De Wulf et. al~\shortcite{DBLP:conf/cav/WulfDHR06} which decides \(\lang{\cA_1} \subseteq \lang{\cA_2}\).
Let us consider the poset
\(\tuple{Q_1\times \wp(Q_2),\subseteq_\times}\) where $(q_1,S_1) \subseteq_\times
(q_2,S_2) \udrshort q_1=q_2 \wedge S_1 \subseteq S_2$. Then, 
let
\(\tuple{\AC_{\tuple{Q_1\times \wp(Q_2),\subseteq_\times}},\wsqsubseteq_\times,\wsqcup_\times, \wsqcap_\times}\) be the lattice of antichains over \(\tuple{Q_1\times \wp(Q_2),\subseteq_\times}\) where:
\begin{align*}
X \wsqsubseteq_\times Y & \udrshort \forall (q,T) \in Y, \exists (q,S) \in X , S \subseteq T & &\\
X \wsqcup_\times Y & \ud \textstyle{\min_{\times}}(\{(q,S \cup T) \mid (q,S) \in X,\, (q,T) \in Y\}) \\
X \wsqcap_\times Y & \ud \textstyle{\min_{\times}}(\{(q,S) \mid (q,S) \in X \cup Y \}) \\
\text{with}\quad \textstyle{\min_{\times}}(X) &\ud \{(q,S) \in X \mid \forall (q',S') \in X, q=q' \Ra S' \not\subset S\} \enspace .
\end{align*}
Also, let $\Post: \AC_{\tuple{Q_1\times \wp(Q_2),\subseteq_\times}} \ra \AC_{\tuple{Q_1\times \wp(Q_2),\subseteq_\times}}$ be defined as follows:  
\[\Post(X) \ud \textstyle{\min_{\times}}(\{ (q,\post_a^{\cA_2}(S)) \in Q_1\times 
\wp(Q_2) \mid \exists a \in \Sigma, q\in Q_1,(q',S) \in X , q' \ggoes{a}_{\cA_1} q\}) \enspace .\]
Then, the dual of the backward antichain algorithm in \cite[Theorem~6]{DBLP:conf/cav/WulfDHR06} states that \(\lang{\cA_1} \nsubseteq \lang{\cA_2}\) if{}f there exists \(q \in F_1\) such that \(\fp \mathrel{\wsqsubseteq_\times} \{(q,F_2^c)\}\) where
\[\fp = {\textstyle\wbigsqcup_\times}\{X \in \AC_{\tuple{Q_1\times \wp(Q_2),\subseteq_\times}} \mid X = \Post(X)\;\wsqcap_\times\; (I_1 \times \{I_2\})\}\enspace .\]

\noindent
We observe that for some \(X\in\AC_{\tuple{Q_1\times \wp(Q_2),\subseteq_\times}}\), a pair \((q,S) \in Q_1\times \wp(Q_2)\) such that $(q,S)\in X$ is used by 
\cite[Theorem~6]{DBLP:conf/cav/WulfDHR06} simply as
a way to associate states $q$ of \(\cA_1\) with sets $S$ of states  of \(\cA_2\).
In fact, an antichain 
\(X\in\AC_{\tuple{Q_1\times \wp(Q_2),\subseteq_\times}}\)
can be equivalently formalized 
by a vector \(\tuple{\{S \in \wp(Q_2) \mid (q,S)\in X\}}_{q \in Q_1}
\in (\AC_{\tuple{\wp(Q_2),\subseteq}})^{|Q_1|}\) whose components are indexed by states \(q\in Q_1\) and are antichains of set of states in $\AC_{\tuple{\wp(Q_2),\subseteq}}$. 
Correspondingly, we consider 
the lattice \(\tuple{\AC_{\tuple{\wp(Q_2),\subseteq}},\wsqsubseteq}\), where for 
all $X,Y\in \AC_{\tuple{\wp(Q_2),\subseteq}}$:
\begin{align*}
X \wsqsubseteq Y &\udrshort \forall T \in Y, \exists S \in X , S \subseteq T\\
X \wsqcup Y & \ud \textstyle{\min}(\{S \cup T \in \wp(Q_2) \mid S \in X, T \in Y\})\\
X \wsqcap Y & \ud \textstyle{\min}(\{S \in \wp(Q_2) \mid S \in X \cup Y\}) \\
\text{with}\quad \textstyle{\min}(X) &\ud \{S \in X \mid \forall S' \in X, S' \not\subset S\} \enspace .
\end{align*}

\noindent
Then, these definitions allow us to replace \(\Post\) by an equivalent function 
\[
\Post_{\cA_1}^{\cA_2}: (\AC_{\tuple{\wp(Q_2),\subseteq}})^{|Q_1|} \ra (\AC_{\tuple{\wp(Q_2),\subseteq}})^{|Q_1|}
\] that transforms vectors of antichains as follows:
\[\Post_{\cA_1}^{\cA_2}(\tuple{X_q}_{q\in Q_1}) \ud \tuple{\textstyle{\min}(\{\post_a^{\cA_2}(S) \in \wp(Q_2) \mid \exists a \in \Sigma, q'\in Q_1, S \in X_{q'} , q' \ggoes{a}_{\cA_1} q \})}_{q \in Q_1}\enspace .\]
In turn, the above \(\fp\in \AC_{\tuple{Q_1\times \wp(Q_2),\subseteq_\times}}\) is replaced by the 
following equivalent vector:
\[\vect{\fp} \ud \textstyle{\wbigsqcup}\{\vect{X}\in (\AC_{\tuple{\wp(Q_2),\subseteq}})^{|Q_1|} \mid \vect{X}=
 \Post_{\cA_1}^{\cA_2}(\vect{X})\;\wsqcap\; \tuple{\nullable{q \inqm I_1}{\{I_2\}}{\varnothing}}_{q\in Q_1}\} \enspace .\]
Finally, the condition 
\(\exists q \in F_1 , \fp \mathrel{\wsqsubseteq_\times} \{(q,F_2^c)\}\) is equivalent 
to \(\exists q \in F_1 ,  \vect{\fp}_q \mathrel{\wsqsubseteq} \{F_2^c\} \).
\end{proof}

Let us recall that \(\cA^R\) denotes the reverse automaton of \(\cA\), where arrows are flipped and the initial/final states become final/initial.
Note that language inclusion can be decided by considering the reverse automata since \(\lang{\cA_1} \subseteq \lang{\cA_2} \Lra \lang{\cA_1^R} \subseteq \lang{\cA_2^R}\) holds.
Furthermore, let us observe that \(\Post_{\cA_1}^{\cA_2} = \Pre_{\cA_1^R}^{\cA_2^R}\).
We therefore obtain the following consequence of Theorem~\ref{theorem:antichainpaper}. 
\begin{corollary}\label{theorem:antichainpaperReverse}
Let
\begin{align*}
\vect{\fp} \ud \textstyle{\wbigsqcup}\{\vect{X} \in (\AC_{\tuple{\wp(Q_2),\subseteq}})^{|Q_1|} \mid \vect{X} = \Pre_{\cA_1}^{\cA_2}(\vect{X})\;\wsqcap\; \tuple{\nullable{q \inqm F_1}{\{F_2\}}{\varnothing}}_{q\in Q_1}\} \enspace .
\end{align*}
Then, \(\lang{\cA_1} \nsubseteq \lang{\cA_2}\) if{}f \(\exists q \in I_1 , \vect{\fp}_q \,\wsqsubseteq\, \{I_2^c\} \).
\end{corollary}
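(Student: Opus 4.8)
The plan is to obtain Corollary~\ref{theorem:antichainpaperReverse} as a direct instantiation of Theorem~\ref{theorem:antichainpaper} applied to the reverse automata \(\cA_1^R\) and \(\cA_2^R\), using the two facts recorded just above the corollary: that \(\lang{\cA_1} \subseteq \lang{\cA_2} \Lra \lang{\cA_1^R} \subseteq \lang{\cA_2^R}\), and that \(\Post_{\cA_1}^{\cA_2} = \Pre_{\cA_1^R}^{\cA_2^R}\). Since the corollary is phrased for the negated inclusion, I would first observe that by the first fact it suffices to characterize \(\lang{\cA_1^R} \nsubseteq \lang{\cA_2^R}\), which is exactly what Theorem~\ref{theorem:antichainpaper} does once its inputs are the reversed automata.

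Next I would apply Theorem~\ref{theorem:antichainpaper} verbatim with the pair \(\cA_1^R, \cA_2^R\) in place of \(\cA_1, \cA_2\). The only care needed is the renaming of the distinguished state sets under reversal: for \(i\in\{1,2\}\), the initial states of \(\cA_i^R\) are \(F_i\) and its final states are \(I_i\). Consequently, in the statement of the theorem every occurrence referring to the initial states of the first (resp.\ second) automaton, written \(I_1\) (resp.\ \(I_2\)), is instantiated to \(F_1\) (resp.\ \(F_2\)); every occurrence referring to the final states, written \(F_1\) (resp.\ \(F_2\)), is instantiated to \(I_1\) (resp.\ \(I_2\)); and the complement \(F_2^c\) becomes \(I_2^c\). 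The transformer \(\Post_{\cA_1^R}^{\cA_2^R}\) appearing in the fixpoint definition is rewritten as \(\Pre_{\cA_1}^{\cA_2}\); this is precisely the stated identity \(\Post_{\cA_1}^{\cA_2} = \Pre_{\cA_1^R}^{\cA_2^R}\) read on the reversed automata together with \((\cA_i^R)^R = \cA_i\).

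With these substitutions, the set defining \(\vect{\fp}\) in Theorem~\ref{theorem:antichainpaper} becomes \(\{\vect{X} \in (\AC_{\tuple{\wp(Q_2),\subseteq}})^{|Q_1|} \mid \vect{X} = \Pre_{\cA_1}^{\cA_2}(\vect{X}) \mathbin{\wsqcap} \tuple{\nullable{q \inqm F_1}{\{F_2\}}{\varnothing}}_{q\in Q_1}\}\), whose \(\wbigsqcup\) is exactly the \(\vect{\fp}\) of the corollary, and the theorem's conclusion ``there exists \(q\) in the final states of \(\cA_1^R\), namely \(q\in I_1\), such that \(\vect{\fp}_q \wsqsubseteq \{I_2^c\}\)'' becomes the asserted characterization of \(\lang{\cA_1^R} \nsubseteq \lang{\cA_2^R}\). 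Composing this with \(\lang{\cA_1^R} \nsubseteq \lang{\cA_2^R} \Lra \lang{\cA_1} \nsubseteq \lang{\cA_2}\) gives the claim. I expect the main (and essentially only) obstacle to be the disciplined bookkeeping of the initial/final swap coupled with the \(\Post\)/\(\Pre\) exchange under reversal; the fixpoint and antichain machinery is inherited wholesale from Theorem~\ref{theorem:antichainpaper} and requires no reworking.
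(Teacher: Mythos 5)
Your proposal is correct and coincides with the paper's own justification: the corollary is obtained by instantiating Theorem~\ref{theorem:antichainpaper} on the reversed automata, using \(\lang{\cA_1} \subseteq \lang{\cA_2} \Lra \lang{\cA_1^R} \subseteq \lang{\cA_2^R}\) and \(\Post_{\cA_1^R}^{\cA_2^R} = \Pre_{\cA_1}^{\cA_2}\), with the initial/final state sets swapped exactly as you describe. Nothing further is needed.
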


Since \(\wsqsubseteq = \mathord{\sqsubseteq^{-1}}\), we have that \(\wsqcap = \sqcup\), \(\wsqcup = \sqcap\) and the greatest element $\varnothing$ for $\wsqsubseteq$ is the least element for $\mathord{\sqsubseteq}$.
Moreover, by~\eqref{def-antichain-state-abs}, $\alpha(\vectarg{\epsilon}{F_1}) = \tuple{\nullable{q \inqm F_1}{\{F_2\}}{\varnothing}}_{q\in Q_1}$.
Therefore, we can rewrite the vector 
$\vect{\fp}$ of 
Corollary~\ref{theorem:antichainpaperReverse} as 
\[
\vect{\fp} = {\textstyle\bigsqcap}\{\vect{X} \in (\AC_{\tuple{\wp(Q_2),\subseteq}})^{|Q_1|} \mid \vect{X} = \Pre_{\cA_1}^{\cA_2}(\vect{X})\;\sqcup\; \alpha(\vectarg{\epsilon}{F_1})\}
\] 
which is precisely the least fixpoint in $\tuple{(\AC_{\tuple{\wp(Q_2),\subseteq}})^{|Q_1|}, \sqsubseteq}$ of $\Pre_{\cA_1}^{\cA_2}$ above
$\alpha(\vectarg{\epsilon}{F_1})$. 
Hence, it turns out that the Kleene iterates of the least fixpoint computation 
that converge to \(\vect{\fp}\) exactly coincide with the iterates computed by the $\Kleene$ procedure of the state-based algorithm 
\AlgRegularA.
In particular, if  \(\vect{Y}\) is the output vector of 
$\Kleene (\sqsubseteq, \lambda \vect{X}\ldotp\alpha(\vectarg{\epsilon}{F_1}) \sqcup \Pre_{\cA_1}^{\cA_2}(\vect{X}),\vect{\varnothing})$
at line~1 of  
\AlgRegularA then  \(\vect{Y} = \vect{\fp}\).
Furthermore, \(\exists q\in I_1, \vect{\fp}_q \:\wsqsubseteq\: \{I_2^c\} \Lra \exists q\in I_1, \exists S \in \vect{\fp}_q, \; S \cap I_2 = \varnothing\).
Summing up, the \(\sqsubseteq\)-lfp algorithm \AlgRegularA exactly coincides with the \(\wsqsubseteq\)-gfp antichain algorithm as given by Corollary~\ref{theorem:antichainpaperReverse}.

We can easily derive an antichain algorithm which is perfectly equivalent to \AlgRegularA by considering the antichain 
lattice \(\tuple{\AC_{\tuple{\wp(Q_2),\supseteq}},\sqsubseteq}\) for the dual lattice $\tuple{\wp(Q_2),\supseteq}$ 
and by replacing the functions 
\(\alpha\), \(\gamma\) and \(\Pre_{\cA_1}^{\cA_2}\) of Lemma~\ref{lemma:rhoisgammaalpha}, resp., with the following dual versions:
\begin{align*}
\alpha^c(X) &\ud \lfloor \{ \cpre_u^{\cA_2}(F_2^c) \in \wp(Q_2) \mid u\in X\} \rfloor \, ,&
\quad \gamma^c(Y) &\ud \{v \in \Sigma^* \mid \exists y \in Y , y \supseteq \cpre_{v}^{\cA_2}(F^c_2) \}\, , \\
{\CPre}_{\cA_1}^{\cA_2}(\tuple{X_q}_{q\in Q_1})  \ud 
\langle \lfloor \{ \cpre_a^{\cA_2}(S) \in \wp(Q_2) \mid  \exists a\in \Sigma, q'\in Q_1, \, q'\in\delta_1(q,a) \wedge  S \in X_{q'} \} \rfloor \rangle_{q\in Q_1} \,. \span \span \span
\end{align*}
where \(\cpre_u^{\cA_2}(S) \ud (\pre_u^{\cA_2}(S^c))^c\) for $u\in \Sigma^*$.
When using these functions, the corresponding algorithm computes 
on the abstract domain \(\tuple{\AC_{\tuple{\wp(Q_2),\supseteq}},\sqsubseteq}\) and
it turns out that \(\lang{\cA_1} \subseteq \lang{\cA_2}\) if{}f \(\Kleene(\sqsubseteq,\lambda \vect{X}^\sharp \ldotp \alpha^c(\vectarg{\epsilon}{F_1}) \sqcup \CPre_{\cA_1}^{\cA_2}(\vect{X}^\sharp),\vect{\varnothing}) \sqsubseteq \alpha^c(\vectarg{L_2}{I_1})\).
This language inclusion algorithm coincides with the backward antichain algorithm defined by De Wulf et al.~\shortcite[Theorem~6]{DBLP:conf/cav/WulfDHR06} since both compute on the same lattice, \(\minor{X}\) corresponds to the maximal (w.r.t.\ set inclusion) elements of \(X\), \(\alpha^c(\{\epsilon\}) = \{F_2^c\}\) and for all \(X \in \alpha^c(\wp(\Sigma^*))\), we have that \(X \sqsubseteq \alpha^c(L_2) \Lra \forall S \in X, \; I_2 \nsubseteq S\).

We have thus shown that the two forward/backward antichain algorithms introduced by De Wulf et al.~\shortcite{DBLP:conf/cav/WulfDHR06} can be systematically derived by instantiating our framework.
The original antichain algorithms were later improved by Abdulla et al.~\shortcite{Abdulla2010} and, subsequently, by Bonchi and Pous~\shortcite{DBLP:conf/popl/BonchiP13}. Among their improvements, they showed how to exploit a precomputed binary relation between pairs of states of the input automata such that language inclusion holds for all the pairs in the relation.
When that binary relation is a simulation relation, our framework allows to partially match their results by using the simulation-based quasiorder \(\preceq^{r}_{\cA}\) defined in Section~\ref{subsec:simulation}.
However, this relation \(\preceq^{r}_{\cA}\) does not consider pairs of states \(Q_2 \times Q_2\) whereas the aforementioned algorithms do.

\section{Inclusion for Context Free Languages}%
\label{sec:context_free_languages}

A \emph{context-free grammar} (CFG) is a tuple \(\cG=\tuple{\cV,\Sigma, P}\) where \(\cV=\{X_0,\ldots,X_n\}\) is a finite set of variables including a start symbol \(X_0\), \(\Sigma\) is a finite alphabet of terminals and \(P\) is a finite set of productions  \(X_i\ra \beta\) where \(\beta\in (\cV\cup\Sigma)^*\). 
We assume, for simplicity and without loss of generality, that CFGs are in Chomsky Normal Form (CNF), that is, every production \(X_i \ra \beta\in P\) is such that \(\beta\in (\cV\times \cV) \cup \Sigma \cup \{\epsilon\}\) and if $\beta=\epsilon$ then \(i=0\) \cite{DBLP:journals/iandc/Chomsky59a}.
We also assume that for all \(X_i \in \cV\) there exists a production \(X_i \ra \beta\in P\), otherwise \(X_i\) can be safely removed from \(\cV\).
Given two strings \(w, w' \in (\cV \cup \Sigma)^*\) we write \(w \ra w'\) if{}f there exists \(u, v \in (\cV \cup \Sigma)^*\) and \(X \to \beta \in P\) such that \(w = u X v\) and \(w' = u\beta v\).
We denote by \(\ra^*\) the reflexive-transitive closure of \(\ra\). 
The language generated by a \(\cG\) is \(\lang{\cG} \ud \{w \in \Sigma^* \mid X_0 \ra^* w\}\).

\subsection{Extending the Framework to CFGs}\label{sec:CFGs}
Similarly to the case of automata,  a CFG \(\cG = (\cV,\Sigma,P)\) in CNF induces the following set of equations:
\begin{equation*}
\Eqn(\cG) \ud \{X_i = {\textstyle \bigcup_{X_i \to \beta_j \in P}} \beta_j \mid i \in [0,n]\} \enspace .
\end{equation*}
Given a subset of variables \(S \subseteq \cV\) of a grammar, the set of words generated from some variable in \(S\) is defined as
\[W_{S}^{\cG} \ud \{w \in \Sigma^* \mid \exists X \in S, \; X \ra^* w\} \enspace .\]
When \(S = \{X\}\) we slightly abuse the notation and write \(W_{X}^{\cG}\). 
Also, we drop the superscript \(\cG\) when the grammar is clear from the context.
The language generated by \(\cG\) is therefore \(\lang{\cG} = W^{\cG}_{X_0}\).

We define the vector \(\vect{b} \in \wp(\Sigma^*)^{|\cV|}\) and the function \(\Fn_{\cG}: \wp(\Sigma^*)^{|\cV|}\to \wp(\Sigma^*)^{|\cV|}\), which are used to formalize the fixpoint equations in \(\Eqn(\cG)\), as follows:
\begin{align*}
\vect{b} & \ud\tuple{b_i}_{i\in[0,n]} \in \wp(\Sigma^*)^{|\cV|} &&\text{ where } b_i \ud \{ \beta \mid X_i\ra \beta\in P,\:\beta\in \Sigma\cup \{ \epsilon \}\} \enspace , \\
\Fn_{\cG }(\tuple{X_i}_{i\in[0,n]}) & \ud \tuple{\beta_1^{(i)}\cup\ldots\cup\beta_{k_i}^{(i)}}_{i\in[0,n]} &&\text{ where } \beta_j^{(i)}\in\cV^2 \text{ and } X_i\ra\beta_j^{(i)}\in P \enspace .
\end{align*}

\noindent
Notice that \(\lambda \vect{X}\ldotp \vect{b}\mathrel{\cup} \Fn_{\cG}(\vect{X})\) is a well-defined monotonic function in  \(\wp(\Sigma^*)^{|\cV|}\ra \wp(\Sigma^*)^{|\cV|}\), which therefore has the least fixpoint
\(\tuple{Y_i}_{i\in[0,n]} = \lfp (\lambda \vect{X}\ldotp \vect{b}\cup \Fn_{\cG}(\vect{X}))\). It is known \cite{ginsburg} that the language \(\lang{\cG}\) accepted by \(\cG\) is such that \(\lang{\cG} = Y_{0}\).

\begin{example}\label{example:cfg}
Consider the CFG  \(\cG = \tuple{\{X_0, X_1\}, \{a,b\}, \{X_0\ra X_0X_1 \mid X_1X_0 \mid b,\: X_1 \ra a\}}\) in CNF. 
The corresponding equation system is
\[\Eqn(\cG) = \begin{cases}
    X_0 = X_0X_1 \cup X_1X_0 \cup \{b\}\\
    X_1 =\{a\}
  \end{cases}\]
so that
\begin{equation*}
  \left( \begin{array}{c}
     W_{X_0} \\ W_{X_1}
  \end{array} \right)=
  \lfp\biggl(\lambda \left( \begin{array}{c}
    X_0 \\ X_1
  \end{array} \right) .
  \left(\begin{array}{c}
      X_0X_1 \cup X_1X_0 \cup \{b\} \\
      \{a\}
    \end{array}\right)\biggr) = \left( \begin{array}{c}
     a^*ba^* \\ a
  \end{array} \right) \enspace .
\end{equation*}

\noindent
Moreover, we have that \(\vect{b} \in \wp(\Sigma^*)^2\) and \(\Fn_{\cG }:\wp(\Sigma^*)^2 \ra \wp(\Sigma^*)^2\) are given by
\begin{align*}
\vect{b} & =\tuple{\{b\},\{a\}} & \Fn_{\cG }(\tuple{X_0,X_1}) &=\tuple{X_0X_1 \cup X_1X_0, \varnothing} \enspace. \tag*{\eox}
\end{align*}
\end{example}

It turns out that
\[\lang{\cG} \subseteq L_2 \:\Lra\:  
\lfp (\lambda\vect{X}\ldotp \vect{b}\cup \Fn_{\cG}(\vect{X})) \subseteq \vectarg{L_2}{X_0} \] 
where \(\vectarg{L_2}{X_0} \ud \tuple{\nullable{i\eqqm 0}{L_2}{\Sigma^*}}_{i\in[0,n]}\).

\begin{theorem}\label{theorem:rhoCFG}
Let \(\cG=(\cV,\Sigma,P)\) be a CFG in CNF. 
If \(\rho \in \uco(\wp(\Sigma^*))\) is backward complete for both \(\lambda X. Xa\) and \(\lambda X. aX\), for all \(a \in \Sigma\) then \(\rho\) is backward complete for \(\lambda\vect{X}\ldotp \vect{b}\cup \Fn_{\cG}(\vect{X})\). 
\end{theorem}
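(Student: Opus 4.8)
The plan is to reduce backward completeness of the vector transformer to a single completeness identity for the concatenation of two \emph{arbitrary} languages, and then to obtain that identity from the single-symbol hypotheses by a short induction on word length. Since $\rho$ acts componentwise on the product domain $\wp(\Sigma^*)^{|\cV|}$, proving $\rho F = \rho F \rho$ for $F \ud \lambda\vect{X}\ldotp \vect{b}\cup \Fn_{\cG}(\vect{X})$ amounts to checking, for each $i\in[0,n]$, that
\[
\rho\bigl(b_i \cup {\textstyle\bigcup_{X_i\to X_jX_k\in P}} X_jX_k\bigr)
= \rho\bigl(b_i \cup {\textstyle\bigcup_{X_i\to X_jX_k\in P}} \rho(X_j)\rho(X_k)\bigr),
\]
because the $i$-th component of $\vect{b}\cup\Fn_{\cG}(\vect{X})$ is exactly $b_i \cup \bigcup_{X_i\to X_jX_k\in P} X_jX_k$: the $\epsilon$- and terminal-productions contribute the constant $b_i$, while the CNF assumption guarantees that every remaining production has the shape $X_i\to X_jX_k$. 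Using the distributivity law~\eqref{equation:lubAndGlb} to push $\rho$ through the finite unions on both sides, this identity will follow once I establish the concatenation-completeness identity $\rho(L_1L_2)=\rho(\rho(L_1)\rho(L_2))$ for all $L_1,L_2\in\wp(\Sigma^*)$.

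To prove that identity I would first upgrade the single-symbol hypotheses to whole words. By induction on $|w|$ I show $\rho(wX)=\rho(w\,\rho(X))$ for every $w\in\Sigma^*$ and $X\in\wp(\Sigma^*)$: the base case $w=\epsilon$ is idempotence, and for $w=aw'$ with $a\in\Sigma$ one chains backward completeness for $\lambda X\ldotp aX$, the inductive hypothesis, and then backward completeness for $\lambda X\ldotp aX$ read backwards, namely
\[
\rho(aw'X)=\rho\bigl(a\,\rho(w'X)\bigr)=\rho\bigl(a\,\rho(w'\rho(X))\bigr)=\rho\bigl(aw'\rho(X)\bigr).
\]
Writing $L_1X=\bigcup_{w\in L_1}wX$ and applying~\eqref{equation:lubAndGlb} together with the word-level identity then yields $\rho(L_1X)=\rho(L_1\,\rho(X))$ for an arbitrary left language $L_1$; the symmetric argument from the right-concatenation hypotheses gives $\rho(XL_2)=\rho(\rho(X)\,L_2)$. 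Combining the two, $\rho(L_1L_2)=\rho(L_1\,\rho(L_2))=\rho(\rho(L_1)\rho(L_2))$, as required.

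Finally I would assemble the pieces for each component $i$: distributing $\rho$ over the union via~\eqref{equation:lubAndGlb} gives $\rho(b_i\cup\bigcup X_jX_k)=\rho(\rho(b_i)\cup\bigcup\rho(X_jX_k))$, replacing each $\rho(X_jX_k)$ by $\rho(\rho(X_j)\rho(X_k))$ through the concatenation identity, and pulling $\rho$ back out of the union yields precisely $\rho(b_i\cup\bigcup\rho(X_j)\rho(X_k))$, which is the right-hand side. I expect the only genuinely delicate point to be the general concatenation-completeness lemma — specifically, getting the direction of the word-length induction right so that the single-symbol hypothesis can be first applied and then undone — whereas the reduction to two-variable productions and the manipulation of finite unions are routine consequences of the CNF shape and~\eqref{equation:lubAndGlb}. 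This is also exactly where the CFG case departs from the automaton case of Theorem~\ref{theorem:backComplete}, which only needed completeness for the single-symbol concatenations $aX_{q'}$ rather than for concatenations of two variable languages.
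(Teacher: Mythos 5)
Your proposal is correct and follows essentially the same route as the paper's proof: both first extend the single-symbol completeness hypotheses to whole words by induction on $|w|$, then lift to the binary concatenation identity $\rho(YZ)=\rho(\rho(Y)\rho(Z))$ by decomposing concatenation as a union over words and applying~\eqref{equation:lubAndGlb}, and finally distribute $\rho$ over the finite unions in each component of $\vect{b}\cup\Fn_{\cG}(\vect{X})$. The only cosmetic difference is the direction in which the closures are inserted/removed in the binary concatenation step (the paper peels them off starting from $\rho(\rho(Y)\rho(Z))$, you add them starting from $\rho(L_1L_2)$), which changes nothing of substance.
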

\begin{proof}%
Let us first show that backward completeness for left and right concatenation can be extended from letter to words.
We give the proof for left concatenation, the right case is symmetric.
We prove that \(\rho(w X) = \rho(w \rho(X))\) for every \(w\in\Sigma^*\).
We proceed by induction on \(|w|\geq 0\).
The base case $|w|=0$ if{}f $w=\epsilon$ is trivial because \(\rho\) is idempotent.
For the inductive case \(|w| > 0\) let \(w = a u\) for some 
\(u\in\Sigma^*\) and \(a\in \Sigma\), so that:
\begin{align*}
	\rho(a u X) &= \quad\text{~[by backward completeness for \(\lambda X\ldotp a X\)]}\\
	\rho(a \rho(u X)) &= \quad\text{~[by inductive hypothesis]}\\
	\rho(a \rho(u \rho(X))) &= \quad\text{~[by backward completeness for \(\lambda X\ldotp a X\)]}\\
	\rho(a u \rho(X)) &\enspace .
\end{align*}
Next we turn to the binary concatenation case, i.e., we prove that \(\rho(Y Z) = \rho(\rho(Y)\rho(Z))\) for all \(Y, Z \in \wp(\Sigma^*)\):
\begin{align*}
	\rho(\rho(Y)\rho(Z)) &=\quad \text{~[by definition of concatenation]}\\
	\rho(\textstyle{\bigcup_{u\in\rho(Y)}} u \rho(Z)) &=\quad \text{~[by \eqref{equation:lubAndGlb}]}\\
	\rho(\textstyle{\bigcup_{u\in\rho(Y)}} \rho(u \rho(Z)) ) &=\quad \text{~[by backward completeness of \(\lambda X\ldotp u X\)]}\\
	\rho(\textstyle{\bigcup_{u\in\rho(Y)}} \rho(u Z))
	&=\quad\text{~[by~\eqref{equation:lubAndGlb}]}\\
	\rho(\textstyle{\bigcup_{u\in\rho(Y)}} u Z) &=\quad\text{~[by definition of concatenation]}\\
	\rho(\rho(Y) Z) &=\quad\text{~[by definition of concatenation]}\\
	\rho(\textstyle{\bigcup_{v\in Z}} \rho(Y) v)&=\quad
	\text{~[by~\eqref{equation:lubAndGlb}]}\\
	\rho(\textstyle{\bigcup_{v\in Z}} \rho(\rho(Y) v))&=\quad
	\text{~[by backward completeness of \(\lambda X\ldotp X v\)]}\\
	\rho(\textstyle{\bigcup_{v\in Z}} \rho(Y v))&=\quad
	\text{~[by~\eqref{equation:lubAndGlb}]}\\
	\rho(\textstyle{\bigcup_{v\in Z}} Y v)&=\quad\text{~[by definition of concatenation]}\\
	\rho(Y Z) & \enspace .
\end{align*}
Then, the proof follows the same lines of the proof of Theorem~\ref{theorem:backComplete}.
Indeed, it follows from the definition of \(\Fn_{\cG}(\tuple{X_i}_{i\in[0,n]})\)
that: \begin{align*}
	\rho({\textstyle\bigcup_{j=1}^{k_i}}\beta^{(i)}_j)	&=\quad \text{~[by definition of \(\beta^{(i)}_j\)]}\\
\rho({\textstyle\bigcup_{j=1}^{k_i}}X^{(i)}_j  Y^{(i)}_j) &=\quad
	\text{~[by~\eqref{equation:lubAndGlb}]}\\
\rho({\textstyle\bigcup_{j=1}^{k_i}}\rho(X^{(i)}_j  Y^{(i)}_j)) &=\quad
  \text{~[by backward completeness of \(\rho\) for binary concatenation]}\\
\rho({\textstyle\bigcup_{j=1}^{k_i}}\rho( \rho(X^{(i)}_j)  \rho(Y^{(i)}_j))) &=\quad
	\text{~[by~\eqref{equation:lubAndGlb}]}\\
\rho({\textstyle\bigcup_{j=1}^{k_i}}\rho(X^{(i)}_j)  \rho(Y^{(i)}_j)) & \enspace . 
\end{align*}

\noindent
Hence, by a straightforward
componentwise application on vectors in \(\wp(\Sigma^*)^{|\cV|}\), we obtain that \(\rho\) is backward complete for \(\Fn_\cG\). 
Finally,  \(\rho\) is backward complete for 
\(\lambda\vect{X}\ldotp (\vect{b}\cup \Fn_{\cG}(\vect{X}))\),
because: 
\begin{align*}
	\rho(\vect{b}\cup \Fn_{\cG}(\rho(\vect{X}))) &= 
\quad\text{~[by~\eqref{equation:lubAndGlb}]}\\
	\rho(\rho(\vect{b})\cup\rho(\Fn_{\cG}(\rho (\vect{X})))) &= 
\quad\text{~[by backward completeness for \(\Fn_{\cG}\)]}\\
\rho(\rho(\vect{b})\cup\rho(\Fn_{\cG}(\vect{X}))) &=
\quad\text{~[by~\eqref{equation:lubAndGlb}]}\\
\rho(\vect{b}\cup \Fn_{\cG}(\vect{X})) &\enspace . \tag*{\qedhere}
\end{align*}
\end{proof}

The following result, which is an adaptation of Theorem~\ref{theorem:FiniteWordsAlgorithmGeneral} to
grammars, relies on Theorem~\ref{theorem:rhoCFG} for designing 
an algorithm that solves the 
inclusion problem \(\lang{\cG} \subseteq L_2\)
by exploiting a language abstraction \(\rho\) that satisfies some requirements
of backward completeness and computability. 
 
\begin{theorem}\label{theorem:FiniteWordsAlgorithmGeneral:CFG}
Let \(\cG=\tuple{\cV,\Sigma,P}\) be a CFG in CNF, 
\(L_2\in \wp(\Sigma^*)\) 
and  \(\rho \in \uco(\Sigma^*)\). 
Assume that the following properties hold:
\begin{enumerate}[\upshape(\rm i\upshape)]
\item The closure \(\rho\) is backward complete for both \(\lambda X\in \wp(\Sigma^*)\ldotp aX\) and \(\lambda X\in \wp(\Sigma^*)\ldotp Xa\) for all \(a\in \Sigma\) and
satisfies \(\rho(L_2) = L_2\).\label{theorem:FiniteWordsAlgorithmGeneral:CFG:rho}
\item \(\rho(\wp(\Sigma^*))\) does not contain
infinite ascending chains. \label{theorem:FiniteWordsAlgorithmGeneral:CFG:ACC}
\item If $X,Y\in \wp(\Sigma^*)$ are finite sets of words then 
the inclusion  $\rho(X) \subseteqm \rho(Y)$ is decidable. 
\label{theorem:FiniteWordsAlgorithmGeneral:CFG:EQ}
\item If $Y\in \wp(\Sigma^*)$ is a finite set of words then 
the inclusion $\rho(Y) \subseteqm L_2$ is decidable. 
\label{theorem:FiniteWordsAlgorithmGeneral:CFG:inclrho}
\end{enumerate}

\medskip
\noindent
Then,

\medskip
\(\tuple{Y_i}_{i \in [0,n]} := \Kleene (\Incl_\rho,\lambda\vect{X}\ldotp \vect{b}\cup \Fn_{\cG}(\vect{X}), \vect{\varnothing})\)\emph{;}

\emph{\textbf{return}}
 \(\Incl_\rho(\tuple{Y_i}_{i \in [0,n]},\vectarg{L_2}{X_0})\)\emph{;}

\medskip
\noindent
is a decision algorithm for \(\lang{\cG} \subseteq L_2\).
\end{theorem}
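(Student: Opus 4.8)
The plan is to replay, almost verbatim, the argument used for the automata-based Theorem~\ref{theorem:FiniteWordsAlgorithmGeneral}, replacing the transformer $\lambda\vect{X}\ldotp\vectarg{\epsilon}{F}\cup\Pre_\cA(\vect{X})$ by the grammar transformer $F\ud\lambda\vect{X}\ldotp\vect{b}\cup\Fn_{\cG}(\vect{X})$ on the product lattice $\tuple{\wp(\Sigma^*)^{|\cV|},\subseteq}$, and using Theorem~\ref{theorem:rhoCFG} in place of Theorem~\ref{theorem:backComplete}. Concretely, I would first verify that the componentwise extension of $\rho$ to $\wp(\Sigma^*)^{|\cV|}$ meets the hypotheses of the abstract Kleene Theorem~\ref{new-lemma-kleene} for $F$ started at $\vect{\varnothing}$, then transfer the resulting fixpoint identity through the grammar fixpoint characterization $\lang{\cG}\subseteq L_2 \Lra \lfp(F)\subseteq\vectarg{L_2}{X_0}$.

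For backward completeness, hypothesis~(\ref{theorem:FiniteWordsAlgorithmGeneral:CFG:rho}) gives that $\rho$ is backward complete for both $\lambda X\ldotp aX$ and $\lambda X\ldotp Xa$, so Theorem~\ref{theorem:rhoCFG} immediately yields that $\rho$ is backward complete for $F$. Since $|\cV|$ is finite and hypothesis~(\ref{theorem:FiniteWordsAlgorithmGeneral:CFG:ACC}) states that $\rho(\wp(\Sigma^*))$ has no infinite ascending chains, the image of the componentwise closure on the finite product $\wp(\Sigma^*)^{|\cV|}$ has no infinite ascending chains either. The inclusion $\vect{\varnothing}\subseteq F(\vect{\varnothing})$ is trivial. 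Moreover $\vect{b}$ is a vector of finite sets and $\Fn_{\cG}$ maps vectors of finite sets to vectors of finite sets, since each component is a finite union of binary concatenations of finite languages; hence every Kleene iterate $F^n(\vect{\varnothing})$ is a vector of finite sets, so the convergence relation $\Incl_\rho$ between two consecutive iterates is decidable componentwise by hypothesis~(\ref{theorem:FiniteWordsAlgorithmGeneral:CFG:EQ}). Theorem~\ref{new-lemma-kleene} then guarantees that $\Kleene(\Incl_\rho,F,\vect{\varnothing})$ terminates with some output $\tuple{Y_i}_{i\in[0,n]}$ of finite sets and that $\rho(\tuple{Y_i}_{i\in[0,n]})=\rho(\lfp(F))$.

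To conclude, I would combine this with the grammar fixpoint characterization and $\rho(L_2)=L_2$. Note that $\vectarg{L_2}{X_0}$ is $\rho$-closed componentwise, its components being either $L_2$ (closed by hypothesis~(\ref{theorem:FiniteWordsAlgorithmGeneral:CFG:rho})) or $\Sigma^*$ (the top element, hence $\rho(\Sigma^*)=\Sigma^*$). Therefore, using~\eqref{equation:abstractcheck} together with $\rho(\lfp(F))=\rho(\tuple{Y_i}_{i\in[0,n]})$, one obtains the chain $\lang{\cG}\subseteq L_2 \Lra \lfp(F)\subseteq\vectarg{L_2}{X_0} \Lra \rho(\lfp(F))\subseteq\vectarg{L_2}{X_0} \Lra \rho(\tuple{Y_i}_{i\in[0,n]})\subseteq\rho(\vectarg{L_2}{X_0}) \Lra \Incl_\rho(\tuple{Y_i}_{i\in[0,n]},\vectarg{L_2}{X_0})$, exactly as in~\eqref{equation:checklfpRLintoRL}. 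This last check is decidable: for the components $i\neq 0$ the inclusion holds trivially since $(\vectarg{L_2}{X_0})_i=\Sigma^*$, while for $i=0$ it reduces to $\rho(Y_0)\subseteq\rho(L_2)=L_2$, which is decidable by hypothesis~(\ref{theorem:FiniteWordsAlgorithmGeneral:CFG:inclrho}) because $Y_0$ is finite.

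The main subtlety, compared with the automata case, is the binary concatenation appearing in $\Fn_{\cG}$: single-sided backward completeness no longer suffices, which is precisely why hypothesis~(\ref{theorem:FiniteWordsAlgorithmGeneral:CFG:rho}) demands completeness for both left and right concatenation, and why Theorem~\ref{theorem:rhoCFG}—lifting letter-wise completeness to two-sided completeness for binary concatenation—is the crucial ingredient that is absent from the proof of Theorem~\ref{theorem:FiniteWordsAlgorithmGeneral}. The only other point requiring care is confirming that the Kleene iterates remain vectors of finite sets, so that the decidability hypotheses~(\ref{theorem:FiniteWordsAlgorithmGeneral:CFG:EQ}) and~(\ref{theorem:FiniteWordsAlgorithmGeneral:CFG:inclrho}), which are stated only for finite sets of words, are genuinely applicable.
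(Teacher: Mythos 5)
Your proposal is correct and follows exactly the route the paper intends: its proof of this theorem is literally ``analogous to the proof of Theorem~\ref{theorem:FiniteWordsAlgorithmGeneral},'' and you have spelled out that analogy faithfully, substituting the grammar transformer $\lambda\vect{X}\ldotp \vect{b}\cup\Fn_{\cG}(\vect{X})$ and invoking Theorem~\ref{theorem:rhoCFG} (two-sided completeness lifted to binary concatenation) where the automata proof used Theorem~\ref{theorem:backComplete}. The two points you flag as needing care --- the necessity of both left and right completeness for the binary concatenations in $\Fn_{\cG}$, and the finiteness of the Kleene iterates so that hypotheses~(\ref{theorem:FiniteWordsAlgorithmGeneral:CFG:EQ}) and~(\ref{theorem:FiniteWordsAlgorithmGeneral:CFG:inclrho}) apply --- are precisely the right ones.
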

\begin{proof}
Analogous to the proof of Theorem~\ref{theorem:FiniteWordsAlgorithmGeneral}.
\end{proof}

\subsection{Instantiating the Framework}
Let us instantiate the general algorithmic framework provided by Theorem~\ref{theorem:FiniteWordsAlgorithmGeneral:CFG} to the class of closure operators induced by quasiorder relations on words. 
As a consequence of Lemma~\ref{lemma:properties},
we have the following characterization of $L$-consistent quasiorders. 

\begin{lemma}\label{lemma:propertiesCFG}
Let \(L\in \wp(\Sigma^*)\) and \(\mathord{\leqslant_L}\) be a quasiorder on \(\Sigma^*\).
Then, \(\mathord{\leqslant_L}\) is a \(L\)-consistent quasiorder on \(\Sigma^*\) if and only if
\begin{compactenum}[\upshape(\rm a\upshape)]
\item \(\rho_{\leqslant_L}(L) = L\), and \label{lemma:propertiesCFG:L}
\item \(\rho_{\leqslant_L}\) is backward complete for 
for \(\lambda X\ldotp a X\) and \(\lambda X\ldotp Xa\) for all \(a\in \Sigma\).\label{lemma:propertiesCFG:bw}
\end{compactenum}
\end{lemma}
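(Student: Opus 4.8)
The plan is to observe that this lemma is an immediate consequence of Lemma~\ref{lemma:properties} together with the definition of $L$-consistency, so no new order-theoretic work is needed. By Definition~\ref{def:LConsistent}, a quasiorder $\mathord{\leqslant_L}$ is $L$-consistent precisely when it is simultaneously left $L$-consistent and right $L$-consistent, so it suffices to apply the two (left and right) instances of Lemma~\ref{lemma:properties} and take their conjunction.

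First I would invoke the left instance of Lemma~\ref{lemma:properties}: $\mathord{\leqslant_L}$ is left $L$-consistent if and only if $\rho_{\leqslant_L}(L)=L$ and $\rho_{\leqslant_L}$ is backward complete for $\lambda X.\, aX$ for all $a\in\Sigma$. Then I would invoke the symmetric right instance: $\mathord{\leqslant_L}$ is right $L$-consistent if and only if $\rho_{\leqslant_L}(L)=L$ and $\rho_{\leqslant_L}$ is backward complete for $\lambda X.\, Xa$ for all $a\in\Sigma$.

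Next I would combine these two equivalences. The one bookkeeping point worth spelling out is that the condition $\rho_{\leqslant_L}(L)=L$ appears identically in both: it is the closure-operator reformulation of clause~(\ref{eq:LConsistentPrecise}) of Definition~\ref{def:LConsistent} (namely $\mathord{\leqslant_L}\cap (L\times\neg L)=\varnothing$), which is shared by the left and right notions because it concerns the relation itself and not the side of concatenation. Hence it is retained only once, becoming condition~(\ref{lemma:propertiesCFG:L}). The two backward-completeness clauses, concerning left and right concatenation respectively, merge into the single requirement~(\ref{lemma:propertiesCFG:bw}) that $\rho_{\leqslant_L}$ be backward complete for both $\lambda X.\, aX$ and $\lambda X.\, Xa$ for every $a\in\Sigma$. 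This yields exactly conditions~(\ref{lemma:propertiesCFG:L}) and~(\ref{lemma:propertiesCFG:bw}) of the statement, establishing both directions at once.

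There is no genuine obstacle here, since the entire content already resides in Lemma~\ref{lemma:properties}; the only care required is to confirm that clause~(\ref{eq:LConsistentPrecise}) of Definition~\ref{def:LConsistent} is common to the left and right definitions, so that conjoining the two applications of Lemma~\ref{lemma:properties} neither duplicates nor conflicts in the hypothesis $\rho_{\leqslant_L}(L)=L$.
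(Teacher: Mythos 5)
Your proposal is correct and matches the paper's treatment exactly: the paper presents Lemma~\ref{lemma:propertiesCFG} as an immediate consequence of Lemma~\ref{lemma:properties}, obtained by conjoining its left and right instances, with the shared condition $\rho_{\leqslant_L}(L)=L$ appearing once. Nothing further is needed.
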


Analogously to Section~\ref{sec:word-based} for automata, Theorem~\ref{theorem:FiniteWordsAlgorithmGeneral:CFG} 
induces an algorithm for deciding the language inclusion \(\lang{\cG} \subseteq L_2\) for any CFG \(\cG\) and regular language \(L_2\).
More in general, given a language \(L_2\in \wp(\Sigma^*)\) whose membership problem is decidable and a decidable \(L_2\)-consistent wqo, the following algorithm \AlgGrammarW ($\mathtt{CFG}$ $\mathtt{Inc}$lusion based on $\mathtt{W}$ords) decides 
\(\lang{\cG} \subseteq L_2\). 

\begin{figure}[H]
\RemoveAlgoNumber
\begin{algorithm}[H]
\SetAlgorithmName{\AlgGrammarW}{}
\SetSideCommentRight
\caption{Word-based algorithm for \(\lang{\cG} \subseteq L_2\)}\label{alg:CFGIncW}

\KwData{CFG \(\cG=\tuple{\cV,\Sigma, P}\); decision procedure for \(u\mathrel{\inqm} L_2\); decidable \(L_2\)-consistent wqo \(\mathord{\leqslant_{L_2}}\).}

\(\tuple{Y_i}_{i \in [0,n]} := \Kleene (\sqsubseteq_{\leqslant_{L_2}}, \lambda \vect{X}\ldotp \vect{b} \cup \Fn_{\cG}(\vect{X}), \vect{\varnothing})\)\;

\ForAll{\(u \in Y_0\)}{
  \lIf{\(u \notin L_2\)}{\Return \textit{false}}
}
\Return \textit{true}\;
\end{algorithm}
\end{figure}

\begin{theorem}\label{theorem:quasiorderAlgorithmGr}
  Let \(\cG=\tuple{Q,\delta,I,F,\Sigma}\) be a CFG and let \(L_2\in \wp(\Sigma^*)\) be a language such that: 
  \begin{inparaenum}[\upshape(\rm i\upshape)]
  \item membership $u\inqm L_2$ is decidable; \label{theorem:quasiorderAlgorithmGr:membership}
  \item there exists a decidable \(L_2\)-consistent wqo on $\Sigma^*$. \label{theorem:quasiorderAlgorithmGr:decidableL}
  \end{inparaenum} 
  Then, algorithm \AlgGrammarW decides the inclusion \(\lang{\cG} \subseteq L_2\).
\end{theorem}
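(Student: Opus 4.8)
The plan is to reduce Theorem~\ref{theorem:quasiorderAlgorithmGr} to the general CFG framework of Theorem~\ref{theorem:FiniteWordsAlgorithmGeneral:CFG}, following the same pattern as the proof of Theorem~\ref{theorem:quasiorderAlgorithm} in the automaton case. Let \(\mathord{\leqslant_{L_2}}\) be the decidable \(L_2\)-consistent wqo provided by hypothesis~(\ref{theorem:quasiorderAlgorithmGr:decidableL}), and take \(\rho \ud \rho_{\leqslant_{L_2}}\) to be the closure it induces via~\eqref{eq:qo-up-closure}. I would then verify that \(\rho\) satisfies the four hypotheses of Theorem~\ref{theorem:FiniteWordsAlgorithmGeneral:CFG}, and afterwards check that the concrete operations of algorithm \AlgGrammarW implement exactly the abstract algorithm of that theorem.

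First I would dispatch hypotheses~(\ref{theorem:FiniteWordsAlgorithmGeneral:CFG:rho})--(\ref{theorem:FiniteWordsAlgorithmGeneral:CFG:EQ}). For~(\ref{theorem:FiniteWordsAlgorithmGeneral:CFG:rho}), the crucial point is that \(\mathord{\leqslant_{L_2}}\) is a \emph{two-sided} \(L_2\)-consistent quasiorder, so Lemma~\ref{lemma:propertiesCFG} applies directly and yields both \(\rho(L_2)=L_2\) and backward completeness of \(\rho\) for \(\lambda X\ldotp aX\) and \(\lambda X\ldotp Xa\) for every \(a\in\Sigma\); this is precisely the strengthening over the automaton case, and it is what Theorem~\ref{theorem:rhoCFG} needs to lift completeness to \(\lambda\vect{X}\ldotp \vect{b}\cup \Fn_{\cG}(\vect{X})\). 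For~(\ref{theorem:FiniteWordsAlgorithmGeneral:CFG:ACC}), since \(\mathord{\leqslant_{L_2}}\) is a wqo, the image \(\rho(\wp(\Sigma^*))\) contains no infinite ascending chain, because such chains correspond to strictly increasing sequences of \(\leqslant_{L_2}\)-upward-closed sets, which via their minors give an infinite ascending chain of antichains, contradicting that \(\tuple{\AC_{\tuple{\Sigma^*,\leqslant_{L_2}}},\sqsubseteq}\) is ACC. For~(\ref{theorem:FiniteWordsAlgorithmGeneral:CFG:EQ}), I would note the identity \(\Incl_{\rho}(X,Y)\Lra \rho(X)\subseteq\rho(Y)\Lra X\sqsubseteq_{\leqslant_{L_2}}Y\), which for finite \(X,Y\) is decidable by finitely many applications of the decidable relation \(\mathord{\leqslant_{L_2}}\).

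For hypothesis~(\ref{theorem:FiniteWordsAlgorithmGeneral:CFG:inclrho}) and the final inclusion check, I would argue that the abstract test \(\Incl_\rho(\tuple{Y_i}_{i\in[0,n]},\vectarg{L_2}{X_0})\) of Theorem~\ref{theorem:FiniteWordsAlgorithmGeneral:CFG} reduces to what lines~2--4 of \AlgGrammarW\ compute. Indeed, by definition \(\vectarg{L_2}{X_0}=\tuple{\nullable{i\eqqm 0}{L_2}{\Sigma^*}}_{i\in[0,n]}\), so for every component with \(i\neq 0\) the check \(\rho(Y_i)\subseteq\rho(\Sigma^*)=\Sigma^*\) holds trivially, and it remains only to test the \(X_0\)-component. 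Since \(\Kleene\) terminates (by the ACC property just established) with a vector of \emph{finite} sets \(\tuple{Y_i}_{i\in[0,n]}\), and since
\[
Y_0 \sqsubseteq_{\leqslant_{L_2}} L_2 \Lra \rho(Y_0)\subseteq \rho(L_2)=L_2 \Lra Y_0\subseteq L_2 \enspace ,
\]
the test \(\rho(Y_0)\subseteqm L_2\) is decidable and amounts to performing the finitely many membership queries \(u\inqm L_2\) for \(u\in Y_0\), each decidable by hypothesis~(\ref{theorem:quasiorderAlgorithmGr:membership}); this is exactly the loop in lines~2--4. Finally I would observe that the convergence relation \(\sqsubseteq_{\leqslant_{L_2}}\) driving \(\Kleene\) at line~1 coincides with \(\Incl_\rho\) by the same equivalence, so \AlgGrammarW\ is the instantiation of the algorithm of Theorem~\ref{theorem:FiniteWordsAlgorithmGeneral:CFG} and hence decides \(\lang{\cG}\subseteq L_2\).

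The bulk of the verification is routine bookkeeping; the one genuinely load-bearing point — and the main obstacle to keep honest — is the appeal to \emph{two-sided} \(L_2\)-consistency. Because CFG productions in CNF have the binary form \(X_i\to X_jX_k\), the transformer \(\Fn_{\cG}\) involves concatenation of two abstracted languages rather than single-letter prefixing, so backward completeness for \(\lambda X\ldotp aX\) alone (as sufficed for automata) no longer suffices; one needs completeness for binary concatenation, which Theorem~\ref{theorem:rhoCFG} derives from completeness for both left \emph{and} right single-letter concatenation. This is exactly why hypothesis~(\ref{theorem:quasiorderAlgorithmGr:decidableL}) requires a (full) \(L_2\)-consistent wqo and why Lemma~\ref{lemma:propertiesCFG}, rather than the one-sided Lemma~\ref{lemma:properties}, is invoked; everything else mirrors the proof of Theorem~\ref{theorem:quasiorderAlgorithm}.
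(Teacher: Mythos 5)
Your proposal is correct and takes essentially the same route as the paper: the paper's own proof simply states that the argument is analogous to that of Theorem~\ref{theorem:quasiorderAlgorithm}, applying Theorem~\ref{theorem:FiniteWordsAlgorithmGeneral:CFG} and Lemma~\ref{lemma:propertiesCFG} with the two-sided \(L_2\)-consistent wqo in place of the left one, which is exactly what you carry out in detail. Your closing remark correctly identifies why full (two-sided) consistency is needed, namely that \(\Fn_{\cG}\) involves binary concatenation via CNF productions \(X_i\to X_jX_k\), so Theorem~\ref{theorem:rhoCFG} requires backward completeness for both \(\lambda X\ldotp aX\) and \(\lambda X\ldotp Xa\).
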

\begin{proof}
The proof is analogous to the proof of Theorem~\ref{theorem:quasiorderAlgorithm}:
it applies Theorem~\ref{theorem:FiniteWordsAlgorithmGeneral:CFG} and Lemma~\ref{lemma:propertiesCFG} in the same way of the proof of Theorem~\ref{theorem:quasiorderAlgorithm} where the role of 
a left $L_2$-consistent wqo on $\Sigma^*$
is replaced by a \(L_2\)-consistent wqo. 
\end{proof}

\subsubsection{Myhill and State-based Quasiorders}
In the following, we will consider two quasiorders on $\Sigma^*$ and we will show that they fulfill the requirements of Theorem~\ref{theorem:quasiorderAlgorithmGr}, so that they correspondingly yield algorithms for deciding the language inclusion \(\lang{\cG} \subseteq L_2\) for every CFG \(\cG\) and regular language \(L_2\).

The \emph{context} for a language \(L \in \wp(\Sigma^*)\) w.r.t.\ a given word \(w\in \Sigma^*\) is defined as usual:  
\begin{align*}
\ctx_L(w) &\ud \{(u, v) \in \Sigma^* \times \Sigma^* \mid  uwv\in L\}\enspace .
\end{align*}
Correspondingly, let us define the following quasiorder relation on \(\mathord{\leqq_L}\subseteq \Sigma^*\times \Sigma^*\):
\begin{align}\label{def-Myhillqo}
  u\leqq_L v &\udr\; \ctx_L(u) \subseteq \ctx_L(v) \enspace . 
\end{align}
De Luca and Varricchio \citeyearpar[Section~2]{deLuca1994} call \(\leqq_L\) the \emph{Myhill quasiorder relative to \(L\)}. 
The following result is the analogue
of Lemma~\ref{lemma:leftrightnerodegoodqo} for  the Nerode quasiorder: 
it shows that the Myhill quasiorder is the weakest 
\(L_2\)-consistent quasiorder for which the above algorithm \AlgGrammarW can be instantiated to decide a language inclusion \(\lang{\cG}\subseteq L_2\).

\begin{lemma}\label{lemma:myhillgoodqo}
Let $L\in \wp(\Sigma^*)$.  
\begin{compactenum}[\upshape(\rm a\upshape)]
\item \(\mathord{\leqq_L}\) is a \(L\)-consistent quasiorder.
If $L$ is regular then, additionally, \(\mathord{\leqq_L}\) is a decidable wqo. \label{lemma:myhillgoodqo:Consistent}
\item If \(\mathord{\leqslant}\) is a \(L\)-consistent quasiorder on $\Sigma^*$ then \( \rho_{\leqq_L}(\wp(\Sigma^*)) \subseteq \rho_{\leqslant}(\wp(\Sigma^*)) \).\label{lemma:myhillgoodqo:Incl}
\end{compactenum}
\end{lemma}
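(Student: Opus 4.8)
The plan is to follow closely the structure of the proof of Lemma~\ref{lemma:leftrightnerodegoodqo} for the one-sided Nerode quasiorders, adapting every step to the two-sided context $\ctx_L$. For part~(\ref{lemma:myhillgoodqo:Consistent}), reflexivity and transitivity of $\leqq_L$ are immediate from its definition \eqref{def-Myhillqo} as set inclusion of contexts, so it remains to verify the two clauses of Definition~\ref{def:LConsistent} and, when $L$ is regular, that $\leqq_L$ is a decidable wqo. For part~(\ref{lemma:myhillgoodqo:Incl}) I would prove the maximality statement that every $L$-consistent quasiorder is contained in $\leqq_L$, and then translate this containment of relations into the desired containment of closure images.

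For clause~\eqref{eq:LConsistentPrecise}, I would use the observation that $w \in L$ if{}f $(\epsilon,\epsilon) \in \ctx_L(w)$. Hence if $u \in L$ and $u \leqq_L v$, then $(\epsilon,\epsilon) \in \ctx_L(u) \subseteq \ctx_L(v)$, giving $v \in L$; equivalently $\mathord{\leqq_L} \cap (L \times \neg L) = \varnothing$. For clause~\eqref{eq:LConsistentmonotonic}, by the characterization \eqref{def-leftmon} it suffices to check, for every $a \in \Sigma$, that $u \leqq_L v$ implies both $au \leqq_L av$ and $ua \leqq_L va$. This follows from a context-shifting computation: $(x,y) \in \ctx_L(au)$ means $(xa)\,u\,y \in L$, i.e.\ $(xa,y) \in \ctx_L(u) \subseteq \ctx_L(v)$, whence $(x,y) \in \ctx_L(av)$; the right case is symmetric, moving the appended letter into the right component of the context. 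When $L$ is regular its syntactic monoid is finite, so the map $w \mapsto \ctx_L(w)$ takes only finitely many values; consequently $\leqq_L$ has finitely many equivalence classes, which makes it both a wqo (any infinite sequence must repeat a class, yielding $x_i \leqq_L x_j$ for some $i<j$) and decidable, since $\ctx_L(u) \subseteq \ctx_L(v)$ can be checked effectively from a finite automaton for $L$. Alternatively one may cite de Luca and Varricchio~\citeyearpar{deLuca1994} directly for the wqo property.

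For part~(\ref{lemma:myhillgoodqo:Incl}), the key step is to show that any $L$-consistent quasiorder $\leqslant$ satisfies $x \leqslant y \Ra x \leqq_L y$. Assuming $x \leqslant y$, take $(u,v) \in \ctx_L(x)$, so $uxv \in L$; left-monotonicity gives $ux \leqslant uy$ and then right-monotonicity gives $uxv \leqslant uyv$. Since $uxv \in L$ and $\mathord{\leqslant} \cap (L \times \neg L) = \varnothing$, we must have $uyv \in L$, i.e.\ $(u,v) \in \ctx_L(y)$; thus $\ctx_L(x) \subseteq \ctx_L(y)$, that is $x \leqq_L y$. From $\mathord{\leqslant} \subseteq \mathord{\leqq_L}$ it follows by \eqref{eq:qo-up-closure} that $\rho_{\leqslant}(X) \subseteq \rho_{\leqq_L}(X)$ for every $X \in \wp(\Sigma^*)$, which is precisely the statement that $\rho_{\leqq_L}$ is coarser than $\rho_{\leqslant}$, i.e.\ $\rho_{\leqq_L}(\wp(\Sigma^*)) \subseteq \rho_{\leqslant}(\wp(\Sigma^*))$.

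I expect no single step to be a genuine obstacle, since this is the two-sided analogue of an argument already carried out for the Nerode quasiorders. The only points requiring care are (i) keeping the direction of the closure-image containment straight, as the relational inclusion $\mathord{\leqslant} \subseteq \mathord{\leqq_L}$ produces the \emph{reverse} inclusion on images, and (ii) that the wqo and decidability claims for regular $L$ are imported from the finiteness of the syntactic monoid (or from the cited results) rather than reproved from scratch here.
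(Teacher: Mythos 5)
Your proof is correct and follows essentially the same route as the paper's: the $(\epsilon,\epsilon)\in\ctx_L(\cdot)$ argument for clause~(a) of Definition~\ref{def:LConsistent}, monotonicity of $\leqq_L$, the wqo/decidability claims for regular $L$, and the maximality of $\leqq_L$ among $L$-consistent quasiorders yielding the (correctly reversed) containment of closure images. The only difference is that you supply explicit arguments (context-shifting for monotonicity, finiteness of the syntactic monoid for the wqo, and the $uxv\leqslant uyv$ computation for maximality) where the paper imports these facts by citation from de Luca and Varricchio; all of your arguments are sound.
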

\begin{proof}
The proof follows the same lines of the proof of Lemma~\ref{lemma:leftrightnerodegoodqo}.

\noindent
Let us consider \eqref{lemma:leftrightnerodegoodqo:Consistent}.
De Luca and Varricchio~\citeyearpar[Section~2]{deLuca1994} observe that \(\mathord{\leqq_L}\) is monotonic. 
Moreover, if 
$L$ is regular then \(\mathord{\leqq_L}\) is a wqo \cite[Proposition~2.3]{deLuca1994}. 
Let us observe that given \(u \in L\) and \(v \notin L\) we have that \((\epsilon, \epsilon) \in \ctx_L(u)\) while \((\epsilon, \epsilon) \notin \ctx_L(v)\). 
Hence, \(\mathord{\leqq_L}\) is a \(L\)-consistent quasiorder.
Finally, if $L$ is regular then \(\leqq_L\) is clearly decidable.

\noindent
Let us consider \eqref{lemma:leftrightnerodegoodqo:Incl}.
By the characterization of $L$-consistent quasiorders of Lemma~\ref{lemma:propertiesCFG}, 
De Luca and Varricchio~\citeyearpar[Section~2, point~4]{deLuca1994} observe that \(\mathord{\leqq_L}\) is maximum in the set of all \(L\)-consistent quasiorders, i.e.\ every \(L\)-consistent quasiorder \(\leqslant\) is such that 
  \(x \leqslant y \Ra x \leqq_L y \).
As a consequence, \(\rho_{\leqslant}(X) \subseteq \rho_{\leqq_L}(X)\) holds for all \(X\in \wp(\Sigma^*)\), namely, 
\( \rho_{\leqq_L}(\wp(\Sigma^*)) \subseteq \rho_{\leqslant}(\wp(\Sigma^*)) \). 
\end{proof}

\begin{figure}[t]
    \centering
  \begin{tikzpicture}[->,>=stealth',shorten >=1pt,auto,node distance=5mm and 1cm,thick,initial text=]
  \tikzstyle{every state}=[scale=0.75,fill=blue!20,draw=blue!60,text=black]
  
  \node[initial,state] (1) {\(q_1\)};
  \node[state] (2) [right=of 1] {\(q_2\)};
  \node[state, accepting] (3) [right=of 2] {\(q_3\)};
  
  \path (1) edge node {\(a\)} (2)
        (2) edge node {\(a\)} (3)
        (2) edge[loop above] node {\(b\)} (2)
        (3) edge[loop above] node {\(a,b\)} (3)
        (1) edge[bend right=50] node {\(b\)} (3)
            ;
  \end{tikzpicture}
  \caption{A finite automaton \(\cA\) with \(\lang{\cA}= (b+ab^*a)(a+b)^*\).}\label{fig:C}
\end{figure}

\begin{example}\label{example:CFGIncA}
Let us illustrate the use of the Myhill quasiorder $\leqq_{\lang{\cA}}$ in Algorithm~\AlgGrammarW 
for solving the language inclusion \(\lang{\cG} \subseteq \lang{\cA}\), where \(\cG\) is the CFG in Example~\ref{example:cfg} and \(\cA\) is the FA depicted in Figure~\ref{fig:C}.
The equations for \(\cG\) are as follows:
\[\Eqn(\cG) = \begin{cases}
    X_0 = X_0X_1 \cup X_1X_0 \cup \{b\}\\
    X_1 =\{a\}
  \end{cases} \enspace .\]

\noindent
We write \(\{(S,T)\} \cup \{(X,Y)\}\) to compactly denote a set \(\{(u,v) \mid (u,v) \in S\times T \cup X \times Y\}\).
Then, we have the following contexts (among others) for $L=\lang{\cA}=(b+ab^*a)(a+b)^*$:
\begin{align*}
\ctx_L(\epsilon) & = \{(\epsilon, L)\} \cup \{(ab^*, b^*a\Sigma^*)\} \cup \{(L, \Sigma^*)\}\\ 
\ctx_L(a) & = \{(\epsilon, b^*a\Sigma^*)\} \cup \{ab^*, \Sigma^*\} \cup \{(L,\Sigma^*)\} \\
\ctx_L(b) &= \{(\epsilon, \Sigma^*)\} \cup \{(ab^*, b^*a\Sigma^*)\} \cup \{(L, \Sigma^*)\}\\ 
\ctx_L(ba) & = \{(\epsilon, \Sigma^*)\} \cup \{(ab^*, \Sigma^*)\} \cup \{(L, \Sigma^*)\}
\end{align*}
Notice that \(a \leqq_{L} ba\) and 
\(\ctx_L(ab) = \ctx_L(a)\) and \(\ctx_L(ba) = \ctx_L(baa) =\ctx_L(aab) = \ctx_L(aba)\).
Next, we show the computation of the Kleene iterates according to Algorithm \AlgGrammarW using $\sqsubseteq_{\leqq_L}$ by recalling from
Example~\ref{example:cfg} that $\vect{b}=\tuple{\{b\}, \{a\}}$ and 
$\Fn_{\cG }(\tuple{X_0,X_1}) =\tuple{X_0X_1 \cup X_1X_0, \varnothing}$:
\begin{align*}
\vect{Y}^{(0)} &= \vect{\varnothing}\\
\vect{Y}^{(1)} &= \vect{b} = \tuple{\{b\}, \{a\}} \\
\vect{Y}^{(2)} &= \vect{b} \cup \Fn_{\cG}(\vect{Y}^{(1)}) 
= \tuple{\{b\}, \{a\}} \cup \tuple{\{ba,ab\},\varnothing}= \tuple{\{ba,ab,b\}, \{a\}}\\ 
\vect{Y}^{(3)} &= \vect{b} \cup \Fn_{\cG}(\vect{Y}^{(2)}) = \tuple{\{b\}, \{a\}} \cup  \tuple{\{baa,aba,ba,aab,ab \},\varnothing} = \tuple{\{baa,aba,ba,aab,ab,b\},\{a\}}
\end{align*}
It turns out that $\tuple{\{baa,aba,ba,aab,ab,b\},\{a\}} \sqsubseteq_{\leqq_L} \tuple{\{ba,ab,b\}, \{a\}}$ 
because $a \leqq_L baa$,  $a \leqq_L aba$,  $a \leqq_L aab$ hold, so that 
$\Kleene (\sqsubseteq_{\leqq_{L}}, \lambda \vect{X}\ldotp \vect{b} \cup \Fn_{\cG}(\vect{X}), \vect{\varnothing})$ stops with $\vect{Y}^{(3)}$ and 
outputs $\vect{Y}= \tuple{\{ba,ab,b\}, \{a\}}$.
Since $ab\in \vect{Y}_0$ but \(ab \notin \lang{\cA}\), Algorithm~\AlgGrammarW correctly 
concludes that \(\lang{\cG} \subseteq \lang{\cA}\) does not hold. \eox
\end{example}

Similarly to Section~\ref{subsec:state-qos}, next we consider a state-based quasiorder that can be used with Algorithm~\AlgGrammarW. 
First, given a FA \(\cA = \tuple{Q, \delta, I, F, \Sigma}\) we define the state-based equivalent of the context of a word \(w \in \Sigma^*\) as follows:
\[\ctx_{\cA}(w) \ud \{(q,q') \in Q\times Q \mid q \stackrel{w}{\leadsto} q' \} \enspace .\]
Then, the quasiorder \(\leq_{\cA}\) on $\Sigma^*$ 
induced by $\cA$ is defined as follows: for all $u,v\in \Sigma^*$,
\begin{equation}\label{eqn:state-qo:CFG}
u \leq_{\cA} v  \udr \ctx_{\cA}(u) \subseteq \ctx_{\cA}(v)\enspace .
\end{equation}
The following result is the analogue of Lemma~\ref{lemma:LAconsistent} and 
shows that \(\mathord{\leq_{\cA}}\) is a \(\lang{\cA}\)-consistent well-quasiorder and, therefore, it can be used with Algorithm~\AlgGrammarW to solve a language inclusion \(\lang{\cG} \subseteq \lang{\cA}\).

\begin{lemma}\label{lemma:LAconsistent:CFG}
The relation \(\mathord{\leq_{\cA}}\) is a decidable \(\lang{\cA}\)-consistent wqo.
\end{lemma}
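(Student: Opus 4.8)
The plan is to transcribe the proof of Lemma~\ref{lemma:LAconsistent} to the pair-based setting, the one genuinely new feature being that $\ctx_{\cA}$ records \emph{pairs} of states rather than a one-sided predecessor (or successor) set; this is precisely what yields two-sided monotonicity, hence full $\lang{\cA}$-consistency, in a single argument. First I would observe that $\leq_{\cA}$ is a quasiorder, being defined as inclusion between the sets $\ctx_{\cA}(u)\subseteq Q\times Q$, and that it is decidable, since $\ctx_{\cA}(w)$ is a finite and effectively computable subset of $Q\times Q$ for every $w\in\Sigma^*$. It is a wqo because its values live in the finite poset $\tuple{\wp(Q\times Q),\subseteq}$: in any infinite sequence $\{w_i\}$ two distinct words must share the same context by the pigeonhole principle, and equal contexts give $w_i\leq_{\cA}w_j$ (equivalently, a finite poset has no infinite antichain and trivially satisfies DCC).

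The technical heart of the argument is the composition identity
\[
\ctx_{\cA}(uv) = \ctx_{\cA}(u)\comp\ctx_{\cA}(v),
\]
where $\comp$ denotes relational composition read left-to-right, i.e.\ $(q,q')\in R\comp R'$ iff $\exists q''$ with $(q,q'')\in R$ and $(q'',q')\in R'$. This follows directly from the inductive definition of $\goes{\cdot}$, since $q\goes{uv}q'$ holds iff there is an intermediate state $q''$ with $q\goes{u}q''$ and $q''\goes{v}q'$. It is the exact analogue of~\eqref{eq:prepre}, and once it is in place the remaining steps are formal.

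From the identity, both monotonicity directions are immediate because relational composition is monotone in each argument for $\subseteq$. Using the letter-wise characterization~\eqref{def-leftmon}, suppose $x_1\leq_{\cA}x_2$, i.e.\ $\ctx_{\cA}(x_1)\subseteq\ctx_{\cA}(x_2)$. Then for every $a\in\Sigma$ we get $\ctx_{\cA}(ax_1)=\ctx_{\cA}(a)\comp\ctx_{\cA}(x_1)\subseteq\ctx_{\cA}(a)\comp\ctx_{\cA}(x_2)=\ctx_{\cA}(ax_2)$, hence $ax_1\leq_{\cA}ax_2$; symmetrically $\ctx_{\cA}(x_1a)=\ctx_{\cA}(x_1)\comp\ctx_{\cA}(a)\subseteq\ctx_{\cA}(x_2)\comp\ctx_{\cA}(a)=\ctx_{\cA}(x_2a)$ gives $x_1a\leq_{\cA}x_2a$. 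Thus $\leq_{\cA}$ is both left- and right-monotonic, establishing condition~(\ref{eq:LConsistentmonotonic}) of Definition~\ref{def:LConsistent}.

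For the precision condition~(\ref{eq:LConsistentPrecise}) I would argue by contraposition exactly as in Lemma~\ref{lemma:LAconsistent}: if $u\in\lang{\cA}$ then some pair $(q_i,q_f)$ with $q_i\in I$ and $q_f\in F$ lies in $\ctx_{\cA}(u)$; were $u\leq_{\cA}v$, this pair would also lie in $\ctx_{\cA}(v)$, witnessing $v\in\lang{\cA}$. Hence $u\in\lang{\cA}$ and $v\notin\lang{\cA}$ force $u\not\leq_{\cA}v$, i.e.\ $\mathord{\leq_{\cA}}\cap(\lang{\cA}\times\neg\lang{\cA})=\varnothing$. Combining the three parts shows $\leq_{\cA}$ is a decidable $\lang{\cA}$-consistent wqo. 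The only step demanding care is the composition identity; everything else is a routine adaptation of the one-sided predecessor proof, and the finiteness of $Q\times Q$ makes the wqo and decidability claims essentially automatic.
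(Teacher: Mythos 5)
Your proof is correct and follows essentially the same route as the paper's: decidability and the wqo property from finiteness and computability of $\ctx_{\cA}(w)\subseteq Q\times Q$, precision via the pair $(q_i,q_f)\in I\times F$, and monotonicity from the behaviour of contexts under concatenation. The one point where you are more explicit than the paper is the composition identity $\ctx_{\cA}(uv)=\ctx_{\cA}(u)\comp\ctx_{\cA}(v)$, which the paper compresses into the remark that $\ctx_{\cA}$ is monotonic; spelling it out as the analogue of~\eqref{eq:prepre} is a welcome clarification rather than a different argument.
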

\begin{proof}
For every \(u \in \Sigma^*\), \(\ctx_{\cA}(u)\) is a finite and computable set, so that \(\mathord{\leq_{\cA}}\) is a decidable wqo. 
Next, we show that \(\mathord{\leq_{\cA}}\) is \(\lang{A}\)-consistent according to Definition~\ref{def:LConsistent}~(\ref{eq:LConsistentPrecise})-(\ref{eq:LConsistentmonotonic}). 

\noindent
\eqref{eq:LConsistentPrecise} By picking \(u\in \lang{\cA}\) and \(v\notin \lang{\cA}\) we have that \(\ctx_{\cA}(u)\) contains a pair \((q_i, q_f)\) with \(q_i \in I\) and \(q_f \in F\) while \(\ctx_{\cA}(v)\) does not, hence \(u \nleq_{\cA} v\).

\noindent
\eqref{eq:LConsistentmonotonic} Let us check that $\leq_{\cA}$ is monotonic.
Observe that $\ctx_{\cA}: \tuple{\Sigma^*,\leq_{\cA}} \ra \tuple{\wp(Q^2),\subseteq}$ is monotonic.
Therefore, for all $x_1,x_2\in \Sigma^*$ and $a,b\in \Sigma$, 
\begin{align*}
x_1 \leq_{\cA} x_2 & \Ra  \quad\text{~[by definition of \(\leq_{\cA}\)]} \\
\ctx_{\cA}(x_1) \subseteq \ctx_{\cA}(x_2) & \Ra  \quad\text{~[as $\ctx_\cA$ is monotonic]} \\
\ctx_{\cA}(ax_1 b) \subseteq \ctx_{\cA}(a x_2 b) & \Ra \quad \text{~[by definition of \(\leq_{\cA}\)]} \\
ax_1b \leq_{\cA} ax_2b & \enspace . \tag*{\qedhere}
\end{align*}
\end{proof} 

For the Myhill wqo
$\leqq_{\lang{\cA}}$, it turns out that for all $u,v\in \Sigma^*$,
\begin{align*}
& u \leqq_{\lang{\cA}} v  \Lra \ctx_{\lang{\cA}}(u) \subseteq \ctx_{\lang{\cA}}(v) \Lra \\[-2pt]
&\{(x, y) \mid x \in W_{I,q} \land y \in W_{q', F} \land q \stackrel{u}{\leadsto} q'\} \subseteq \{(x, y) \mid x \in W_{I,q} \land y \in W_{q', F} \land q \stackrel{v}{\leadsto} q'\} \enspace .
\end{align*}
Therefore, \(u \leq_{\cA} v \Ra u \leqq_{\lang{\cA}} v\) and, consequently, \( \rho_{\leqq_{\lang{\cA}}}(\wp(\Sigma^*)) \subseteq \rho_{\leq^l_{\cA_2}}(\wp(\Sigma^*))\) holds.

\begin{example}
Let us illustrate the use of the state-based quasiorder $\leq_{\cA}$ to solve the language inclusion \(\lang{\cG} \subseteq \lang{\cA}\) of Example~\ref{example:CFGIncA}.
Here, among others, we have the following contexts:
\begin{align*}
\ctx_{\cA}(\epsilon) & = \{(q_1, q_1), (q_2, q_2), (q_3,q_3)\} & \ctx_{\cA}(a) & = \{(q_1, q_2), (q_2, q_3), (q_3,q_3)\} \\
\ctx_{\cA}(b) &= \{(q_1,q_3), (q_2,q_2), (q_3, q_3)\} & \ctx_{\cA}(ba) & = \{(q_1,q_3), (q_2,q_3), (q_3,q_3)\}
\end{align*}
Moreover, \(\ctx_{\cA}(ba) = \ctx_{\cA}(baa) = \ctx_{\cA}(aab) = \ctx_{\cA}(aba)\).
Recall from Example~\ref{example:CFGIncA} that for the Myhill quasiorder we have that \(a \leqq_{\lang{\cA}} ba\), while for the state-based quasiorder \(a \nleq_{\cA} ba\).
The Kleene iterates computed by Algorithm \AlgGrammarW when using 
\(\sqsubseteq_{\mathord{\leq_{\cA}}}\) are exactly the same of Example~\ref{example:CFGIncA}. 
Here, it turns out that \AlgGrammarW outputs $\vect{Y}^{(2)} = \tuple{\{ba,ab,b\}, \{a\}}$ because
$\vect{Y}^{(3)} = \tuple{\{baa,aba,ba,aab,ab,b\},\{a\}} \sqsubseteq_{\leqq_L} \tuple{\{ba,ab,b\}, \{a\}}=\vect{Y}^{(2)}$ holds: in fact, we have that
$ba \leq_{\cA} baa$,  $ba \leq_{\cA} aba$,  $ba \leq_{\cA} aab$ hold. 
Since $ab\in \vect{Y}^{(2)}_0$ but \(ab \notin \lang{\cA}\), Algorithm~\AlgGrammarW derives that \(\lang{\cG} \not\subseteq \lang{\cA}\). \eox
\end{example}

\subsection{An Antichain Inclusion Algorithm for CFGs}\label{sec:ACGrammar}
We can easily formulate an equivalent of 
Theorem~\ref{theorem:EffectiveAlgorithm} for context-free languages, therefore 
defining an algorithm for solving \(\lang{\cG} \subseteq L_2\) 
by computing on an abstract domain as defined by a Galois connection. 

\begin{theorem}\label{theorem:EffectiveAlgorithmCFG}
Let \(\cG=\tuple{\cV,\Sigma,P}\) be a CFG in CNF and let \(L_2\in \wp(\Sigma^*)\).
Let  \( \tuple{D,\leq_D}\) be a poset and
\(\tuple{\wp(\Sigma^*),\subseteq} \galois{\alpha}{\gamma}\tuple{D,\sqsubseteq}\) be a GC.
Assume that the following properties hold:
\begin{compactenum}[\upshape(i\upshape)]
\item \(L_2\in\gamma(D)\) and for every \( a \in \Sigma\), \(X \in \wp(\Sigma^*)\), \(\gamma(\alpha(a X)) = \gamma(\alpha(a \gamma(\alpha(X))))\) and \(\gamma(\alpha(Xa)) = \gamma(\alpha(\gamma(\alpha(X))a))\).\label{theorem:EffectiveAlgorithmCFG:prop:rho}
\item \((D,\leq_D,\sqcup,\bot_D)\) is an effective domain, meaning that: \((D,\leq_D,\sqcup,\bot_D)\) is an ACC join-semilattice with bottom $\bot_D$, 
every element of \(D\) has a finite representation, the binary relation 
\(\leq_D\) is decidable and the binary lub \(\sqcup\) is computable.\label{theorem:EffectiveAlgorithmCFG:prop:absdecidable}
\item There is an algorithm, say \(\Fn^{\sharp}(\vect{X}^\sharp)\), which computes \(\alpha\comp \Fn_{\cG}\comp \gamma \).
\label{theorem:EffectiveAlgorithmCFG:prop:abspre}
\item There is an algorithm, say \(\base^\sharp\), which computes \(\alpha(\vect{b})\).\label{theorem:EffectiveAlgorithmCFG:prop:abseps}
\item There is an algorithm, say \(\absincl\), which decides 
\(\vect{X}^\sharp \leq_D \alpha(\vectarg{L_2}{X_0})\), for all \(\vect{X}^\sharp\in \alpha(\wp(\Sigma^*))^{|\cV|}\).
\label{theorem:EffectiveAlgorithmCFG:prop:absincl}
\end{compactenum}
Then, 

\medskip
\(\tuple{Y_i^\sharp}_{i \in [0,n]} := \Kleene (\leq_D,\lambda \vect{X}^\sharp\ldotp\base^\sharp \sqcup \Fn^{\sharp}(\vect{X}^\sharp), \vect{\bot_D})\)\emph{;}

\emph{\textbf{return}} \(\absincl(\tuple{Y_i^\sharp}_{i \in [0,n]})\)\emph{;}

\medskip
\noindent
is a decision algorithm for \(\lang{\cG} \subseteq L_2\).
\end{theorem}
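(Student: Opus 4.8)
The plan is to transcribe the proof of Theorem~\ref{theorem:EffectiveAlgorithm} almost verbatim, replacing the automaton transformer $\Pre_{\cA}$ by the grammar transformer $\Fn_{\cG}$, the initial vector $\vectarg{\epsilon}{F}$ by $\vect{b}$, and the index set of initial states by the start-symbol index $X_0$. First I would set $\rho \ud \gamma\alpha \in \uco(\wp(\Sigma^*))$. Under this abbreviation, hypothesis~(\ref{theorem:EffectiveAlgorithmCFG:prop:rho}) reads exactly as $\rho(L_2) = L_2$ together with $\rho(aX) = \rho(a\rho(X))$ and $\rho(Xa) = \rho(\rho(X)a)$ for all $a \in \Sigma$ and $X \in \wp(\Sigma^*)$; unfolding the definition $\rho f = \rho f \rho$, these last two equalities are precisely backward completeness of $\rho$ for $\lambda X\ldotp aX$ and for $\lambda X\ldotp Xa$. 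These are exactly the hypotheses of Theorem~\ref{theorem:rhoCFG}, so $\rho$ is backward complete for $\lambda\vect{X}\ldotp \vect{b}\cup \Fn_{\cG}(\vect{X})$, and by the least-fixpoint transfer of completeness~\eqref{eqn:lfpcompleteness} I obtain the grammar analogue of Corollary~\ref{corol:rholfp}, namely $\rho(\lfp(\lambda\vect{X}\ldotp \vect{b}\cup \Fn_{\cG}(\vect{X}))) = \lfp(\lambda\vect{X}\ldotp \rho(\vect{b}\cup \Fn_{\cG}(\vect{X})))$.

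Next I would run the same chain of equivalences, starting from the fixpoint characterization $\lang{\cG} \subseteq L_2 \Lra \lfp(\lambda\vect{X}\ldotp \vect{b}\cup \Fn_{\cG}(\vect{X})) \subseteq \vectarg{L_2}{X_0}$ stated just before Theorem~\ref{theorem:rhoCFG}. Using the corollary above one rewrites the concrete least fixpoint with $\rho$ applied; then Lemma~\ref{lemma:alpharhoequality} moves the computation onto the abstract domain, yielding $\gamma(\lfp(\lambda\vect{X}^\sharp\ldotp \alpha(\vect{b}\cup \Fn_{\cG}(\gamma(\vect{X}^\sharp))))) \subseteq \vectarg{L_2}{X_0}$; additivity of $\alpha$ splits the argument into the join $\alpha(\vect{b}) \sqcup \alpha(\Fn_{\cG}(\gamma(\vect{X}^\sharp)))$; and two applications of the Galois connection (the second using $L_2 \in \gamma(D)$ from hypothesis~(\ref{theorem:EffectiveAlgorithmCFG:prop:rho})) turn the inclusion into $\lfp(\lambda\vect{X}^\sharp\ldotp \alpha(\vect{b}) \sqcup \alpha(\Fn_{\cG}(\gamma(\vect{X}^\sharp)))) \leq_D \alpha(\vectarg{L_2}{X_0})$.

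Finally I would discharge the algorithmic content. By hypotheses~(\ref{theorem:EffectiveAlgorithmCFG:prop:abseps}) and~(\ref{theorem:EffectiveAlgorithmCFG:prop:abspre}) the abstract iterate $\lambda\vect{X}^\sharp\ldotp \base^\sharp \sqcup \Fn^\sharp(\vect{X}^\sharp)$ computes $\lambda\vect{X}^\sharp\ldotp \alpha(\vect{b}) \sqcup \alpha(\Fn_{\cG}(\gamma(\vect{X}^\sharp)))$, while hypothesis~(\ref{theorem:EffectiveAlgorithmCFG:prop:absdecidable}) guarantees that $D$ is an ACC join-semilattice with decidable $\leq_D$ and computable $\sqcup$; hence $\Kleene(\leq_D, \lambda\vect{X}^\sharp\ldotp \base^\sharp \sqcup \Fn^\sharp(\vect{X}^\sharp), \vect{\bot_D})$ performs finitely many computable iterations and halts exactly at the least fixpoint above. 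Hypothesis~(\ref{theorem:EffectiveAlgorithmCFG:prop:absincl}) then makes the final $\absincl$ check of this fixpoint against $\alpha(\vectarg{L_2}{X_0})$ decidable, which yields the decision procedure for $\lang{\cG} \subseteq L_2$.

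I expect the only genuine point of care to lie in the first paragraph: one must invoke the full two-sided backward completeness, because CNF productions $X_i \to YZ$ introduce binary concatenations of two variable languages rather than single-letter concatenations, so the one-sided argument used for automata no longer suffices. This is exactly what Theorem~\ref{theorem:rhoCFG} supplies (having lifted letter-completeness on both sides to completeness for binary concatenation), so the obstacle is already resolved upstream; everything else is a routine relabelling of the automata proof with $\Pre_{\cA}, \vectarg{\epsilon}{F}$ replaced by $\Fn_{\cG}, \vect{b}$.
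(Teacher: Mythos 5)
Your proposal is correct and coincides with the paper's intended argument: the paper's own proof of this theorem is literally the one-line remark that it is analogous to the proof of Theorem~\ref{theorem:EffectiveAlgorithm}, and your transcription (with $\Pre_{\cA}$, $\vectarg{\epsilon}{F}$, $I$ replaced by $\Fn_{\cG}$, $\vect{b}$, $X_0$, and with Theorem~\ref{theorem:rhoCFG} supplying the two-sided backward completeness needed for the binary concatenations in CNF productions) is exactly the analogy the authors have in mind. Your closing observation about why the one-sided argument from the automata case does not suffice is the right point of care, and it is indeed discharged upstream by Theorem~\ref{theorem:rhoCFG}.
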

\begin{proof}
Analogous to the proof of Theorem~\ref{theorem:EffectiveAlgorithm}.
\end{proof}

Similarly to what is done in Section~\ref{sec:usingGC}, 
in order to solve 
an inclusion problem $\cL(\cG) \subseteq \cL(\cA)$, where 
$\cA = \tuple{Q,\delta,I,F,\Sigma}$ is a FA, 
we leverage Theorem~\ref{theorem:EffectiveAlgorithmCFG}
to systematically design a ``state-based'' algorithm that computes Kleene iterates
on the antichain poset \(\tuple{\AC_{\tuple{\wp(Q\times Q),\subseteq}},\sqsubseteq}\) viewed as an abstraction of \(\tuple{\wp(\Sigma^*), \subseteq}\).
Here, the abstraction and concretization maps
\(\alpha\colon \wp(\Sigma^*) \ra \AC_{\tuple{\wp(Q\times Q),\subseteq}}\) and
\(\gamma\colon \AC_{\tuple{\wp(Q\times Q),\subseteq}}\ra\wp(\Sigma^*)\) and the  function
\({\Fn}_{\cG}^{\cA}: (\AC_{\tuple{\wp(Q\times Q),\subseteq}})^{|\cV|} \ra (\AC_{\tuple{\wp(Q\times Q),\subseteq}})^{|\cV|}\) are defined 
as follows:
\begin{align*}
  &\alpha(X)\ud \minor{\{\ctx_{\cA}(u) \in \wp(Q\times Q) \mid u \in X\}}\; ,
  \qquad 
  \gamma(Y)  \ud \{v \in \Sigma^* \mid \exists y \in Y, y \subseteq \ctx_{\cA}(v)\}\; , \\
  &\Fn_{\cG }^{\cA}(\tuple{X_i}_{i\in[0,n]})  \ud \langle \minor{\{X_j \comp X_k \in \wp(Q\times Q) \mid X_i {\to} X_j X_k \in P\}} \rangle_{i \in [0,n]} \;,
\end{align*}
where $\minor{X}$ is the unique minor set w.r.t.\ subset inclusion 
of some $X\subseteq \wp(Q\times Q)$ and 
\(X \comp Y \ud \{(q,q') \in Q\times Q \mid (q,q'') \in X,\, (q'',q') \in Y\}\) denotes
the standard composition of two 
relations  $X,Y\subseteq Q\times Q$. By
the analogue of Lemma~\ref{lemma:rhoisgammaalpha} (the proof follows the same pattern
and is therefore omitted), it turns out that:
\begin{compactenum}[\upshape(\rm a\upshape)]
\item \(\tuple{\wp(\Sigma^*),\subseteq}\galois{\alpha}{\gamma}\tuple{\AC_{\tuple{\wp(Q\times Q),\subseteq}},\sqsubseteq}\) is a GC,\label{lemma:rhoisgammaalpha:GCCFG}
\item \(\gamma \comp \alpha = \rho_{\leqslant_{\cA}}\),\label{lemma:rhoisgammaalpha:rhoCFG}
\item \(\Fn_{\cG}^{\cA} = \alpha \comp \Fn_{\cG} \comp \gamma\).\label{lemma:rhoisgammaalpha:preCFG}
\end{compactenum}

Thus,  the GC \(\tuple{\wp(\Sigma^*),\subseteq}\galois{\alpha}{\gamma}\tuple{\AC_{\tuple{\wp(Q\times Q),\subseteq}},\sqsubseteq}\) and the abstract 
function \(\Fn_{\cG}^{\cA}\) satisfy the hypotheses (i)-(iv) of
Theorem~\ref{theorem:EffectiveAlgorithmCFG}. Here, the inclusion check 
\(\vect{X}^\sharp \leq_D \alpha(\vectarg{\cL(\cA)}{X_0})\) boils down to verify 
that for the start component $Y_0$ of the output $\tuple{Y_i}_{i\in[0,n]}$ of 
$\Kleene (\sqsubseteq, \lambda \vect{X}^\sharp\ldotp\alpha(\vect{b}) \sqcup \Fn_{\cG}^{\cA}(\vect{X}^\sharp), \vect{\varnothing})$, for all $R\in Y_0$, $R$ does not contain
a pair $(q_i,q_f)\in I\times F$. We therefore derive the following state-based
algorithm \AlgGrammarA ($\mathtt{S}$ stands for state) that decides 
an inclusion \(L(\cG) \subseteq L(\cA)\) on the abstract domain of antichains 
$\AC_{\tuple{\wp(Q\times Q),\subseteq}}$.

\RemoveAlgoNumber
\begin{algorithm}[!ht]
\SetAlgorithmName{\AlgGrammarA}{}

\caption{State-based algorithm for \(L(\cG) \subseteq L(\cA)\)}\label{alg:CFGIncA}

\KwData{CFG \(\cG = \tuple{\cV,\Sigma,P}\) and FA \(\cA = \tuple{Q,\delta,I,F,\Sigma}\)}

\(\tuple{Y_i}_{i\in[0,n]} := \Kleene (\lambda \vect{X}^\sharp\ldotp\alpha(\vect{b}) \sqcup \Fn_{\cG}^{\cA}(\vect{X}^\sharp), \vect{\varnothing})\)\;

\ForAll{\(R \in Y_0\)}{
  \lIf{\(R \cap (I \times F) = \varnothing\)}{\Return \textit{false}}
}
\Return \textit{true}\;
\end{algorithm}

\begin{theorem}\label{theorem:statesQuasiorderAlgorithmCFG}
 The algorithm \AlgGrammarA decides the inclusion problem \(L(\cG) \subseteq L(\cA)\).
\end{theorem}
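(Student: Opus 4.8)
The plan is to instantiate Theorem~\ref{theorem:EffectiveAlgorithmCFG} with the abstract domain $\tuple{D,\leq_D}=\tuple{\AC_{\tuple{\wp(Q\times Q),\subseteq}},\sqsubseteq}$, the Galois connection $\tuple{\wp(\Sigma^*),\subseteq}\galois{\alpha}{\gamma}\tuple{\AC_{\tuple{\wp(Q\times Q),\subseteq}},\sqsubseteq}$ and the abstract transformer $\Fn_{\cG}^{\cA}$ introduced just above, following verbatim the structure of the proof of Theorem~\ref{theorem:statesQuasiorderAlgorithm} for the regular case. Properties (a)--(c) recorded before the statement already settle hypotheses (i)--(iv): property (b), $\gamma\alpha=\rho_{\leqslant_{\cA}}$, combined with Lemma~\ref{lemma:propertiesCFG} and Lemma~\ref{lemma:LAconsistent:CFG}, yields $L(\cA)\in\gamma(D)$ and the two-sided backward completeness $\gamma\alpha(aX)=\gamma\alpha(a\gamma\alpha(X))$, $\gamma\alpha(Xa)=\gamma\alpha(\gamma\alpha(X)a)$ demanded by (i); finiteness of $Q\times Q$ makes $\tuple{\AC_{\tuple{\wp(Q\times Q),\subseteq}},\sqsubseteq,\sqcup,\varnothing}$ an effective, in particular ACC, domain for (ii); property (c), $\Fn_{\cG}^{\cA}=\alpha\Fn_{\cG}\gamma$, with the evident computability of $\Fn_{\cG}^{\cA}$, discharges (iii); and (iv) is immediate since $\alpha(\vect{b})$ is computed componentwise via $\alpha(\{\beta\})=\{\ctx_{\cA}(\beta)\}$ for each terminal or $\epsilon$ production.

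First I would carry out the verification of the remaining hypothesis (v), which is the real content. Fix an output vector $\tuple{Y_i}_{i\in[0,n]}\in\alpha(\wp(\Sigma^*))^{|\cV|}$ and write $L_2\ud L(\cA)$. On every component $i\neq 0$ the inclusion $\tuple{Y_i}_{i\in[0,n]}\sqsubseteq\alpha(\vectarg{L_2}{X_0})$ holds trivially, because the $i$-th component of $\alpha(\vectarg{L_2}{X_0})$ is $\alpha(\Sigma^*)$, the greatest antichain, so only the start component matters. Picking any $U\in\wp(\Sigma^*)$ with $Y_0=\alpha(U)=\minor{\{\ctx_{\cA}(u)\mid u\in U\}}$, I would establish
\begin{align*}
Y_0 \sqsubseteq \alpha(L_2) &\Lra \alpha(U) \sqsubseteq \alpha(L_2) \Lra U \subseteq \gamma(\alpha(L_2)) = L_2\\
&\Lra \forall u\in U,\ \ctx_{\cA}(u)\cap(I\times F)\neq\varnothing \Lra \forall R\in Y_0,\ R\cap(I\times F)\neq\varnothing.
\end{align*}
This last condition is exactly the test run by lines~2--5 of \AlgGrammarA, so Theorem~\ref{theorem:EffectiveAlgorithmCFG} yields that the algorithm decides $L(\cG)\subseteq L(\cA)$.

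The main obstacle I anticipate is the third equivalence above, $U\subseteq L_2\Lra \forall u\in U,\ \ctx_{\cA}(u)\cap(I\times F)\neq\varnothing$. It rests on the observation that $u\in L(\cA)$ holds precisely when an initial-to-final run on $u$ exists, i.e.\ when $\ctx_{\cA}(u)=\{(q,q')\mid q\goes{u}q'\}$ meets $I\times F$, together with the check that replacing $U$ by its minor $Y_0$ leaves the quantified condition intact: every $R\in Y_0$ equals some $\ctx_{\cA}(u)$, every $\ctx_{\cA}(u)$ contains some $R\in Y_0$ for $\subseteq$, and the predicate $\,\cdot\cap(I\times F)\neq\varnothing$ is upward closed for $\subseteq$. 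The remaining steps are routine invocations of the Galois connection and of property (b); the only extra care, as in the regular case, is to confirm that each Kleene iterate remains a finite antichain --- guaranteed by the minor operation and the finiteness of $Q\times Q$ --- so that the ACC hypothesis of Theorem~\ref{theorem:EffectiveAlgorithmCFG} forces termination.
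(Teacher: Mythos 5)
Your proposal is correct and follows essentially the same route as the paper: instantiate Theorem~\ref{theorem:EffectiveAlgorithmCFG} with the antichain domain over $\wp(Q\times Q)$, discharge hypotheses (i)--(iv) via properties (a)--(c) and Lemmata~\ref{lemma:propertiesCFG} and~\ref{lemma:LAconsistent:CFG}, and reduce hypothesis (v) to the check $\forall R\in Y_0,\ R\cap(I\times F)\neq\varnothing$ through the same chain of equivalences. Your explicit justification of the passage from $U$ to its minor $Y_0$ (using that the predicate is upward closed for $\subseteq$) is a detail the paper leaves implicit, but the argument is the same.
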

\begin{proof}
The proof follows the same pattern of the proof of Theorem~\ref{theorem:statesQuasiorderAlgorithm}. We just focus
on the inclusion check  at lines 2-4, which is slightly different from 
the check at lines 2-5 of Algorithm \AlgRegularA. Let $L_2=\cL(\cA)$. 
Since \(\alpha(\vectarg{L_2}{X_0})=\tuple{\alpha(\nullable{i \eqqm 0}{L_2}{\Sigma^*})}_{i \in [0,n]}\), for all $\vect{Y}\in \alpha(\wp(\Sigma^*))^{|\cV|}$ the relation \(\vect{Y} \sqsubseteq \alpha(\vectarg{L_2}{X_0})\) trivially holds 
for all components \(Y_i\) with \(i \neq 0\).
For $Y_0$, it is enough to prove that
\(Y_0 \sqsubseteq \alpha(L_2) \Lra \forall R \in Y_q, \; R \cap (I \times F) \neq \varnothing\): 
\begin{align*}
Y_0 \sqsubseteq \alpha(L_2) & \Lra \quad\text{~[since \(Y_0 = \alpha(U)\) for some \(U \in \wp(\Sigma^*)\)]} \\
\alpha(U) \sqsubseteq \alpha(L_2) & \Lra \quad\text{~[by GC]} \\
U \subseteq \gamma(\alpha(L_2)) & \Lra \quad\text{~[by $\gamma(\alpha(L_2))=L_2$]} \\
U \subseteq L_2 & \Lra \quad\text{~[by definition of \(\ctx_{\cA}(u)\)]} \\
\forall u \in U, \ctx_{\cA}(u) \cap (I \times F) \neq \varnothing & \Lra \quad\text{~[since \(Y_0 = \alpha(U) = \lfloor \{ \ctx_{\cA}(u) \mid u\in U\} \rfloor \)]} \\
\forall R \in Y_0, R \cap I \neq \varnothing &\enspace .
\end{align*}
Hence, Theorem~\ref{theorem:EffectiveAlgorithmCFG} entails that 
Algorithm \AlgGrammarA decides \(\lang{\cG} \subseteq \lang{\cA}\). 
\end{proof}

The resulting algorithm \AlgGrammarA shares some features with two previous related works.
On the one hand, it is related to the work of \citet{Hofmann2014} which defines an abstract interpretation-based language inclusion decision procedure similar to ours.  
Even though Hofmann and Chen's algorithm and ours both manipulate sets of pairs of states of an automaton, 
their abstraction is based on equivalence relations and not quasiorders. 
Since quasiorders are strictly more general than equivalences our framework can be instantiated to 
a larger class of abstractions, most importantly coarser ones. 
Finally, it is worth pointing out 
that \citet{Hofmann2014} approach aims at including languages of finite and also infinite words.

A second related work is that of \citet{Holk2015} who define an antichain-based algorithm manipulating sets of pairs of states. 
However, they tackle the inclusion problem \(\lang{\cG} \subseteq \lang{\cA}\), where \(\cG\) is a grammar and \(\cA\) and automaton, by rephrasing it as a data flow analysis problem over a relational domain.
In this scenario, the solution of the problem requires the computation of a least fixpoint on the relational domain, followed by an inclusion check between sets of relations.
Then, they use the ``antichain principle'' to improve the performance of the fixpoint computation and, finally, they move from manipulating relations to manipulating pairs of states.
As a result, \citet{Holk2015} devise an antichain algorithm for checking the inclusion \(\lang{\cG} \subseteq \lang{\cA}\).

By contrast to these two approaches, our design 
technique is direct and systematic, since the algorithm \AlgGrammarA is derived 
from the known Myhill quasiorder.
We believe that our approach reveals the relationship between the original antichain algorithm by \citet{DBLP:conf/cav/WulfDHR06} for regular languages and the one by \citet{Holk2015} for context-free languages, which is the relation between our algorithms~\AlgRegularA and~\AlgGrammarA.
Specifically, we have shown that these two algorithms are conceptually identical and just differ in the well-quasiorder used to define the abstract domain where
computations take place.

\section{An Equivalent Greatest Fixpoint Algorithm}%
\label{sec:greatest_fixpoint_based_algorithm}

Let us assume that 
\(g \colon C\ra C\) is a monotonic function on a complete lattice $\tuple{C,\leq,\vee,\wedge}$ which admits 
its unique right-adjoint \(\widetilde{g} \colon C\ra C\), 
i.e., $\forall c,c'\in C,\, g(c)\leq c' \Lra c\leq \widetilde{g}(c')$ holds.
Then, \citet[Theorem~4]{cou00} shows that the following equivalence holds:
for all \(c,c'\in C\), 
\begin{equation}\label{eqn:duality}
\lfp(\lambda x\ldotp c \vee g(x)) \leq c' \;\Lra\;
c\leq \gfp(\lambda y\ldotp c' \wedge \widetilde{g}(y)) \enspace .
\end{equation}
This property has been used in \cite{cou00} to derive equivalent least/greatest fixpoint-based invariance proof methods for programs. 
In the following, we use \eqref{eqn:duality} to derive an algorithm for deciding the inclusion \(\lang{\cA_1}\subseteq \lang{\cA_2}\), which relies on the computation of a greatest fixpoint rather than a least fixpoint.
This can be achieved by exploiting the following simple observation, which defines an adjunction between concatenation and quotients
of sets of words.
\begin{lemma}\label{lemma:adjointbinary} 
For all \(X,Y \in \wp(\Sigma^*)\) and \(w\in \Sigma^*\),  \(wY \subseteq Z \Lra Y \subseteq w^{-1}Z\) and \(Yw \subseteq Z \Lra Y \subseteq Zw^{-1}\).
\end{lemma}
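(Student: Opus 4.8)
The plan is to prove both equivalences by the same elementary adjunction argument, using only the definitions of left/right quotient and concatenation; these are the classical residuation identities for the concatenation monoid, so no earlier result from the excerpt is actually needed beyond unfolding definitions. I will treat the left case $wY \subseteq Z \Lra Y \subseteq w^{-1}Z$ in detail and remark that the right case $Yw \subseteq Z \Lra Y \subseteq Zw^{-1}$ is symmetric (swap the side of concatenation and use $Zw^{-1} = \{u \mid uw \in Z\}$ in place of $w^{-1}Z = \{u \mid wu \in Z\}$).

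For the left case, I would argue by a chain of logical equivalences driven purely by membership. First recall $wY = \{wu \mid u \in Y\}$ and $w^{-1}Z = \{u \in \Sigma^* \mid wu \in Z\}$. For the forward direction, assume $wY \subseteq Z$ and take any $u \in Y$; then $wu \in wY \subseteq Z$, so $wu \in Z$, which is exactly $u \in w^{-1}Z$; hence $Y \subseteq w^{-1}Z$. For the converse, assume $Y \subseteq w^{-1}Z$ and take any element of $wY$, which has the form $wu$ with $u \in Y$; then $u \in w^{-1}Z$, i.e.\ $wu \in Z$, so $wY \subseteq Z$. I can compress this into a single bidirectional display:
\begin{align*}
wY \subseteq Z &\Lra \forall u\in Y,\; wu \in Z \\
&\Lra \forall u \in Y,\; u \in w^{-1}Z \\
&\Lra Y \subseteq w^{-1}Z \enspace .
\end{align*}

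The right case runs identically with $uw$ in place of $wu$ and $Zw^{-1} = \{u \mid uw \in Z\}$, giving
\begin{align*}
Yw \subseteq Z &\Lra \forall u \in Y,\; uw \in Z \\
&\Lra \forall u \in Y,\; u \in Zw^{-1} \\
&\Lra Y \subseteq Zw^{-1} \enspace .
\end{align*}
I do not anticipate a genuine obstacle here: the statement is a pure residuation fact and the only thing to be careful about is that the middle step in each chain is literally the \emph{definition} of the respective quotient, so the proof should explicitly invoke $w^{-1}Z \ud \{u \mid wu \in Z\}$ and $Zw^{-1} \ud \{u \mid uw \in Z\}$ rather than appealing to any adjunction machinery. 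The mild subtlety worth flagging is that the parameter $X$ mentioned in the lemma statement plays no role in either equivalence (only $Y$, $Z$, and $w$ appear), so I would simply not use it; the lemma's purpose, as signalled by the surrounding text, is to supply the right-adjoint to concatenation by a fixed word for the later duality~\eqref{eqn:duality}, and these two membership chains establish exactly that.
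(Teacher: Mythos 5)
Your proof is correct and is essentially identical to the paper's: both unfold the definitions of the quotients and chain the membership equivalences $wY \subseteq Z \Lra \forall u\in Y,\, wu\in Z \Lra Y\subseteq w^{-1}Z$, treating the right case symmetrically. Your observation that $X$ is vacuous in the statement (and that $Z$ is the variable actually quantified over) is a fair remark about the lemma as written, but does not affect the argument.
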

\begin{proof}
	By definition, for all \(u\in \Sigma^*\), \(u \in w^{-1}Z\) if{}f \( wu \in Z\). 
Hence, \(Y\subseteq w^{-1}Z \Lra \forall u\in Y,\: wu \in Z \Lra wY\subseteq Z\). 
 Symmetrically, \(Yw\subseteq Z\) \(\Lra\) \(Y\subseteq Zw^{-1}\) holds. %
\end{proof}

Given a FA \(\cA = \tuple{Q,\delta,I,F,\Sigma}\), we define the function $\widetilde{\Pre}_\cA:\wp(\Sigma^*)^{|Q|} \ra \wp(\Sigma^*)^{|Q|}$ on $Q$-indexed vectors of
sets of words as follows:
\[
	\widetilde{\Pre}_\cA(\tuple{X_q}_{q\in Q})	\ud \langle {\textstyle\bigcap_{a\in \Sigma, q'\in \delta(q,a)}}\; a^{-1} X_q
\rangle_{q'\in Q} \enspace ,
\]
where, as usual, \(\bigcap \varnothing = \Sigma^*\). It turns out that $\widetilde{\Pre}_\cA$ is the usual weakest liberal precondition which is 
right-adjoint
of $\Pre_\cA$.
\begin{lemma}\label{lemma:FnAdjoint}
For all \(\vect{X},\vect{Y}\in \wp(\Sigma^*)^{|Q|}\), \(\Pre_{\cA}(\vect{X})\subseteq \vect{Y}\Lra \vect{X}\subseteq \widetilde{\Pre}_{\cA}(\vect{Y})\).
\end{lemma}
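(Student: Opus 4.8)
The plan is to show that $\Pre_\cA$ and $\widetilde{\Pre}_\cA$ form an adjoint pair by reducing the vector-level adjunction to the scalar adjunction already established in Lemma~\ref{lemma:adjointbinary}. Since the order $\subseteq$ on $\wp(\Sigma^*)^{|Q|}$ and both functions are defined componentwise, everything should come down to unfolding definitions and rearranging quantifiers.

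First I would unfold $\Pre_{\cA}(\vect{X}) \subseteq \vect{Y}$. By the componentwise definitions of $\subseteq$ and of $\Pre_\cA$, this is equivalent to requiring, for every $q \in Q$, that $\bigcup_{a\in\Sigma,\, q'\in\delta(q,a)} a X_{q'} \subseteq Y_q$. Using the elementary fact that a union is contained in a set if{}f each of its members is, this becomes: for every $q\in Q$, every $a \in \Sigma$ and every $q' \in \delta(q,a)$, the inclusion $a X_{q'} \subseteq Y_q$ holds. Now I would apply Lemma~\ref{lemma:adjointbinary}, namely the scalar adjunction $aX \subseteq Z \Lra X \subseteq a^{-1}Z$, to each such inclusion, turning the condition into: for all $q,a,q'$ with $q' \in \delta(q,a)$, $X_{q'} \subseteq a^{-1} Y_q$.

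The key step is a reindexing of the universal quantifiers. The predicate $q' \in \delta(q,a)$ defines a set of triples $(q,a,q')$, which can equally well be enumerated by first fixing the target $q'$ rather than the source $q$. Thus the condition is equivalent to: for every $q' \in Q$ and every pair $(a,q)$ with $q' \in \delta(q,a)$, $X_{q'} \subseteq a^{-1}Y_q$. Then, by the dual elementary fact that a set is contained in an intersection if{}f it is contained in each member, I would fold these inclusions into: for every $q' \in Q$, $X_{q'} \subseteq \bigcap_{a\in\Sigma,\, q'\in\delta(q,a)} a^{-1} Y_q$, which is precisely the $q'$-th component of $\widetilde{\Pre}_\cA(\vect{Y})$. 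Recognizing this as the componentwise statement $\vect{X} \subseteq \widetilde{\Pre}_{\cA}(\vect{Y})$ completes the equivalence.

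The only point requiring care is the bookkeeping in this reindexing: $\Pre_\cA$ produces a vector indexed by source states $q$ with a union over outgoing transitions, whereas $\widetilde{\Pre}_\cA$ produces a vector indexed by target states $q'$ with an intersection over incoming transitions. I would make explicit that no information is lost, since both sides range over the same relation $\{(q,a,q') \mid q' \in \delta(q,a)\}$, and I would confirm that the convention $\bigcap \varnothing = \Sigma^*$ matches the vacuous truth of the universal quantifier when no transition targets a given $q'$, so the equivalence holds even in that degenerate case.
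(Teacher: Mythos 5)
Your proposal is correct and follows essentially the same route as the paper's proof: unfold both sides componentwise, reduce the union-inclusion to individual inclusions $aX_{q'}\subseteq Y_q$, apply the scalar adjunction of Lemma~\ref{lemma:adjointbinary}, and regroup the quantifiers by target state to recover the intersection defining $\widetilde{\Pre}_{\cA}$. Your explicit remark about the $\bigcap\varnothing=\Sigma^*$ convention matching the vacuous case is a small but welcome addition that the paper leaves implicit.
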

\begin{proof}\renewcommand{\qedsymbol}{}
	\begin{align*}
		\Pre_{\cA}(\tuple{X_q}_{q\in Q}) \subseteq \tuple{Y_q}_{q\in Q} &\Lra 
		\quad\text{~[by definition of $\Pre_{\cA}$]} \\
		\forall q\in Q, {\textstyle \bigcup_{q\ggoes{a}{q'}}} a X_{q'} \subseteq Y_q &\Lra
		\quad\text{~[by set theory]}\\
		\forall q,{q'}\in Q, q\ggoes{a} q' \Ra  a X_{q'} \subseteq Y_q &\Lra
		\quad\text{~[by Lemma~\ref{lemma:adjointbinary}]}\\
		\forall q,{q'}\in Q, q\ggoes{a} q' \Ra X_{q'} \subseteq a^{-1} Y_q &\Lra
		\quad\text{~[by set theory]}\\
		\forall {q'}\in Q, X_{q'} \subseteq {\textstyle\bigcap_{q\ggoes{a} q'}} a^{-1} Y_q&\Lra
		\quad\text{~[by definition of $\widetilde{\Pre}_{\cA}$]} \\
   \tuple{X_q}_{q\in Q} \subseteq \widetilde{\Pre}_{\cA}(\tuple{Y_q}_{q\in Q}) \tag*{$\Box$}
	\end{align*}
\end{proof}

Hence, from equivalences~\eqref{eq:lfp} and~\eqref{eqn:duality} we obtain that for all 
FAs $\cA_1$ and $L_2\in \wp(\Sigma^*)$:
\begin{equation}
\lang{\cA_1} \subseteq L_2 \:\Lra\:
\vectarg{\epsilon}{F_1} \subseteq \gfp(\lambda \vect{X}\ldotp \vectarg{L_2}{I_1} \cap \widetilde{\Pre}_{\cA_1}(\vect{X})) \enspace .  \label{eq:inclgfplfp}
\end{equation}

The following algorithm \AlgRegularGfp decides the inclusion \(\lang{\cA_1} \subseteq L_2\) when $L_2$ is regular by implementing the greatest fixpoint
computation in equivalence~\eqref{eq:inclgfplfp}. 

\begin{figure}[H]
\RemoveAlgoNumber
\begin{algorithm}[H]
\SetAlgorithmName{\AlgRegularGfp}{}
\SetSideCommentRight
\caption{Greatest fixpoint algorithm for \(\lang{\cA_1}\subseteq L_2\)}\label{alg:RegIncGfp}

\KwData{FA \(\cA_1=\tuple{Q_1,\delta_1,I_1,F_1,\Sigma}\); regular language \(L_2\).}

\(\tuple{Y_q}_{q\in Q} := \Kleene (\supseteq, \lambda \vect{X}\ldotp\vectarg{L_2}{I_1} \cap \widetilde{\Pre}_{\cA_1}(\vect{X}), \vect{{\Sigma^*}})\)\;

\ForAll{\(q \in F_1\)}{
	\lIf{\(\epsilon \notin Y_q\)}{\Return \textit{false}}
}
\Return \textit{true}\;
\end{algorithm}
\end{figure}

\noindent
The intuition behind Algorithm~\AlgRegularGfp is that
\[\lang{\cA_1} \subseteq L_2 \Lra \forall w \in \lang{\cA_1},\: (\epsilon \in w^{-1}L_2 \Lra \epsilon \in {\textstyle\bigcap_{w \in \lang{\cA_1}}} w^{-1}L_2) \enspace .\]
Therefore, \AlgRegularGfp computes the set $\textstyle\bigcap\{ w^{-1}L_2 \mid w \in \lang{\cA_1}\}$.
by using the automaton \(\cA_1\) and by 
considering prefixes of \(\lang{\cA_1}\) of increasing lengths. This means that
after \(n\) iterations of \(\Kleene\), the algorithm \AlgRegularGfp has computed 
\[
\textstyle\bigcap\{ w^{-1}L_2 \mid wu\in \lang{\cA_1},\, |w| \leq n,\, q_0 \in I_1,\, q_0 \stackrel{w}{\leadsto} q\}
\]
for every state \(q \in Q_1\).
The regularity of \(L_2\) and the property of regular languages of being closed under intersections and quotients entail that each Kleene 
iterate of $\Kleene (\supseteq, \lambda \vect{X}\ldotp\vectarg{L_2}{I_1} \cap \widetilde{\Pre}_{\cA_1}(\vect{X}), \vect{{\Sigma^*}})$
is a (computable) regular language. 
To the best of our knowledge, this gfp-based 
language inclusion  algorithm \AlgRegularGfp has never been described in the literature before.

Next, we discharge the fundamental assumption guaranteeing 
the correctness of this algorithm \AlgRegularGfp: the Kleene iterates computed by \AlgRegularGfp are finitely many.
In order to do that, we consider an abstract version of the greatest fixpoint computation exploiting 
a closure operator which ensures that the abstract Kleene iterates are finitely many. 
This closure operator $\rho_{\leq_{\cA_2}}$ will be defined by using an ordering relation $\leq_{\cA_2}$ 
induced by a FA $\cA_2$ such that 
\(L_2=\lang{\cA_2}\) and will be shown to be 
\emph{forward complete} for the function \(\lambda \vect{X}\ldotp \vectarg{L_2}{I_1} \cap \widetilde{\Pre}_{\cA_1}(\vect{X})\) 
used by \AlgRegularGfp.
\\
\indent
Forward completeness of abstract interpretations \cite{gq01}, also called
exactness \cite[Definition~2.15]{mine17},  is different 
from and orthogonal to backward completeness introduced in Section~\ref{sec:inclusion_checking_by_complete_abstractions}
and crucially used throughout Sections~\ref{sec:an_algorithmic_framework_for_language_inclusion_based_on_complete_abstractions}--\ref{sec:context_free_languages}.
In particular, a remarkable consequence 
of exploiting a forward complete abstraction is 
that the Kleene iterates of the concrete and abstract greatest fixpoint computations coincide.  
The intuition here is that this forward complete closure $\rho_{\leq_{\cA_2}}$ allows us to establish that all the Kleene iterates of \(\vect{X}\ldotp \vectarg{L_2}{I_1} \cap \widetilde{\Pre}_{\cA_1}(\vect{X})\) belong to the image of the closure $\rho_{\leq_{\cA_2}}$, more precisely that every Kleene iterate is a language which is upward closed for \(\leq_{\cA_2}\).
Interestingly, a similar phenomenon occurs in well-structured transition systems~\cite{ACJT96,Finkel2001}.
\\
\indent
Let us now describe in detail this abstraction. 
A closure \(\rho\in\uco(C)\) on a concrete domain $C$ is forward complete for a monotonic function \(f:C\ra C\) if \(\rho f \rho = f \rho\) holds. 
The intuition here is that forward completeness means that no loss of precision
is accumulated when the output of a computation of $f\rho$ is approximated by $\rho$, or, equivalently, the concrete function $f$ maps abstract elements 
of $\rho$ into abstract elements of $\rho$. 
Dually to the case of backward completeness, forward completeness implies that \(\gfp(f)=\gfp(f\rho) = \gfp(\rho f \rho)\) holds, when these greatest fixpoints exist (this is the case, e.g., when $C$ is a complete lattice).  
When the function 
\(f\colon C\ra C\) admits the right-adjoint \(\widetilde{f}\colon C\ra C\), i.e.,
$f(c) \leq c' \Lra c \leq \widetilde{f}(c')$ holds, 
it turns out that forward and backward completeness are related by the following duality~\cite[Corollary~1]{gq01}:
\begin{equation}\label{lemma:forwardbackwardtransfer}
	\text{$\rho$ is backward complete for \(f\) if{}f \(\rho\) is forward complete for \(\widetilde{f}\)}.
\end{equation}
Thus, by \eqref{lemma:forwardbackwardtransfer}, in the following result instead of 
assuming the hypotheses implying that a closure $\rho$ is forward complete for the right-adjoint $\widetilde{\Pre}_{\cA_1}$ we
state some hypotheses which guarantee that $\rho$ is backward complete for its left-adjoint, which, by Lemma~\ref{lemma:FnAdjoint}, is ${\Pre}_{\cA_1}$. 

\begin{theorem}\label{theorem:dualalgorithm}
	Let \(\cA_1 = \tuple{Q_1,\delta_1,I_1,F_1,\Sigma}\) be a FA , \(L_2\) be a regular language and
	\(\rho \in \uco(\wp(\Sigma^*))\). Let us assume that:  
\begin{compactenum}[\upshape(\rm 1\upshape)]
	\item \(\rho(L_2) = L_2\);
	\item \(\rho\) is backward complete for \(\lambda X\ldotp a X\) for all \(a\in \Sigma\).
\end{compactenum}
Then, \(\lang{\cA_1}\subseteq L_2\) if{}f \(\vectarg{\epsilon}{F_1} \subseteq \gfp(\lambda \vect{X}.\rho (\vectarg{L_2}{I_1} \cap \widetilde{\Pre}_{\cA_1}(\rho(\vect{X}))))\).
Moreover, the Kleene iterates of 
$\lambda \vect{X}\ldotp \rho(\vectarg{L_2}{I_1} \cap \widetilde{\Pre}_{\cA_1}(\rho(\vect{X})))$ and $\lambda \vect{X}\ldotp \vectarg{L_2}{I_1} \cap \widetilde{\Pre}_{\cA_1}(\vect{X})$ from the initial value $\vect{{\Sigma^*}}$ coincide in lockstep.
\end{theorem}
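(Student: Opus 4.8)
The plan is to reduce the first claim to the concrete greatest-fixpoint characterization already recorded in~\eqref{eq:inclgfplfp}, which states $\lang{\cA_1}\subseteq L_2 \Lra \vectarg{\epsilon}{F_1} \subseteq \gfp(f)$, where $f \ud \lambda\vect{X}\ldotp \vectarg{L_2}{I_1}\cap \widetilde{\Pre}_{\cA_1}(\vect{X})$. Since the abstract transformer in the statement is exactly $\rho f\rho$, it suffices to prove that $\rho$ is \emph{forward complete} for $f$: forward completeness yields $\gfp(f)=\gfp(\rho f\rho)$, whence the equivalence~\eqref{eq:inclgfplfp} transfers verbatim to the abstract fixpoint and gives the first claim. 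Note that $f$ is monotonic on the complete lattice $\wp(\Sigma^*)^{|Q_1|}$, so both greatest fixpoints exist by Knaster--Tarski.

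The core step, which I expect to be the main obstacle, is establishing $\rho f\rho = f\rho$. First I would obtain forward completeness of $\rho$ for $\widetilde{\Pre}_{\cA_1}$ in isolation: assumption~(2) together with Theorem~\ref{theorem:backComplete} makes $\rho$ backward complete for $\Pre_{\cA_1}$, and since $\widetilde{\Pre}_{\cA_1}$ is the right adjoint of $\Pre_{\cA_1}$ by Lemma~\ref{lemma:FnAdjoint}, the duality~\eqref{lemma:forwardbackwardtransfer} gives forward completeness of $\rho$ for $\widetilde{\Pre}_{\cA_1}$, i.e.\ $\widetilde{\Pre}_{\cA_1}(\rho(\vect{X}))$ lies in the image of $\rho$. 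Next I would observe that the constant vector $\vectarg{L_2}{I_1}$ is $\rho$-closed: its $I_1$-components equal $L_2=\rho(L_2)$ by assumption~(1), and its remaining components equal $\Sigma^*=\rho(\Sigma^*)$, the top element. The decisive fact is then that the image of $\rho$ is closed under meets, by the right-hand identity of~\eqref{equation:lubAndGlb}; hence the intersection $\vectarg{L_2}{I_1}\cap \widetilde{\Pre}_{\cA_1}(\rho(\vect{X})) = f\rho(\vect{X})$ is $\rho$-closed, so $\rho f\rho = f\rho$, establishing forward completeness and with it the first claim.

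For the lockstep claim I would argue by induction on $n$ that the $n$-th Kleene iterates of $f$ and of $\rho f\rho$ from $\vect{{\Sigma^*}}$ coincide and are $\rho$-closed. The base case holds since $\vect{{\Sigma^*}}=\rho(\vect{{\Sigma^*}})$. For the inductive step, writing $\vect{X}_n$ for the common $\rho$-closed iterate, forward completeness and $\rho(\vect{X}_n)=\vect{X}_n$ give $(\rho f\rho)(\vect{X}_n) = f\rho(\vect{X}_n) = f(\vect{X}_n)$; this is exactly the next iterate of $f$, and it is $\rho$-closed because it lies in the image of $\rho$, which closes the induction.

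The genuinely non-routine part is the assembly in the second paragraph: recognizing that the constant intersection with $\vectarg{L_2}{I_1}$ inside $f$ does not destroy forward completeness, which rests on the Moore-family (meet-closure) property of $\rho(\wp(\Sigma^*))$ rather than on any additional completeness hypothesis. Everything else is a mechanical combination of the adjunction $\Pre_{\cA_1}\dashv\widetilde{\Pre}_{\cA_1}$, the backward/forward completeness duality, and the standard consequence of forward completeness for greatest fixpoints.
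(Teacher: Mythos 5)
Your proposal is correct and follows essentially the same route as the paper's proof: backward completeness for $\Pre_{\cA_1}$ via Theorem~\ref{theorem:backComplete}, the adjunction of Lemma~\ref{lemma:FnAdjoint} combined with the duality~\eqref{lemma:forwardbackwardtransfer} to get forward completeness for $\widetilde{\Pre}_{\cA_1}$, and then forward completeness of the full transformer from $\rho$-closedness of $\vectarg{L_2}{I_1}$ together with the meet-closure of $\rho$'s image from~\eqref{equation:lubAndGlb}, transferring~\eqref{eq:inclgfplfp} to the abstract greatest fixpoint. Your explicit induction for the lockstep claim is just a slightly more detailed rendering of the paper's one-line observation that every iterate stays in the image of $\rho$.
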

\begin{proof}
	Theorem~\ref{theorem:backComplete} shows that if \(\rho\) is backward complete for \(\lambda X\ldotp a X\) for every \(a\in\Sigma\) then it is backward complete for \(\Pre_{{\cA_1}}\). 
	Thus, by \eqref{lemma:forwardbackwardtransfer}, \(\rho\) is forward complete for \(\widetilde{\Pre}_{\cA_1}\).
	Then, it turns out that \(\rho\) is forward complete for \(\lambda \vect{X}\ldotp \vectarg{L_2}{I_1} \cap \widetilde{\Pre}_{\cA_1} (\vect{X})\), because: 
	\begin{align*}
	\rho (\vectarg{L_2}{I_1} \cap  \widetilde{\Pre}_{\cA_1} (\rho(\vect{X}))) & = 
	\quad\text{~[by forward completeness for $\widetilde{\Pre}_{\cA_1}$ and \(\rho(L_2)=L_2\)]}\\
	\rho (\rho( \vectarg{L_2}{I_1}) \cap  \rho(\widetilde{\Pre}_{\cA_1} (\rho(\vect{X})))) & = 
	\quad\text{~[by \eqref{equation:lubAndGlb}]}\\
	\rho (\vectarg{L_2}{I_1}) \cap  \rho (\widetilde{\Pre}_{\cA_1} (\rho(\vect{X}))) & = 
	\quad\text{~[by forward completeness for $\widetilde{\Pre}_{\cA_1}$ and \(\rho (L_2)=L_2\)]}\\
	\vectarg{L_2}{I_1} \cap  \widetilde{\Pre}_{\cA_1} (\rho (\vect{X}))&\enspace .
	\end{align*}
Since, by forward completeness,	
$\gfp(\lambda \vect{X}\ldotp \vectarg{L_2}{I_1} \cap \widetilde{\Pre}_{\cA_1}(\vect{X})) = \gfp(\lambda\vect{X}\ldotp \rho(\vectarg{L_2}{I_1} \cap \widetilde{\Pre}_{\cA_1}(\rho(\vect{X}))))$, by equivalence \eqref{eq:inclgfplfp}, we conclude 
that \(\lang{\cA_1}\subseteq L_2\) if{}f \(\vectarg{\epsilon}{F_1} \subseteq \gfp(\lambda\vect{X}\ldotp \rho(\vectarg{L_2}{I_1} \cap \widetilde{\Pre}_{\cA_1}(\rho(\vect{X}))))\).

\noindent
Finally, we observe that the Kleene iterates of  \(\lambda \vect{X}\ldotp \vectarg{L_2}{I_1} \cap  \widetilde{\Pre}_{\cA_1} (\vect{X})\) and $\lambda \vect{X}\ldotp \rho(\vectarg{L_2}{I_1} \cap \widetilde{\Pre}_{\cA_1}(\rho(\vect{X})))$ starting from $\vect{{\Sigma^*}}$ coincide in lockstep since 
$\rho (\vectarg{L_2}{I_1} \cap  \widetilde{\Pre}_{\cA_1} (\rho(\vect{X}))) =
\vectarg{L_2}{I_1} \cap  \widetilde{\Pre}_{\cA_1} (\rho (\vect{X}))$ and 
\(\rho(\vectarg{L_2}{I_1})=\vectarg{L_2}{I_1}\). %
\end{proof}

We can now establish that the Kleene iterates of
$\Kleene (\supseteq, \lambda \vect{X}\ldotp \vectarg{L_2}{I_1} \cap \widetilde{\Pre}_{\cA_1}(\vect{X}) , \vect{{\Sigma^*}})$ are finitely many. 
Let \(L_2=\lang{\cA_2}\), for some FA \(\cA_2\), and consider the corresponding left
state-based quasiorder \(\mathord{\leq_{\cA_2}^{l}}\) on $\Sigma^*$ as defined by~\eqref{eqn:state-qo}.
By Lemma~\ref{lemma:LAconsistent}, \(\mathord{\leq_{\cA_2}^{l}}\) is a left \(L_2\)-consistent wqo.  
Furthermore, since \(Q_2\) is finite we have that both \(\mathord{\leq_{\cA_2}^{l}}\) and \((\mathord{\leq_{\cA_2}^{l}})^{-1}\) are wqos, so that, in turn, \( \tuple{\rho_{\leq_{\cA_2}^{l}}(\wp(\Sigma^*)),\subseteq}\) is a poset which is both ACC and DCC.  
In particular, the definition of \(\mathord{\leq_{\cA_2}^{l}}\) implies that every chain in \( \tuple{\rho_{\leq_{\cA_2}^{l}}(\wp(\Sigma^*)),\subseteq}\) has at most \(2^{|Q_2|}\) elements, so that
if we compute \(2^{|Q_2|}\) Kleene iterates then we surely converge to the greatest fixpoint.
Moreover, as a consequence of the DCC we have that
$\Kleene (\supseteq, \lambda \vect{X}\ldotp\rho_{\leq_{\cA_2}}(\vectarg{L_2}{I_1} \cap \widetilde{\Pre}_{\cA_1}(\rho_{\leq_{\cA_2}}(\vect{X}))), \vect{{\Sigma^*}})$
always terminates, 
thus implying that
$\Kleene (\supseteq, \lambda \vect{X}\ldotp \vectarg{L_2}{I_1} \cap \widetilde{\Pre}_{\cA_1}(\vect{X}) , \vect{{\Sigma^*}})$
terminates as well, because their Kleene iterates go in lockstep as stated by Theorem~\ref{theorem:dualalgorithm}. We have therefore shown the correctness of \AlgRegularGfp.

\begin{corollary}\label{coroGFP}
The algorithm \AlgRegularGfp decides the inclusion \(\lang{\cA_1} \subseteq L_2\)
\end{corollary}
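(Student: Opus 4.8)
The plan is to read off the correctness of \AlgRegularGfp directly from the fixpoint equivalence~\eqref{eq:inclgfplfp} together with the termination and lockstep statements of Theorem~\ref{theorem:dualalgorithm}, so that the proof is essentially an assembly of results already in place. First I would observe that line~1 of \AlgRegularGfp runs \(\Kleene\) on the monotonic map \(\lambda \vect{X}\ldotp\vectarg{L_2}{I_1} \cap \widetilde{\Pre}_{\cA_1}(\vect{X})\) with convergence relation \(\supseteq\), starting from the top vector \(\vect{{\Sigma^*}}\); assuming for the moment that these descending Kleene iterates are finitely many, the procedure computes exactly the vector \(\vect{Y} = \gfp(\lambda \vect{X}\ldotp\vectarg{L_2}{I_1} \cap \widetilde{\Pre}_{\cA_1}(\vect{X}))\).

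Next I would verify that the loop at lines~2--3 decides the inclusion \(\vectarg{\epsilon}{F_1} \subseteq \vect{Y}\). Since \(\vectarg{\epsilon}{F_1}\) has component \(\{\epsilon\}\) at each \(q\in F_1\) and \(\varnothing\) elsewhere, componentwise inclusion holds if{}f \(\epsilon \in Y_q\) for every \(q\in F_1\), which is precisely the test performed. By~\eqref{eq:inclgfplfp} this equals the truth value of \(\lang{\cA_1}\subseteq L_2\), so the returned answer is correct.

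The crux is termination, i.e.\ that only finitely many concrete iterates are needed, and this is where Theorem~\ref{theorem:dualalgorithm} does the work. I would fix a FA \(\cA_2\) with \(\lang{\cA_2}=L_2\) and instantiate \(\rho\ud\rho_{\leq_{\cA_2}^{l}}\). By Lemma~\ref{lemma:LAconsistent} the left state-based relation \(\mathord{\leq_{\cA_2}^{l}}\) is a left \(L_2\)-consistent wqo, whence by Lemma~\ref{lemma:properties} we obtain \(\rho(L_2)=L_2\) and backward completeness of \(\rho\) for \(\lambda X\ldotp aX\); thus the hypotheses of Theorem~\ref{theorem:dualalgorithm} are met and the abstract and concrete descending Kleene iterates from \(\vect{{\Sigma^*}}\) coincide in lockstep. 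Because \(Q_2\) is finite, both \(\mathord{\leq_{\cA_2}^{l}}\) and its inverse are wqos, so \(\tuple{\rho(\wp(\Sigma^*)),\subseteq}\) is DCC with chains of length at most \(2^{|Q_2|}\); hence the abstract iterates stabilize in at most \(2^{|Q_2|}\) steps, and by lockstep so do the concrete ones computed by \AlgRegularGfp, which discharges the assumption made above.

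Finally I would record effectiveness: each concrete iterate is a vector of regular languages, since regular languages are closed under intersection and left quotient \(a^{-1}(\cdot)\), so \(\widetilde{\Pre}_{\cA_1}\) and \(\cap\) are computable and both the convergence test for \(\supseteq\) and the final membership checks \(\epsilon\inqm Y_q\) are decidable. The main obstacle is exactly this termination step: the concrete lattice \(\tuple{\wp(\Sigma^*),\subseteq}\) is not DCC, so finiteness of the descending chain cannot be argued directly; it is only the forward completeness of \(\rho_{\leq_{\cA_2}^{l}}\) for \(\widetilde{\Pre}_{\cA_1}\) (equivalently, backward completeness for its adjoint \(\Pre_{\cA_1}\), supplied by Theorem~\ref{theorem:backComplete}) that confines the iterates to the DCC image of \(\rho\) and transfers the finite bound back to the concrete computation.
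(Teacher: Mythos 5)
Your proposal is correct and follows essentially the same route as the paper: it establishes partial correctness via the adjunction-based equivalence~\eqref{eq:inclgfplfp}, and discharges termination by instantiating Theorem~\ref{theorem:dualalgorithm} with \(\rho_{\leq_{\cA_2}^{l}}\) for a FA \(\cA_2\) recognizing \(L_2\), using the finiteness of \(Q_2\) to get the DCC (chains of length at most \(2^{|Q_2|}\)) on the image of the closure and transferring this bound to the concrete iterates via the lockstep property. The remarks on the final \(\epsilon\)-membership test and on computability of the iterates through closure of regular languages under intersection and quotient likewise match the paper's discussion.
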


\begin{example}
Let us illustrate the  greatest fixpoint algorithm \AlgRegularGfp on the inclusion check 
$L(\cB)\subseteq L(\cA)$ where $\cA$ is the FA in Fig.~\ref{fig:A} and 
$\cB$ is the following FA:
\begin{center}
	\begin{tikzpicture}[->,>=stealth',shorten >=1pt,auto,node distance=1.5cm,thick,initial text=]
	\tikzstyle{every state}=[scale=0.75,fill=blue!20,draw=blue!60,text=black]
	
	\node[state,initial,accepting] (1) {$q_3$};
	\node[state] (2) [right=of 1] {$q_4$};
	
	\path (1) edge[bend left] node {$b$} (2)
	      (2) edge[bend left] node {$a$} (1)
	      (2) edge[loop above] node {$a$} (2)
	          ;
	\end{tikzpicture}
	 \end{center}
By Corollary~\ref{coroGFP}, the Kleene iterates of $\lambda \vect{Y}\ldotp \vectarg{L(\cA)}{\{q_3\}} \cap \widetilde{\Pre}_{\cB}(\vect{Y})\) 
are guaranteed to converge in finitely many steps. We have that 
\[
\vectarg{L(\cA)}{\{q_3\}} \cap \widetilde{\Pre}_{\cB}(\tuple{Y_3,Y_4})
=\tuple{L(\cA) \cap a^{-1}Y_4,\: b^{-1}Y_3 \cap a^{-1}Y_4}\enspace .
\]
Then, the Kleene iterates are as follows (we automatically 
checked them by 
the FAdo tool \cite{Almeida2009}):
\begin{align*}
	Y_3^{(0)}	&= \Sigma^* & Y_4^{(0)} &= \Sigma^*\\
	Y_3^{(1)}	&= L(\cA)\cap a^{-1}\Sigma^* = L(\cA) & Y_4^{(1)} &= b^{-1}\Sigma^* \cap a^{-1}\Sigma^* = \Sigma^*\\
	Y_3^{(2)}	&= L(\cA) \cap a^{-1}\Sigma^* = L(\cA)  & Y_4^{(2)} &= b^{-1} L(\cA) \cap a^{-1} \Sigma^* = b^{-1} L(\cA) = (b^* a)^+\\
	Y_3^{(3)}	&= L(\cA) \cap a^{-1}(b^* a)^+ = L(\cA)  & Y_4^{(3)} &= b^{-1} L(\cA) \cap a^{-1} (b^* a)^+ = (b^* a)^+
\end{align*}
Thus, $\Kleene$ outputs  the vector 
\(\tuple{Y_3,Y_4}=
\tuple{L(\cA),(b^* a)^+}\). 
Since \(\epsilon\in L(\cA)\), \AlgRegularGfp concludes that \( L(\cB)\subseteq L(\cA)\) holds.
\eox
\end{example}

Finally, it is worth citing that Fiedor et al.~\shortcite{fiedor2019nested} put forward an algorithm for deciding WS1S formulae which relies on the same lfp computation used in \AlgRegularA.
Then, they derive a dual gfp computation by relying on Park's duality~\cite{park1969fixpoint}: \(\lfp (\lambda X \ldotp f(X)) = (\gfp (\lambda X \ldotp (f(X^c))^c))^c\).
Their approach differs from ours since we use the equivalence~\eqref{eqn:duality} to compute a gfp, different from the lfp, which still allows us to decide the inclusion problem.
Furthermore, their algorithm decides whether a given automaton accepts \(\epsilon\) and it is not clear how their algorithm could be extended for deciding language inclusion.

\section{Future Work}%
\label{sec:conclusions}

We believe that this work only scratched the surface of the use of well-quasiorders on words for solving language inclusion problems.
In particular, our approach based on complete abstract interpretations allowed us to systematically derive well-known algorithms 
, such as the antichain algorithms by De Wulf et al.~\shortcite{DBLP:conf/cav/WulfDHR06}, as well as novel algorithms, such as \AlgRegularGfp, for deciding the inclusion of regular languages.

Future directions include leveraging well-quasiorders for infinite words \cite{ogawa_well-quasi-orders_2004} to shed new light on the inclusion problem between \(\omega\)-languages.  
Our results could also be extended to inclusion of tree languages by relying on the extensions of Myhill-Nerode theorems for tree languages \cite{Kozen92onthe}.
Another interesting topic for future work is the enhancement of quasiorders using simulation relations.
Even though we already showed in this paper that simulations can be used to refine our language inclusion algorithms, we are not on par with the thoughtful use of simulation relations made by Abdulla et al.~\shortcite{Abdulla2010} and Bonchi and Pous~\shortcite{DBLP:conf/popl/BonchiP13}.   
Finally, let us mention that the correspondence between least and greatest fixpoint-based inclusion checks assuming complete abstractions was studied by Bonchi et al.~\shortcite{Bonchi2018} with the aim of formally connecting sound up-to techniques and complete abstract interpretations.
Further possible developments include the study of our abstract interpretation-based algorithms for language inclusion from the viewpoint of sound up-to techniques.  

%

%
%
%
%
\citestyle{acmauthoryear}

\end{document}